\begin{document}  
\newcommand {\emptycomment}[1]{} 

\baselineskip=14pt
\newcommand{\nc}{\newcommand}
\newcommand{\delete}[1]{}
\nc{\mfootnote}[1]{\footnote{#1}} 
\nc{\todo}[1]{\tred{To do:} #1}

\nc{\mlabel}[1]{\label{#1}}  
\nc{\mcite}[1]{\cite{#1}}  
\nc{\mref}[1]{\ref{#1}}  
\nc{\mbibitem}[1]{\bibitem{#1}} 

\delete{
\nc{\mlabel}[1]{\label{#1}  
{\hfill \hspace{1cm}{\bf{{\ }\hfill(#1)}}}}
\nc{\mcite}[1]{\cite{#1}{{\bf{{\ }(#1)}}}}  
\nc{\mref}[1]{\ref{#1}{{\bf{{\ }(#1)}}}}  
\nc{\mbibitem}[1]{\bibitem[\bf #1]{#1}} 
}

\newtheorem{thm}{Theorem}[section]
\newtheorem{lem}[thm]{Lemma}
\newtheorem{cor}[thm]{Corollary}
\newtheorem{pro}[thm]{Proposition}
\newtheorem{ex}[thm]{Example}
\newtheorem{rmk}[thm]{Remark}
\newtheorem{defi}[thm]{Definition}
\newtheorem{pdef}[thm]{Proposition-Definition}
\newtheorem{condition}[thm]{Condition}

\renewcommand{\labelenumi}{{\rm(\alph{enumi})}}
\renewcommand{\theenumi}{\alph{enumi}}

\nc{\tred}[1]{\textcolor{red}{#1}}
\nc{\tblue}[1]{\textcolor{blue}{#1}}
\nc{\tgreen}[1]{\textcolor{green}{#1}}
\nc{\tpurple}[1]{\textcolor{purple}{#1}}
\nc{\btred}[1]{\textcolor{red}{\bf #1}}
\nc{\btblue}[1]{\textcolor{blue}{\bf #1}}
\nc{\btgreen}[1]{\textcolor{green}{\bf #1}}
\nc{\btpurple}[1]{\textcolor{purple}{\bf #1}}

\nc{\cm}[1]{\textcolor{red}{Chengming:#1}}
\nc{\li}[1]{\textcolor{blue}{Li: #1}}
\nc{\lit}[2]{\textcolor{blue}{#1}{ \textcolor{purple}{(\sout{#2})}}}
\nc{\yh}[1]{\textcolor{green}{Yunhe: #1}}


\nc{\twovec}[2]{\left(\begin{array}{c} #1 \\ #2\end{array} \right )}
\nc{\threevec}[3]{\left(\begin{array}{c} #1 \\ #2 \\ #3 \end{array}\right )}
\nc{\twomatrix}[4]{\left(\begin{array}{cc} #1 & #2\\ #3 & #4 \end{array} \right)}
\nc{\threematrix}[9]{{\left(\begin{matrix} #1 & #2 & #3\\ #4 & #5 & #6 \\ #7 & #8 & #9 \end{matrix} \right)}}
\nc{\twodet}[4]{\left|\begin{array}{cc} #1 & #2\\ #3 & #4 \end{array} \right|}

\nc{\rk}{\mathrm{r}}
\newcommand{\g}{\mathfrak g}
\newcommand{\h}{\mathfrak h}
\newcommand{\pf}{\noindent{$Proof$.}\ }
\newcommand{\frkg}{\mathfrak g}
\newcommand{\frkh}{\mathfrak h}
\newcommand{\Id}{\rm{Id}}
\newcommand{\gl}{\mathfrak {gl}}
\newcommand{\ad}{\mathrm{ad}}
\newcommand{\add}{\frka\frkd}
\newcommand{\frka}{\mathfrak a}
\newcommand{\frkb}{\mathfrak b}
\newcommand{\frkc}{\mathfrak c}
\newcommand{\frkd}{\mathfrak d}
\newcommand {\comment}[1]{{\marginpar{*}\scriptsize\textbf{Comments:} #1}}

\nc{\gensp}{V} 
\nc{\relsp}{\Lambda} 
\nc{\leafsp}{X}    
\nc{\treesp}{\overline{\calt}} 

\nc{\vin}{{\mathrm Vin}}    
\nc{\lin}{{\mathrm Lin}}    

\nc{\gop}{{\,\omega\,}}     
\nc{\gopb}{{\,\nu\,}}
\nc{\svec}[2]{{\tiny\left(\begin{matrix}#1\\
#2\end{matrix}\right)\,}}  
\nc{\ssvec}[2]{{\tiny\left(\begin{matrix}#1\\
#2\end{matrix}\right)\,}} 

\nc{\typeI}{local cocycle $3$-Lie bialgebra\xspace}
\nc{\typeIs}{local cocycle $3$-Lie bialgebras\xspace}
\nc{\typeII}{double construction $3$-Lie bialgebra\xspace}
\nc{\typeIIs}{double construction $3$-Lie bialgebras\xspace}

\nc{\bia}{{$\mathcal{P}$-bimodule ${\bf k}$-algebra}\xspace}
\nc{\bias}{{$\mathcal{P}$-bimodule ${\bf k}$-algebras}\xspace}

\nc{\rmi}{{\mathrm{I}}}
\nc{\rmii}{{\mathrm{II}}}
\nc{\rmiii}{{\mathrm{III}}}
\nc{\pr}{{\mathrm{pr}}}
\newcommand{\huaA}{\mathcal{A}}

\nc{\tcybe}{3-Lie CYBE}

\nc{\pll}{\beta}
\nc{\plc}{\epsilon}

\nc{\ass}{{\mathit{Ass}}}
\nc{\lie}{{\mathit{Lie}}}
\nc{\comm}{{\mathit{Comm}}}
\nc{\dend}{{\mathit{Dend}}}
\nc{\zinb}{{\mathit{Zinb}}}
\nc{\tdend}{{\mathit{TDend}}}
\nc{\prelie}{{\mathit{preLie}}}
\nc{\postlie}{{\mathit{PostLie}}}
\nc{\quado}{{\mathit{Quad}}}
\nc{\octo}{{\mathit{Octo}}}
\nc{\ldend}{{\mathit{ldend}}}
\nc{\lquad}{{\mathit{LQuad}}}

 \nc{\adec}{\check{;}} \nc{\aop}{\alpha}
\nc{\dftimes}{\widetilde{\otimes}} \nc{\dfl}{\succ} \nc{\dfr}{\prec}
\nc{\dfc}{\circ} \nc{\dfb}{\bullet} \nc{\dft}{\star}
\nc{\dfcf}{{\mathbf k}} \nc{\apr}{\ast} \nc{\spr}{\cdot}
\nc{\twopr}{\circ} \nc{\tspr}{\star} \nc{\sempr}{\ast}
\nc{\disp}[1]{\displaystyle{#1}}
\nc{\bin}[2]{ (_{\stackrel{\scs{#1}}{\scs{#2}}})}  
\nc{\binc}[2]{ \left (\!\! \begin{array}{c} \scs{#1}\\
    \scs{#2} \end{array}\!\! \right )}  
\nc{\bincc}[2]{  \left ( {\scs{#1} \atop
    \vspace{-.5cm}\scs{#2}} \right )}  
\nc{\sarray}[2]{\begin{array}{c}#1 \vspace{.1cm}\\ \hline
    \vspace{-.35cm} \\ #2 \end{array}}
\nc{\bs}{\bar{S}} \nc{\dcup}{\stackrel{\bullet}{\cup}}
\nc{\dbigcup}{\stackrel{\bullet}{\bigcup}} \nc{\etree}{\big |}
\nc{\la}{\longrightarrow} \nc{\fe}{\'{e}} \nc{\rar}{\rightarrow}
\nc{\dar}{\downarrow} \nc{\dap}[1]{\downarrow
\rlap{$\scriptstyle{#1}$}} \nc{\uap}[1]{\uparrow
\rlap{$\scriptstyle{#1}$}} \nc{\defeq}{\stackrel{\rm def}{=}}
\nc{\dis}[1]{\displaystyle{#1}} \nc{\dotcup}{\,
\displaystyle{\bigcup^\bullet}\ } \nc{\sdotcup}{\tiny{
\displaystyle{\bigcup^\bullet}\ }} \nc{\hcm}{\ \hat{,}\ }
\nc{\hcirc}{\hat{\circ}} \nc{\hts}{\hat{\shpr}}
\nc{\lts}{\stackrel{\leftarrow}{\shpr}}
\nc{\rts}{\stackrel{\rightarrow}{\shpr}} \nc{\lleft}{[}
\nc{\lright}{]} \nc{\uni}[1]{\tilde{#1}} \nc{\wor}[1]{\check{#1}}
\nc{\free}[1]{\bar{#1}} \nc{\den}[1]{\check{#1}} \nc{\lrpa}{\wr}
\nc{\curlyl}{\left \{ \begin{array}{c} {} \\ {} \end{array}
    \right .  \!\!\!\!\!\!\!}
\nc{\curlyr}{ \!\!\!\!\!\!\!
    \left . \begin{array}{c} {} \\ {} \end{array}
    \right \} }
\nc{\leaf}{\ell}       
\nc{\longmid}{\left | \begin{array}{c} {} \\ {} \end{array}
    \right . \!\!\!\!\!\!\!}
\nc{\ot}{\otimes} \nc{\sot}{{\scriptstyle{\ot}}}
\nc{\otm}{\overline{\ot}}
\nc{\ora}[1]{\stackrel{#1}{\rar}}
\nc{\ola}[1]{\stackrel{#1}{\la}}
\nc{\pltree}{\calt^\pl}
\nc{\epltree}{\calt^{\pl,\NC}}
\nc{\rbpltree}{\calt^r}
\nc{\scs}[1]{\scriptstyle{#1}} \nc{\mrm}[1]{{\rm #1}}
\nc{\dirlim}{\displaystyle{\lim_{\longrightarrow}}\,}
\nc{\invlim}{\displaystyle{\lim_{\longleftarrow}}\,}
\nc{\mvp}{\vspace{0.5cm}} \nc{\svp}{\vspace{2cm}}
\nc{\vp}{\vspace{8cm}} \nc{\proofbegin}{\noindent{\bf Proof: }}
\nc{\proofend}{$\blacksquare$ \vspace{0.5cm}}
\nc{\freerbpl}{{F^{\mathrm RBPL}}}
\nc{\sha}{{\mbox{\cyr X}}}  
\nc{\ncsha}{{\mbox{\cyr X}^{\mathrm NC}}} \nc{\ncshao}{{\mbox{\cyr
X}^{\mathrm NC,\,0}}}
\nc{\shpr}{\diamond}    
\nc{\shprm}{\overline{\diamond}}    
\nc{\shpro}{\diamond^0}    
\nc{\shprr}{\diamond^r}     
\nc{\shpra}{\overline{\diamond}^r}
\nc{\shpru}{\check{\diamond}} \nc{\catpr}{\diamond_l}
\nc{\rcatpr}{\diamond_r} \nc{\lapr}{\diamond_a}
\nc{\sqcupm}{\ot}
\nc{\lepr}{\diamond_e} \nc{\vep}{\varepsilon} \nc{\labs}{\mid\!}
\nc{\rabs}{\!\mid} \nc{\hsha}{\widehat{\sha}}
\nc{\lsha}{\stackrel{\leftarrow}{\sha}}
\nc{\rsha}{\stackrel{\rightarrow}{\sha}} \nc{\lc}{\lfloor}
\nc{\rc}{\rfloor}
\nc{\tpr}{\sqcup}
\nc{\nctpr}{\vee}
\nc{\plpr}{\star}
\nc{\rbplpr}{\bar{\plpr}}
\nc{\sqmon}[1]{\langle #1\rangle}
\nc{\forest}{\calf}
\nc{\altx}{\Lambda_X} \nc{\vecT}{\vec{T}} \nc{\onetree}{\bullet}
\nc{\Ao}{\check{A}}
\nc{\seta}{\underline{\Ao}}
\nc{\deltaa}{\overline{\delta}}
\nc{\trho}{\tilde{\rho}}

\nc{\rpr}{\circ}
\nc{\dpr}{{\tiny\diamond}}
\nc{\rprpm}{{\rpr}}

\nc{\mmbox}[1]{\mbox{\ #1\ }} \nc{\ann}{\mrm{ann}}
\nc{\Aut}{\mrm{Aut}} \nc{\can}{\mrm{can}}
\nc{\twoalg}{{two-sided algebra}\xspace}
\nc{\colim}{\mrm{colim}}
\nc{\Cont}{\mrm{Cont}} \nc{\rchar}{\mrm{char}}
\nc{\cok}{\mrm{coker}} \nc{\dtf}{{R-{\rm tf}}} \nc{\dtor}{{R-{\rm
tor}}}
\renewcommand{\det}{\mrm{det}}
\nc{\depth}{{\mrm d}}
\nc{\Div}{{\mrm Div}} \nc{\End}{\mrm{End}} \nc{\Ext}{\mrm{Ext}}
\nc{\Fil}{\mrm{Fil}} \nc{\Frob}{\mrm{Frob}} \nc{\Gal}{\mrm{Gal}}
\nc{\GL}{\mrm{GL}} \nc{\Hom}{\mrm{Hom}} \nc{\hsr}{\mrm{H}}
\nc{\hpol}{\mrm{HP}} \nc{\id}{\mrm{id}} \nc{\im}{\mrm{im}}
\nc{\incl}{\mrm{incl}} \nc{\length}{\mrm{length}}
\nc{\LR}{\mrm{LR}} \nc{\mchar}{\rm char} \nc{\NC}{\mrm{NC}}
\nc{\mpart}{\mrm{part}} \nc{\pl}{\mrm{PL}}
\nc{\ql}{{\QQ_\ell}} \nc{\qp}{{\QQ_p}}
\nc{\rank}{\mrm{rank}} \nc{\rba}{\rm{RBA }} \nc{\rbas}{\rm{RBAs }}
\nc{\rbpl}{\mrm{RBPL}}
\nc{\rbw}{\rm{RBW }} \nc{\rbws}{\rm{RBWs }} \nc{\rcot}{\mrm{cot}}
\nc{\rest}{\rm{controlled}\xspace}
\nc{\rdef}{\mrm{def}} \nc{\rdiv}{{\rm div}} \nc{\rtf}{{\rm tf}}
\nc{\rtor}{{\rm tor}} \nc{\res}{\mrm{res}} \nc{\SL}{\mrm{SL}}
\nc{\Spec}{\mrm{Spec}} \nc{\tor}{\mrm{tor}} \nc{\Tr}{\mrm{Tr}}
\nc{\mtr}{\mrm{sk}}

\nc{\ab}{\mathbf{Ab}} \nc{\Alg}{\mathbf{Alg}}
\nc{\Algo}{\mathbf{Alg}^0} \nc{\Bax}{\mathbf{Bax}}
\nc{\Baxo}{\mathbf{Bax}^0} \nc{\RB}{\mathbf{RB}}
\nc{\RBo}{\mathbf{RB}^0} \nc{\BRB}{\mathbf{RB}}
\nc{\Dend}{\mathbf{DD}} \nc{\bfk}{{\bf k}} \nc{\bfone}{{\bf 1}}
\nc{\base}[1]{{a_{#1}}} \nc{\detail}{\marginpar{\bf More detail}
    \noindent{\bf Need more detail!}
    \svp}
\nc{\Diff}{\mathbf{Diff}} \nc{\gap}{\marginpar{\bf
Incomplete}\noindent{\bf Incomplete!!}
    \svp}
\nc{\FMod}{\mathbf{FMod}} \nc{\mset}{\mathbf{MSet}}
\nc{\rb}{\mathrm{RB}} \nc{\Int}{\mathbf{Int}}
\nc{\Mon}{\mathbf{Mon}}
\nc{\remarks}{\noindent{\bf Remarks: }}
\nc{\OS}{\mathbf{OS}} 
\nc{\Rep}{\mathbf{Rep}}
\nc{\Rings}{\mathbf{Rings}} \nc{\Sets}{\mathbf{Sets}}
\nc{\DT}{\mathbf{DT}}

\nc{\BA}{{\mathbb A}} \nc{\CC}{{\mathbb C}} \nc{\DD}{{\mathbb D}}
\nc{\EE}{{\mathbb E}} \nc{\FF}{{\mathbb F}} \nc{\GG}{{\mathbb G}}
\nc{\HH}{{\mathbb H}} \nc{\LL}{{\mathbb L}} \nc{\NN}{{\mathbb N}}
\nc{\QQ}{{\mathbb Q}} \nc{\RR}{{\mathbb R}} \nc{\BS}{{\mathbb{S}}} \nc{\TT}{{\mathbb T}}
\nc{\VV}{{\mathbb V}} \nc{\ZZ}{{\mathbb Z}}


\nc{\calao}{{\mathcal A}} \nc{\cala}{{\mathcal A}}
\nc{\calc}{{\mathcal C}} \nc{\cald}{{\mathcal D}}
\nc{\cale}{{\mathcal E}} \nc{\calf}{{\mathcal F}}
\nc{\calfr}{{{\mathcal F}^{\,r}}} \nc{\calfo}{{\mathcal F}^0}
\nc{\calfro}{{\mathcal F}^{\,r,0}} \nc{\oF}{\overline{F}}
\nc{\calg}{{\mathcal G}} \nc{\calh}{{\mathcal H}}
\nc{\cali}{{\mathcal I}} \nc{\calj}{{\mathcal J}}
\nc{\call}{{\mathcal L}} \nc{\calm}{{\mathcal M}}
\nc{\caln}{{\mathcal N}} \nc{\calo}{{\mathcal O}}
\nc{\calp}{{\mathcal P}} \nc{\calq}{{\mathcal Q}} \nc{\calr}{{\mathcal R}}
\nc{\calt}{{\mathcal T}} \nc{\caltr}{{\mathcal T}^{\,r}}
\nc{\calu}{{\mathcal U}} \nc{\calv}{{\mathcal V}}
\nc{\calw}{{\mathcal W}} \nc{\calx}{{\mathcal X}}
\nc{\CA}{\mathcal{A}}

\nc{\fraka}{{\mathfrak a}} \nc{\frakB}{{\mathfrak B}}
\nc{\frakb}{{\mathfrak b}} \nc{\frakd}{{\mathfrak d}}
\nc{\oD}{\overline{D}}
\nc{\frakF}{{\mathfrak F}} \nc{\frakg}{{\mathfrak g}}
\nc{\frakm}{{\mathfrak m}} \nc{\frakM}{{\mathfrak M}}
\nc{\frakMo}{{\mathfrak M}^0} \nc{\frakp}{{\mathfrak p}}
\nc{\frakS}{{\mathfrak S}} \nc{\frakSo}{{\mathfrak S}^0}
\nc{\fraks}{{\mathfrak s}} \nc{\os}{\overline{\fraks}}
\nc{\frakT}{{\mathfrak T}}
\nc{\oT}{\overline{T}}
\nc{\frakX}{{\mathfrak X}} \nc{\frakXo}{{\mathfrak X}^0}
\nc{\frakx}{{\mathbf x}}
\nc{\frakTx}{\frakT}      
\nc{\frakTa}{\frakT^a}        
\nc{\frakTxo}{\frakTx^0}   
\nc{\caltao}{\calt^{a,0}}   
\nc{\ox}{\overline{\frakx}} \nc{\fraky}{{\mathfrak y}}
\nc{\frakz}{{\mathfrak z}} \nc{\oX}{\overline{X}}

\font\cyr=wncyr10

\nc{\redtext}[1]{\textcolor{red}{#1}}


\title
[Bialgebras, CYBE and Manin triples for 3-Lie-algebras]
{Bialgebras, the classical Yang-Baxter equation and Manin triples for 3-Lie algebras}

\author{Chengming Bai}
\address{Chern Institute of Mathematics \& LPMC, Nankai University, Tianjin 300071, China}
         \email{baicm@nankai.edu.cn}

\author{Li Guo}
\address{Department of Mathematics and Computer Science,
         Rutgers University,
         Newark, NJ 07102}
\email{liguo@rutgers.edu}

\author{Yunhe Sheng}
\address{Department of Mathematics, Jilin University, Changchun 130012, Jilin, China} \email{shengyh@jlu.edu.cn}

\date{\today}

\begin{abstract}
This paper studies two types of 3-Lie bialgebras whose
compatibility conditions between the multiplication and
comultiplication are given by local cocycles and double
constructions respectively, and are therefore called the \typeI
and \typeII. They can be regarded as suitable extensions of the
well-known Lie bialgebra in the context of 3-Lie algebras, in two
different directions. The \typeI is introduced to extend the
connection between Lie bialgebras and the classical Yang-Baxter
equation. Its relationship with a ternary variation of the
classical Yang-Baxter equation, called the 3-Lie classical
Yang-Baxter equation, a ternary $\mathcal{O}$-operator and a
3-pre-Lie algebra is established. In particular, it is shown that
solutions of the 3-Lie classical Yang-Baxter equation give
(coboundary) \typeIs, whereas, 3-pre-Lie algebras give rise to
solutions of the 3-Lie classical Yang-Baxter equation. The \typeII
is introduced to extend to the 3-Lie algebra context the
connection between Lie bialgebras and double constructions of Lie
algebras. Their related Manin triples give a natural construction
of pseudo-metric 3-Lie algebras with neutral signature. Moreover,
the \typeII can be regarded as a special class of the \typeI.
Explicit examples of \typeIIs are provided.
\end{abstract}

\subjclass[2010]{16T10, 16T25, 17A30, 17B62, 81T30}

\keywords{Bialgebra, 3-Lie algebra, 3-Lie bialgebra, classical Yang-Baxter equation, $\mathcal{O}$-operator, Manin triple, local cocycle}

\maketitle
\vspace{-.5cm}

\tableofcontents

\numberwithin{equation}{section}

\allowdisplaybreaks

\section{Introduction}

\subsection{Bialgebras}

For a given algebraic structure determined by a set of
multiplications of various arities and a set of relations among the
operations (which can be made precise in the context of either
universal algebra or operads~\mcite{Gr,LV}), a bialgebra structure
on this algebra is obtained by a corresponding set of
comultiplications together with a set of compatibility conditions
between the multiplications and comultiplications.  For a finite
dimensional vector space $V$ with the given algebraic structure,
this can be achieved by equipping the dual space
$V^*$ with the same algebraic structure and a set of compatibility
conditions between the structures on $V$ and those on $V^*$.

 The associative bialgebra and infinitesimal bialgebra~\mcite{A1,JR} are
well-known bialgebra structures. Note that these two structures
have the same associative multiplications on $V$ and $V^*$. They
are distinguished only by the compatibility conditions, with the
comultiplication acting as a homomorphism (resp. a derivation) on
the multiplication for the associative bialgebra (resp.
the infinitesimal bialgebra). In general, it is quite common to have multiple bialgebra
structures that differ only by their compatibility conditions.

A good compatibility condition is prescribed on one hand by
a strong motivation and potential applications, and on the other
hand by a rich structure theory and effective constructions.

\subsection{Lie bialgebras}

In the Lie algebra context, the most common bialgebra structure
is
the Lie bialgebra, consisting of a Lie algebra
$(\g,[\cdot,\cdot]$) where $[\cdot,\cdot]:\otimes^2\g\to \g$ is a
Lie bracket, a Lie coalgebra $(\g,\Delta)$ where $\Delta:\g\to
\otimes^2\g$ is a Lie comultiplication, and a suitable
compatibility condition between the Lie bracket $[\cdot,\cdot]$
and the Lie comultiplication $\Delta$. The Lie bialgebra is the
algebraic structure corresponding to a Poisson-Lie group and the classical structure of a quantized
universal enveloping algebra~\mcite{CP,D}.
Such great
importance of the Lie bialgebra serves as the main motivation for
our interest in a suitable bialgebra theory for the 3-Lie algebra in this paper.

There are several equivalent statements for the compatibility
condition of a Lie bialgebra. It is these multiple manifestations
of the Lie bialgebra that determine its importance in both theory and applications. However the equivalence of similar conditions in the 3-Lie algebra context no longer holds.
Thus, for our purpose of developing a suitable bialgebra theory
for the 3-Lie algebra, we first differentiate the roles played by
these equivalent conditions.

For a Lie algebra $(\g,[\cdot,\cdot])$, there is a graded Lie
algebra structure on the exterior algebra $\Lambda^\bullet \g$.
The compatibility condition can be concisely stated as the condition that the
Lie comultiplication $\Delta$ is a derivation with respect to the
graded Lie algebra structure on $\Lambda^\bullet \g$:
 \begin{equation}
\Delta[x, y]=[\Delta x, y]+[x, \Delta y],\quad \forall x, y\in \g.
\mlabel{eq:liebider1}
\end{equation}
Note that, in fact,  $[x,\Delta y]=({\rm ad}_x\otimes 1+1\otimes
{\rm ad}_x)\Delta y$,  where
$\ad:\g\longrightarrow\gl(\g)$ is the adjoint representation.

If one goes beyond the simplicity of this definition of the
compatibility condition, one sees that the importance of Lie
bialgebra mainly comes from two other equivalent statements of the
compatibility condition.

One reason for the usefulness of the Lie bialgebra is
that it has a coboundary theory, which leads to the construction of Lie bialgebras from solutions of the classical Yang-Baxter equation.  This coboundary theory comes from the following equivalent statement of the compatibility condition
~(\mref{eq:liebider1}) as cocycles: the Lie algebra $\g$ acts on
$\otimes^2\g$  via the map $\ad\otimes 1+1\otimes \ad$ (i.e. the tensor representation of two adjoint
representations), and
$\Delta: \g\to \otimes^2\g$ is a $1$-cocycle on $\g$ with
coefficients in the representation $\ad\otimes 1+1\otimes \ad$.

On the other hand, some important applications of Lie
bialgebras to the related fields~\mcite{CP} have relied on the
Lie algebras with symmetric nondegenerate invariant bilinear forms
(the so-called ``self-dual" or ``quadratic" Lie algebras) coming
from Lie bialgebras. These Lie algebras are obtained from the
following Manin triple characterization of the Lie bialgebra: the compatibility condition~(\mref{eq:liebider1}) of
a Lie bialgebra   is given by the condition that the Lie algebras
$\g$ and $\g^*$ whose underlying vector space is the dual
space of $\g$ are subalgebras of a third Lie algebra $\g\oplus \g^*$ such that the
bilinear form
\begin{equation}\label{eq:liebidb}
( x+u^*, y+v^*)_+=\langle x, v^*\rangle+\langle u^*,y\rangle,\;\;\forall x,y\in \g, u^*,v^*\in \g^*,
\end{equation}
is invariant on $\g\oplus \g^*$. Here $\langle\cdot,\cdot \rangle$ is the usual pairing between $\g$ and $\g^*$.

To recap, we have the following three compatibility conditions of Lie bialgebras. They are all equivalent, yet each has its own advantages.

\begin{condition}
{\rm
\begin{enumerate}
\item\mlabel{it:lieder} the comultiplication $\Delta$ satisfies the
derivation condition in Eq.~(\ref{eq:liebider1});
\item the comultiplication $\Delta: \g\to
\otimes ^2 \g$ is a $1$-cocycle on $\g$; \mlabel{it:liecbd}
\item there
is a Manin triple $(\g\oplus \g^*,\g,\g^*)$. \mlabel{it:lietri}
\end{enumerate}
\label{cdn:liebialg} }
\end{condition}

\subsection{3-Lie algebras}
Generalizations of Lie algebras to higher arities, including 3-Lie
algebras and more generally, $n$-Lie
algebras~\mcite{Filippov,repKasymov,Ling}, have attracted attention
from several fields of mathematics and physics. It is the algebraic
structure corresponding to Nambu mechanics \mcite{ALMY,N,T}. In
particular, the study of 3-Lie algebras plays an important role in
string theory \mcite{BL1,M2branes,F1,G,HHM,Lorentzian3Lie,MFM1}. For
example, the structure of $3$-Lie algebras is applied to the study
of supersymmetry and gauge symmetry transformations of the
world-volume theory of multiple coincident M2-branes;  the
generalized identity for a $3$-Lie algebra is essential to define
the action with $N=8$ supersymmetry and the Jacobi identity can be
can be regarded as a generalized Pl\"ucker relation in the physics
literature.

Metric 3-Lie algebras are of particular interest in physics. More
precisely, to obtain the correct equations of motion for the
Begger-Lambert theory from a Lagrangian that is invariant under all
aforementioned symmetries seems to require the 3-Lie algebra to
admit an invariant inner product. The signature of this metric
determines the relative signs of the kinetic terms for scalar and
fermion fields in the Bagger-Lambert Lagrangian ~\mcite{BL1,BL2,G}.
In ordinary gauge theory, a positive-definite metric is required in
order to ensure that the theory has positive-definite kinetic terms
and to prevent violations of unitarity due to propagating ghost-like
degrees of freedom. However, there are few 3-Lie algebras which
admit positive-definite metrics. In fact, it has been shown
\mcite{GG,P} that all finite-dimensional real 3-Lie algebras with
positive-definite metrics are the direct sums of a special
4-dimensional real simple 3-Lie algebra and a trivial 3-Lie algebra.
On the other hand, in order to find new interesting Bagger-Lambert
Lagrangians, one is led to contemplating 3-Lie algebras with
pseudo-metrics having any signature $(p,q)$ or with degenerate
invariant symmetric bilinear forms, despite the possibility of
negative-norm states, since in certain dynamical systems a zero-norm
generator corresponds to a gauge symmetry while a negative norm
generator corresponds to a ghost. Thus it seems worthwhile and
interesting, from both physical and mathematical considerations, to
find new 3-Lie algebras with symmetric invariant
bilinear forms.

\subsection{3-Lie bialgebras}
Given the importance of Lie bialgebras and 3-Lie algebras, it is
natural to develop a suitable bialgebra theory for 3-Lie algebras.

Motivated by the equivalent compatibility conditions of Lie
bialgebras in Condition~\ref{cdn:liebialg}, one is naturally led to
defining a 3-Lie bialgebra as a pair consisting of a 3-Lie algebra
$(A,[\cdot,\cdot,\cdot])$, a 3-Lie coalgebra $(A,\Delta)$ such that
$(A^*,\Delta^*)$ is also a 3-Lie algebra and one of the following
compatibility conditions is satisfied.

\begin{condition}

{\rm
\begin{enumerate}

\item\mlabel{it:der} the comultiplication $\Delta$ satisfies certain
``derivation" condition; \item the comultiplication $\Delta: A\to
\otimes ^3 A$ is a $1$-cocycle on $A$; \mlabel{it:cbd} \item there
is a Manin triple $(A\oplus A^*,A,A^*)$. \mlabel{it:tri}
\end{enumerate}
\label{cdn:3lie}
}
\end{condition}

Contrary to the case of the Lie bialgebra, suitable extensions of
these conditions to the context of 3-Lie algebras are not
equivalent, leading to different extensions of the Lie bialgebra to
the ternary case. Thus there might not be a unique ``perfect"
definition of a 3-Lie bialgebra, but three different versions
serving different purposes. This is in reminiscent to the case of
bialgebras. The following is an overview of the three approaches and
also serves as an outline of this paper. The first approach was
given in ~\mcite{Bai3Liebi} based on
Condition~\mref{cdn:3lie}.~(\mref{it:der}). The other two are the
subjects of study of this paper.

\subsubsection{$3$-Lie bialgebras with the derivation compatibility}
An approach of bialgebra theory for 3-Lie algebras based on
Condition~\ref{cdn:3lie}.(\mref{it:der}) was taken in \cite{Bai3Liebi}, in
which the authors generalized the compatibility condition
\eqref{eq:liebider1} formally to the following equality:
\begin{equation}
\Delta[x,y,z]=[\Delta x, y, z]+[x, \Delta y, z] +[x, y, \Delta z],
\mlabel{eq:3lieder}
\end{equation}
where \begin{equation} \mlabel{eq:3-der} [x,y,\Delta z]=-[x,\Delta
z, y]=[\Delta z, x,y]= (\ad_{x,y}\otimes 1\otimes 1+1\otimes
\ad_{x,y}\otimes 1 +1\otimes 1\otimes \ad_{x,y})\Delta(z),
\end{equation}
with $\ad_{x,y}(z)=[x,y,z]$
 for all $x,y,z\in A$.

However, for this formal generalization, neither a coboundary
theory nor the structure   on the double space $A\oplus A^*$ is
known. Unlike the case of Lie
algebras, it is still unknown whether there is 3-Lie algebra
structure on $\Lambda^\bullet A$, making it challenging for
such a bialgebra structure to develop an expected structure theory and applications.

\subsubsection{$3$-Lie bialgebras with cocycle compatibility}

Since a large part of the Lie bialgebra theory is based on the
cocycle characterization of its compatibility condition, it is
natural to extend this approach to 3-Lie algebras via
Condition~\ref{cdn:3lie}.(\mref{it:cbd}). We will carry out this approach in Section~\ref{sec:cycle}, after some preparation in Section~\mref{sec:bas}.
 Unfortunately, unlike the case of the Lie bialgebra,  such a cocycle description by itself
does not make sense in the 3-Lie algebra context since there is no
natural representation of a 3-Lie algebra $A$ on $\otimes^3 A$. To
get around this obstacle, we observe that, for a Lie bialgebra
$(\g,\Delta)$, there are two more 1-cocycles $\Delta_1$ and
$\Delta_2$  associated to the representations $(\otimes^2 \g, {\rm
ad}\otimes 1)$  and $(\otimes^2 \g, 1\otimes \ad)$ respectively and
the cocycle condition for $\Delta: \g\to \otimes^2\g$ follows from
the assumption that $\Delta$ is a sum $\Delta_1+\Delta_2$ of
1-cocycles with certain additional constrains. Noting that
$(\otimes^3 A,\ad\otimes 1\otimes 1)$, $(\otimes^3 A, 1\otimes
\ad\otimes 1)$ and $(\otimes^3 A, 1\otimes 1\otimes \ad)$ are
representations of a 3-Lie algebra $A$,
we are naturally led to the following modification of Condition~\ref{cdn:3lie}.(\mref{it:cbd}):
\begin{enumerate}
\item[(\ref{it:cbd}')] The comultiplication $\Delta:A\to
\otimes^3A$ is a sum $\Delta_1+\Delta_2+\Delta_3$, where each
$\Delta_i$ for $i=1,2,3$, is a $1$-cocycle associated to
$(\otimes^3 A, \ad\otimes 1\otimes 1)$, $(\otimes^3 A, 1\otimes
\ad\otimes 1)$ and $(\otimes^3 A, 1\otimes 1\otimes \ad)$
respectively.
\end{enumerate}

With this definition, we obtain a bialgebra structure for the 3-Lie
algebra that also has a coboundary theory. We call this structure
a local cocycle 3-Lie bialgebra (Definition~\mref{defi:typeI}). Although there is no natural
3-Lie algebra structure on $A\oplus A^*$, there are still some
interesting properties for local cocycle 3-Lie bialgebras. In
particular, it also leads to an analogue of the classical Yang-Baxter
equation defined on a 3-Lie algebra, called the 3-Lie classical Yang-Baxter equation (Definition~\mref{defi:3cybe}). Solutions of this equation give natural constructions of local cocycle 3-Lie bialgebras (Theorem~\mref{thm:ybe}).  On the
other hand, such a bialgebra theory is also related to $\mathcal
O$-operators and the so-called 3-pre-Lie-algebras which provide solutions of the 3-Lie classical Yang-Baxter equation
and hence constructions of local cocycle 3-Lie bialgebras on
certain double spaces. This is done in Section~\mref{ss:oop}.

\subsubsection{$3$-Lie bialgebras with the Manin triple compatibility}

Motivated by the double structure of a Manin triple characterizing
a Lie bialgebra, it is natural to consider the following analogue
of a Manin triple for 3-Lie algebras based on
Condition~\ref{cdn:3lie}.(\mref{it:tri}):
\begin{enumerate}
\item[(\ref{it:tri}')] $A\oplus A^*$ can
be equipped with a 3-Lie algebra structure such that $A$ and $A^*$
are 3-Lie subalgebras of $A\oplus A^*$ and the bilinear form on
$A\oplus A^*$ defined by Eq.~(\ref{eq:liebidb}) is invariant.
\end{enumerate}
Note that if such a structure exists, then $(A\oplus A^*,
(\cdot,\cdot)_+ )$ is a 3-Lie algebra with a pseudo-metric having
the signature $(n,n)$, where $n:=\dim A$. Thus, such an approach
provides a natural construction of pseudo-metric 3-Lie algebras with
signature $(n,n)$ for the aforementioned study of Bagger-Lambert
Lagrangians. Following the theory of Lie bialgebras, we utilize a
Manin triple on a 3-Lie algebra to define a bialgebra structure on a
3-Lie algebra, called a \typeII. We will present this approach in
Section~\mref{sec:triple}. We also show that the
\typeII can be regarded as a special case of the \typeI and provide
explicit examples of \typeIIs.

This completes the overview of the paper. A more detailed summary of each section can be found at the beginning of the section.
Moreover, throughout this paper, all the vector spaces and algebras are assumed to be of finite dimension, although
many results still hold in the infinite dimensional cases.

\section{Some preliminary results on 3-Lie algebras}
\label{sec:bas}

In this section, we give some general results on 3-Lie algebras and their cohomology theory that will be used in later sections.

\begin{defi}{\rm\cite{Filippov}}
  A {\bf 3-Lie algebra}  is a vector space $A$ together with a skew-symmetric linear map ($3$-Lie bracket) $[\cdot,\cdot,\cdot]:
\otimes^3 A\rightarrow A$ such that the following {\bf Fundamental Identity (FI)} holds:
\begin{equation}\label{eq:de1}
[x_1,x_2,[x_3,x_4,x_5]]=[[x_1,x_2,x_3],x_4,x_5]+[x_3,[x_1,x_2,x_4],x_5]+[x_3,x_4,[x_1,x_2,x_5]]
\end{equation}
for $x_i\in A, 1\leq i\leq 5$.
\end{defi}
In other words, for $x_1, x_2\in A$, the operator
\begin{equation}\label{eq:adjoint}
\ad_{x_1,x_2}:A\to A, \quad \ad_{x_1,x_2}x:=[x_1,x_2,x], \quad \forall x\in A,
\end{equation}
is a derivation in the sense that
$$ \ad_{x_1,x_2}[x_3,x_4,x_5]=[\ad_{x_1,x_2}x_3,x_4,x_5] +[x_3,\ad_{x_1,x_2}x_4,x_5]+[x_3,x_4,\ad_{x_1,x_2}x_5], \forall x_1, x_2, x_3\in A.$$

A morphism between 3-Lie algebras is defined as usual, i.e. a linear map that preserves the 3-Lie brackets.

\begin{pro}\label{pro:someequalities}
Let $A$ be a vector space  together with a skew-symmetric linear map $[\cdot,\cdot,\cdot]:
\otimes^3 A\rightarrow A$. Then $(A,[\cdot,\cdot,\cdot])$ is a $3$-Lie algebra if and only if the following identities hold:
\begin{enumerate}
\item $[[x_1,x_2,x_3],x_4, x_5]-[[x_1,x_2,x_4],x_3,x_5]+[[x_1,x_3,x_4],x_2,x_5]-[[x_2,x_3,x_4],x_1,x_5]=0$,
\item $[[x_1,x_2,x_5],x_3,x_4]+[[x_3,x_4,x_5],x_1,x_2]-[[x_1,x_3,x_5],x_2,x_4]\\
-[[x_2,x_4,x_5], x_1,x_3]+[[x_1,x_4,x_5],x_2,x_3]+[[x_2,x_3,x_5],x_1,x_4]=0$,
\end{enumerate}for  $x_i\in A, 1\leq i\leq 5$.\label{pro:new}
\end{pro}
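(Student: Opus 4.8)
The plan is to reduce all three identities to relations among \emph{double brackets} $[[x_i,x_j,x_k],x_l,x_m]$, the form in which (1) and (2) are already written, and then compare them. First I would put the Fundamental Identity \eqref{eq:de1} into this form using that $[\cdot,\cdot,\cdot]$ is totally skew-symmetric (a cyclic permutation of the three arguments is even, a transposition odd): $[x_1,x_2,[x_3,x_4,x_5]]=[[x_3,x_4,x_5],x_1,x_2]$, $[x_3,[x_1,x_2,x_4],x_5]=-[[x_1,x_2,x_4],x_3,x_5]$, $[x_3,x_4,[x_1,x_2,x_5]]=[[x_1,x_2,x_5],x_3,x_4]$, so that \eqref{eq:de1} is equivalent to
\[
[[x_3,x_4,x_5],x_1,x_2]=[[x_1,x_2,x_3],x_4,x_5]-[[x_1,x_2,x_4],x_3,x_5]+[[x_1,x_2,x_5],x_3,x_4],
\]
which I will call $(\mathrm{FI}')$. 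After this normalization everything lives in the finite-dimensional span of the double brackets in $x_1,\dots,x_5$, so the claimed equivalences become linear-algebraic.

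Assume \eqref{eq:de1}, hence $(\mathrm{FI}')$. To obtain (1): substitute into the right-hand side of $(\mathrm{FI}')$ the instance of $(\mathrm{FI}')$ in which the pair $\{x_1,x_2\}$ is interchanged with $\{x_3,x_4\}$ (and $x_5$ is kept fixed), namely $[[x_1,x_2,x_5],x_3,x_4]=[[x_1,x_3,x_4],x_2,x_5]-[[x_2,x_3,x_4],x_1,x_5]+[[x_3,x_4,x_5],x_1,x_2]$ (after reordering inner triples by skew-symmetry); the terms $[[x_3,x_4,x_5],x_1,x_2]$ cancel and exactly identity (1) is left. To obtain (2): relabel $(\mathrm{FI}')$ so that $x_5$ sits inside the inner bracket, and alternate the relation over the remaining four variables; collecting terms by skew-symmetry, the summands with $x_5$ \emph{outside} the inner bracket assemble into a multiple of the left-hand side of (1) while those with $x_5$ \emph{inside} assemble into a \emph{nonzero} multiple of the left-hand side of (2), giving an identity $c_1\,(\text{LHS of (1)})+c_2\,(\text{LHS of (2)})=0$ with $c_2\neq 0$; since (1) is already established, (2) follows.

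For the converse, assume (1) and (2). I would solve (2) for $[[x_3,x_4,x_5],x_1,x_2]$ and then rewrite the resulting double brackets by repeated use of (1) --- applied with each of $x_1,\dots,x_5$ in turn as the fixed last argument --- until the right-hand side of $(\mathrm{FI}')$ reappears; this recovers \eqref{eq:de1}. This converse is the main obstacle: it is a somewhat lengthy bookkeeping of the signs produced by the many applications of skew-symmetry, and it is the only place where both (1) and (2) are genuinely needed. A structural way to see that it succeeds (and to organize it) is representation-theoretic: the space of double brackets in five variables is $10$-dimensional and decomposes as an $S_5$-module into the irreducibles $S^{(1^5)}\oplus S^{(2,1,1,1)}\oplus S^{(2,2,1)}$ of dimensions $1,4,5$; one checks that the submodule of consequences generated by $(\mathrm{FI}')$, the one generated by (1), and the one generated by (2) all equal $S^{(1^5)}\oplus S^{(2,1,1,1)}$ (its complement $S^{(2,2,1)}$ being the degree-$5$ multilinear part of the free $3$-Lie algebra, which is nonzero --- e.g. it is detected by evaluation in the $4$-dimensional simple $3$-Lie algebra). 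Hence all three conditions are equivalent, and the proposition follows.
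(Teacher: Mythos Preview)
Your forward derivation of (1) matches the paper's: substituting one instance of the Fundamental Identity into another to cancel the common term. For (2), however, you wave your hands (``alternate \dots assemble into a nonzero multiple'') where the paper is concrete: it simply applies \eqref{eq:de1} to the first two terms on the right of \eqref{eq:de1} (rewritten as $[x_4,x_5,[x_1,x_2,x_3]]$ and $[x_5,x_3,[x_1,x_2,x_4]]$), and after two cancellations the six surviving terms are exactly (2). This is a three-line computation, not an averaging argument.

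Your assessment of the converse as ``a somewhat lengthy bookkeeping \dots the main obstacle'' is off: the paper dispatches it in two steps. One instance of (a) (with $x_3,x_4,x_5$ cyclically relabelled) gives
\[
[[x_1,x_2,x_3],x_4,x_5]=-[[x_1,x_2,x_5],x_3,x_4]+[[x_1,x_3,x_5],x_2,x_4]-[[x_2,x_3,x_5],x_1,x_4],
\]
and then (b) rewrites the right side as $[[x_3,x_4,x_5],x_1,x_2]-[[x_2,x_4,x_5],x_1,x_3]+[[x_1,x_4,x_5],x_2,x_3]$, which is exactly your $(\mathrm{FI}')$. No repeated rewriting is needed.

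Your representation-theoretic alternative is interesting and, once verified, actually proves more than the proposition: the $S_5$-span of (1) already contains $(\mathrm{FI}')$ (e.g.\ $(\mathrm{FI}')=-\tfrac12(R_1+R_2+R_5)+\tfrac12(R_3+R_4)$ with $R_i=(i5)\cdot(1)$), and similarly (2) alone suffices (with denominators $3$). But this route introduces characteristic restrictions absent from the paper's argument, and you assert the key module identification without checking it; the paper's direct proof avoids all of this.
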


\pf
If $(A,[\cdot,\cdot,\cdot])$ is a 3-Lie algebra, then applying Eq.~(\ref{eq:de1}) to the last term of Eq.~(\ref{eq:de1}), we have
\begin{eqnarray*}
  [x_1,x_2,[x_3,x_4,x_5]]
  &=&[[x_1,x_2,x_3],x_4,x_5]+[x_3,[x_1,x_2,x_4],x_5]\\
  &&+[[x_3,x_4,x_1],x_2,x_5]+[x_1,[x_3,x_4,x_2],x_5]+[x_1,x_2,[x_3,x_4,x_5]],
\end{eqnarray*}
proving Item (a).

Similarly, applying Eq.~(\ref{eq:de1}) to the first and second terms on the right hand side of Eq.~(\ref{eq:de1}), we obtain
\begin{eqnarray*}
  [x_1,x_2,[x_3,x_4,x_5]]
 &=& [[x_4,x_5, x_1],x_2,x_3]+[x_1,[x_4,x_5,x_2],x_3] +[x_1,x_2,[x_4,x_5,x_3]]\\
 &&+[[x_5, x_3,x_1],x_2,x_4]+[x_1,[x_5, x_3,x_2],x_4]+[x_1,x_2,[x_5, x_3,x_4]]\\
 &&+[x_3,x_4,[x_1,x_2,x_5]],
\end{eqnarray*}
implying Item (b).
\smallskip

Conversely, suppose that Items~(a) and (b) hold. First Item~(a) gives
$$
[[x_1,x_2,x_3],x_4,x_5]=-[[x_1,x_2,x_5],x_3,x_4]+[[x_1,x_3,x_5],x_2,x_4]-[[x_2,x_3,x_5],x_1,x_4]$$
which equals to
$$[[x_3,x_4,x_5],x_1,x_2]-[[x_2,x_4,x_5],x_1,x_3]+[x_1,x_4,x_5],x_2,x_3]$$
by Item~(b).
Thus, $A$ is a 3-Lie algebra.
\qed\vspace{3mm}

The notion of a representation of an $n$-Lie algebra was introduced in \cite{repKasymov}. See also \cite{rep}.
\begin{defi}\label{defi:rep}
 Let $V$ be a vector space. A {\bf representation of a 3-Lie algebra} $A$ on $V$ is a skew-symmetric linear map $\rho: \otimes^2A\rightarrow \gl(V)$ such that
\begin{enumerate}
\item [\rm(i)] $\rho (x_1,x_2)\rho(x_3,x_4)-\rho(x_3,x_4)\rho(x_1,x_2)=\rho([x_1,x_2,x_3],x_4)-\rho([x_1,x_2,x_4],x_3)$;
\item [\rm (ii)]$\rho ([x_1,x_2,x_3],x_4)=\rho(x_1,x_2)\rho(x_3,x_4)+\rho(x_2,x_3)\rho(x_1,x_4)+\rho(x_3,x_1)\rho(x_2,x_4)$,
\end{enumerate}
for $x_i\in A, 1\leq i\leq 4$.
\end{defi}

It is straightforward to obtain
\begin{lem}
Let $A$ be a $3$-Lie algebra, $V$  a vector space and $\rho:
\otimes^2A\rightarrow \gl(V)$  a skew-symmetric linear
map. Then $(V,\rho)$ is a representation of $A$ if and only if there
is a $3$-Lie algebra structure $($called the semi-direct product$)$
on the direct sum $A\oplus V$ of vector spaces, defined by
\begin{equation}\label{eq:sum}
[x_1+v_1,x_2+v_2,x_3+v_3]_{A\oplus V}=[x_1,x_2,x_3]+\rho(x_1,x_2)v_3+\rho(x_3,x_1)v_2+\rho(x_2,x_3)v_1,
\end{equation}
for $x_i\in A, v_i\in V, 1\leq i\leq 3$. We denote this semi-direct product $3$-Lie algebra by $A\ltimes_\rho V.$
\end{lem}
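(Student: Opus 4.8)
The plan is to prove the two directions of the equivalence separately, noting that the whole statement reduces to the familiar fact that a semi-direct product construction encodes exactly a representation. First I would spell out what it means for the bracket $[\cdot,\cdot,\cdot]_{A\oplus V}$ defined in \eqref{eq:sum} to be skew-symmetric: this is automatic from the skew-symmetry of $[\cdot,\cdot,\cdot]$ on $A$ and of $\rho$, together with the symmetric way the three terms $\rho(x_1,x_2)v_3,\ \rho(x_3,x_1)v_2,\ \rho(x_2,x_3)v_1$ are arranged (a cyclic rotation of the indices $1,2,3$). So the content is entirely in the Fundamental Identity \eqref{eq:de1}.

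Next, for the ``only if'' direction, I would assume $(V,\rho)$ is a representation and verify the FI for $A\oplus V$ directly. Write $X_i = x_i + v_i$ for $1\le i\le 5$ and expand both sides of \eqref{eq:de1} for the bracket $[\cdot,\cdot,\cdot]_{A\oplus V}$, using \eqref{eq:sum} repeatedly. The $A$-components match precisely because $A$ itself satisfies the FI, so they can be discarded. The $V$-component then splits, by multilinearity, into pieces according to which of $v_1,\dots,v_5$ appears. Collecting the terms in which $v_5$ appears yields exactly condition (i) of Definition~\ref{defi:rep} (applied to $x_1,x_2,x_3,x_4$ acting on $v_5$), after using the skew-symmetry of $\rho$; collecting the terms in which $v_3$ or $v_4$ appears yields condition (ii) (in the guise $\rho([x_1,x_2,x_3],x_4)=\rho(x_1,x_2)\rho(x_3,x_4)+\rho(x_2,x_3)\rho(x_1,x_4)+\rho(x_3,x_1)\rho(x_2,x_4)$) together with, possibly, condition (i) again after reindexing; the terms in which $v_1$ or $v_2$ appears are handled symmetrically. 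Conversely, for the ``if'' direction, I would assume $A\oplus V$ with \eqref{eq:sum} is a 3-Lie algebra and recover (i) and (ii): taking $x_5$ generic and $v_5 = v$ with all other $v_i=0$, the $V$-component of the FI for $A\oplus V$ collapses to exactly identity (i) for $\rho$; taking instead $v_3 = v$ (or $v_4=v$) with the others zero recovers identity (ii). One also needs the subsidiary check that $A$ being a subalgebra (the $v_i=0$ case) forces $[\cdot,\cdot,\cdot]$ on $A$ to satisfy the FI, which is immediate.

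I would organize the bookkeeping using the characterization in Proposition~\ref{pro:someequalities} if it streamlines the matching of terms, though a direct expansion of \eqref{eq:de1} is likely cleaner here; in either case the key is to track each monomial $v_j$ separately. The main obstacle is purely combinatorial: the FI for $A\oplus V$ has, after expansion, on the order of a dozen terms linear in each $v_j$, and one must group them correctly so that conditions (i) and (ii)---which look asymmetric in their indices---emerge after applying the skew-symmetry of $\rho$ and of the bracket. There is no conceptual difficulty; the care needed is in not dropping or double-counting a term, and in recognizing that the ``extra'' terms beyond (i) and (ii) cancel by skew-symmetry. Since the statement is labelled as straightforward in the text (``It is straightforward to obtain''), I expect the write-up to be a short verification rather than anything requiring a new idea, and indeed one may simply state that expanding \eqref{eq:de1} for \eqref{eq:sum} and separating components yields precisely (i) and (ii).
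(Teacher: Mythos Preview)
Your proposal is correct and matches the paper's approach: the paper simply states ``It is straightforward to obtain'' and gives no proof, so the direct verification you outline---expanding the Fundamental Identity for the bracket \eqref{eq:sum} and separating the $V$-component according to which $v_j$ appears to recover conditions (i) and (ii) of Definition~\ref{defi:rep}---is exactly what is intended. Your discussion is already more detailed than necessary for a result the authors treat as routine.
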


From the proof of Proposition~\ref{pro:new}, we immediately obtain
\begin{pro}\label{pro:rep properties}
Let $(V,\rho)$ be a representation of a $3$-Lie algebra $A$. Then the following identities hold:
\begin{enumerate}\label{pro:new-repn}
\item $\rho([x_1,x_2,x_3],x_4)-\rho([x_1,x_2,x_4],x_3])+\rho([x_1,x_3,x_4],x_2)-\rho([x_2,x_3,x_4],x_1)=0$;
\item $\rho(x_1,x_2)\rho(x_3,x_4)+\rho(x_2,x_3)\rho(x_1,x_4)+\rho(x_3,x_1)\rho(x_2,x_4)\\
+\rho(x_3,x_4)\rho(x_1,x_2)+\rho(x_1,x_4)\rho(x_2,x_3)+\rho(x_2,x_4)\rho(x_3,x_1)=0$,
\end{enumerate}
for $x_i\in A, 1\leq i\leq 4$.
\end{pro}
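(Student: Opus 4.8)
The plan is to derive both identities by specializing Proposition~\ref{pro:new} to the semi-direct product 3-Lie algebra $A\ltimes_\rho V$. By the lemma above that identifies representations of $A$ with semi-direct product 3-Lie algebra structures on $A\oplus V$, the hypothesis that $(V,\rho)$ is a representation of $A$ says precisely that $A\oplus V$, equipped with the bracket \eqref{eq:sum}, is a 3-Lie algebra; hence $A\ltimes_\rho V$ satisfies identities (a) and (b) of Proposition~\ref{pro:new} for any five of its elements. First I would apply those two identities with $x_1,x_2,x_3,x_4$ taken in $A$ and $x_5=v$ taken in $V$, all viewed as elements of $A\ltimes_\rho V$.

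Next I would evaluate each term using \eqref{eq:sum}. Two special cases suffice: for $a,b\in A$ and $w\in V$ one has $[a,b,w]_{A\oplus V}=\rho(a,b)w$, and (putting $w$ in the first slot) $[w,a,b]_{A\oplus V}=\rho(a,b)w$. Therefore, with $x_5=v$, every double bracket occurring in Proposition~\ref{pro:new}(a) and (b) collapses to a single operator applied to $v$: one of the form $\rho([\cdot,\cdot,\cdot],\cdot)\,v$ when $v$ lies in the outer bracket (this is the situation in identity (a)), or $\rho(\cdot,\cdot)\rho(\cdot,\cdot)\,v$ when $v$ lies in the inner bracket (the situation in identity (b)). Since $v$ ranges over all of $V$, the two resulting identities hold as operator identities in $\gl(V)$. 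Identity (a) of Proposition~\ref{pro:new} then gives identity (a) of the present proposition verbatim, while identity (b), after reordering a few summands by means of the skew-symmetry $\rho(x,y)=-\rho(y,x)$, gives identity (b).

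I do not expect a genuine obstacle; the only delicate point is the sign-and-ordering bookkeeping in part (b), where one must confirm that the six double brackets of Proposition~\ref{pro:new}(b), after the substitution $x_5=v$, reproduce exactly the six products $\rho(x_1,x_2)\rho(x_3,x_4)$, $\rho(x_2,x_3)\rho(x_1,x_4)$, $\rho(x_3,x_1)\rho(x_2,x_4)$ together with their order-reversed companions. One should also keep in mind that invoking the lemma uses both conditions (i) and (ii) of Definition~\ref{defi:rep}. As an alternative that makes literal the phrase ``from the proof of Proposition~\ref{pro:new}'', one can skip the semi-direct product entirely and simply rerun the formal manipulations of that proof, replacing each operator $[x_1,x_2,-]$ by $\rho(x_1,x_2)$ and using Definition~\ref{defi:rep}(i) in place of the operator form of the Fundamental Identity \eqref{eq:de1} and Definition~\ref{defi:rep}(ii) in place of the expansion of $\ad_{x_1,x_2}[x_3,x_4,x_5]$; the same cancellations then yield (a) and (b).
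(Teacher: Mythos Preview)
Your proposal is correct and matches the paper's intent: the paper's proof is the single line ``From the proof of Proposition~\ref{pro:new}, we immediately obtain,'' and your two routes (specializing Proposition~\ref{pro:new} to $A\ltimes_\rho V$ via the semi-direct product lemma, or rerunning the manipulations with $\ad$ replaced by $\rho$) are exactly the two natural ways to unpack that sentence. Your sign check for part~(b) is accurate, so there is nothing to add.
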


Let $(V,\rho)$ be a representation of a $3$-Lie algebra $A$. Define $\rho^*:\otimes^2 A\longrightarrow\gl(V^*)$
by
\begin{equation}\label{eq:dual}
\langle \rho^*(x_1,x_2) \alpha, v\rangle =-\langle \alpha,\rho(x_1,x_2) v\rangle,\quad \forall \alpha\in V^*,~x_1,x_2\in A,~ v\in V.
\end{equation}
A straightforward computation applying Definition~\ref{defi:rep} and Proposition~\ref{pro:rep properties} gives
\begin{pro}
With the above notations, $(V^*,\rho^*)$ is a representation of $A$, called the dual representation.
\end{pro}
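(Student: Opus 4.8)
The plan is to verify directly that the two representation conditions (i) and (ii) of Definition~\ref{defi:rep} hold for $\rho^*$ by dualizing the corresponding conditions for $\rho$, using only the defining pairing relation~\eqref{eq:dual} and the supplementary identities collected in Proposition~\ref{pro:rep properties}. First I would note that $\rho^*$ is automatically skew-symmetric in its two arguments since $\rho$ is, so only (i) and (ii) need checking. The basic mechanism is that for any $X, Y \in \gl(V)$ one has $\langle (XY)^* \alpha, v\rangle = \langle \alpha, (XY) v\rangle = \langle Y^*(X^*\alpha), v\rangle$, i.e. $(XY)^* = Y^* X^*$ as operators on $V^*$, and under the sign convention of~\eqref{eq:dual}, $\rho^*(x_1,x_2) = -\rho(x_1,x_2)^*$; the two minus signs in a product $\rho^*(x_1,x_2)\rho^*(x_3,x_4)$ cancel, so this equals $\rho(x_3,x_4)^*\rho(x_1,x_2)^* = (\rho(x_1,x_2)\rho(x_3,x_4))^*$ with the order of $\rho$'s reversed.

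For condition (i) for $\rho^*$, I would pair the desired identity
\[
\rho^*(x_1,x_2)\rho^*(x_3,x_4) - \rho^*(x_3,x_4)\rho^*(x_1,x_2) = \rho^*([x_1,x_2,x_3],x_4) - \rho^*([x_1,x_2,x_4],x_3)
\]
against an arbitrary $v \in V$ and $\alpha \in V^*$. The left-hand side becomes $\langle \alpha, (\rho(x_3,x_4)\rho(x_1,x_2) - \rho(x_1,x_2)\rho(x_3,x_4))v\rangle$, which by condition (i) for $\rho$ (after multiplying that identity by $-1$) equals $\langle \alpha, (\rho([x_1,x_2,x_4],x_3) - \rho([x_1,x_2,x_3],x_4))v\rangle$; meanwhile the right-hand side of the target, paired with $v$, is exactly $-\langle \alpha, \rho([x_1,x_2,x_3],x_4)v\rangle + \langle \alpha, \rho([x_1,x_2,x_4],x_3)v\rangle$, so the two sides agree. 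For condition (ii) for $\rho^*$, pairing
\[
\rho^*([x_1,x_2,x_3],x_4) = \rho^*(x_1,x_2)\rho^*(x_3,x_4) + \rho^*(x_2,x_3)\rho^*(x_1,x_4) + \rho^*(x_3,x_1)\rho^*(x_2,x_4)
\]
against $v$ and $\alpha$ turns the right-hand side into $\langle\alpha, (\rho(x_3,x_4)\rho(x_1,x_2) + \rho(x_1,x_4)\rho(x_2,x_3) + \rho(x_2,x_4)\rho(x_3,x_1))v\rangle$ (note the reversal of each product and the resulting term order). The key point is that this is precisely the sum of the last three terms in Proposition~\ref{pro:rep properties}(b), so it equals minus the sum of the first three terms there, i.e. $-\langle\alpha, (\rho(x_1,x_2)\rho(x_3,x_4) + \rho(x_2,x_3)\rho(x_1,x_4) + \rho(x_3,x_1)\rho(x_2,x_4))v\rangle$; by condition (ii) for $\rho$ this is $-\langle\alpha, \rho([x_1,x_2,x_3],x_4)v\rangle = \langle \rho^*([x_1,x_2,x_3],x_4)\alpha, v\rangle$, which is exactly the left-hand side.

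The only genuine subtlety — and thus the step I would be most careful about — is the bookkeeping of signs and of the order reversal of operator products under dualization: the naive transpose $(XY)^* = Y^*X^*$ reverses order, while the sign convention $\rho^* = -\rho(\cdot)^*$ inserts a sign for each factor. This is exactly why Proposition~\ref{pro:rep properties} is invoked rather than Definition~\ref{defi:rep}(i)--(ii) directly: condition (ii) for $\rho$ expresses $\rho([x_1,x_2,x_3],x_4)$ in terms of products in one fixed order, but dualizing produces the opposite order, which is what the \emph{symmetrized} identity in Proposition~\ref{pro:rep properties}(b) supplies. Once the transpose-and-sign conventions are pinned down at the start, the verification of both (i) and (ii) is a routine matching of terms, so I would state the conventions carefully and then present each of the two checks as a short displayed computation.
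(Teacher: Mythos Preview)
Your proof is correct and follows exactly the route the paper indicates: the paper itself only says that the result follows from ``a straightforward computation applying Definition~\ref{defi:rep} and Proposition~\ref{pro:rep properties},'' and your argument fills in precisely those details, using condition~(i) for $\rho$ to verify~(i) for $\rho^*$ and the symmetrized identity of Proposition~\ref{pro:rep properties}(b) together with condition~(ii) for $\rho$ to handle the order reversal that arises in verifying~(ii) for $\rho^*$.
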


\begin{ex}{\rm Let $A$ be a 3-Lie algebra. The linear map ${\rm ad}:\otimes^2A\rightarrow \frak g\frak l(A)$ with $x_1,x_2\rightarrow {\rm ad}_{x_1,x_2}$ for any $x_1,x_2\in A$
defines a representation $(A, {\rm ad})$ which is called the {\bf adjoint representation} of $A$, where ${\rm ad}_{x_1,x_2}$ is given by Eq.~\eqref{eq:adjoint}.
The dual representation $(A^*,{\rm ad}^*)$ of the adjoint representation $(A,{\rm ad})$ of a 3-Lie algebra $A$ is called
the {\bf coadjoint representation}.

}\end{ex}

Given a representation $(V,\rho)$, denote by $ C^p(A;V)$ the set of $p$-cochains:
$$
C^p(A;V):=\{\mbox{linear~ maps}~ f:\underbrace{\wedge^2A\otimes\cdots\otimes \wedge^2A}_{p-1 \text{ factors}}\wedge A\longrightarrow V \}.
$$
The coboundary operators associated to the trivial representation and the adjoint representation are given in \cite{cohomology,review}. Similarly, using the notation $X_j:=X_j^1\wedge X_j^2\in\wedge^2A, 1\leq j\leq p$, the coboundary operator $\delta:C^p(A;V)\longrightarrow C^{p+1}(A;V)$ is defined by
\begin{eqnarray*}
 &&\delta f(X_1,\cdots,X_{p},Z)\\
 &=&\sum_{1\leq j\leq k}^{p}(-1)^jf(X_1,\cdots,\widehat{X_j},\cdots,X_{k-1},[X_j^1,X_j^2,X_k^1]\wedge X_k^2+X_k^1\wedge [X_j^1,X_j^2,X_k^2],\cdots,X_{p},Z)\\
 &&+\sum_{j=1}^{p}(-1)^jf(X_1,\cdots,\widehat{X_j},\cdots,X_{p},[X_j^1,X_j^2,Z])\\
 &&+\sum_{j=1}^{p}(-1)^{j+1}\rho(X_j^1, X_j^2)f(X_1,\cdots,\widehat{X_j},\cdots,X_{p},Z)\\
 &&+(-1)^{p+1}\rho(X_{p}^2,Z)f(X_1,\cdots,X_{p},X_{p}^1)+(-1)^{p}\rho(X_{p}^1,Z)f(X_1,\cdots,X_{p},X_{p}^2).
\end{eqnarray*}
In particular, we  obtain the following formula for a 1-cocycle.

\begin{defi}
Let $A$ be a $3$-Lie algebra and $(V,\rho)$ be a
representation of $A$.  A linear map $f:A\rightarrow V$ is called
a {\bf $1$-cocycle} on $A$ associated to $(V,\rho)$ if it satisfies
\begin{equation}
f([x_1,x_2,x_3])=\rho(x_1,x_2)f(x_3)+\rho(x_2,x_3)f(x_1)+\rho(x_3,x_1)f(x_2),\;\;\forall x_1,x_2,x_3\in A.\label{eq:1-cocycle}
\end{equation}
\end{defi}

\begin{ex}\mlabel{ex:d}
{\rm Recall that a {\bf derivation} $D$ of a 3-Lie algebra $A$ is defined to be a linear map $D:A\rightarrow A$ satisfying
$$D[x_1,x_2,x_3]=[D(x_1),x_2,x_3]+[x_1,D(x_2),x_3]+[x_1,x_2,D(x_3)],\quad\forall x_1,x_2,x_3\in A.$$
So any derivation is a 1-cocycle of the 3-Lie algebra $A$
associated to the adjoint representation $(A,{\rm
ad})$ given by Eq.~\eqref{eq:adjoint}.
In particular, there is a 1-cocycle (derivation) depending
on two elements in the representation space $A$ which can be
regarded as a kind of ``1-coboundary". Explicitly, for two fixed
elements $u,v\in A$, the linear map $D:A\rightarrow A$ defined by
$$D(x)={\rm ad} (u,v)x=[u,v,x],\quad\forall x\in A,$$
is a 1-cocycle on $A$ associated to the adjoint representation $(A, {\rm ad})$. }\end{ex}

\section{Local cocycle 3-Lie bialgebras}
\label{sec:cycle}
We study \typeIs in this section. In Section 3.1, we introduce the notion of a \typeI. In Section 3.2, we study
coboundary \typeIs and introduce the notion
of the 3-Lie classical Yang-Baxter equation. Given a solution of
the 3-Lie classical Yang-Baxter equation, we can construct a local
cocycle 3-Lie bialgebra (Theorem \ref{thm:ybe}). In Section
3.3, we introduce the notion of an $\mathcal O$-operator on a
3-Lie algebra, which gives a skew-symmetric solution of the
3-Lie classical Yang-Baxter equation in some semi-direct product
3-Lie algebra (Theorem \ref{thm:O-r}). Then we introduce the
notion of a 3-pre-Lie algebra, and apply it to obtain solutions of the
3-Lie classical Yang-Baxter equation in some semi-direct product
3-Lie algebra associated to a 3-pre-Lie algebra.

\subsection{Local cocycles and bialgebras}

We first give the definition of a \typeI.

\begin{defi}
A {\bf \typeI} is a pair $(A,\Delta)$, where $A$ is a $3$-Lie algebra and $\Delta=\Delta_1+\Delta_2+\Delta_3:A\rightarrow A\otimes
A\otimes A$ is a linear map, such that $\Delta^*:A^*\otimes A^*\otimes A^*\rightarrow
A^*$ defines a $3$-Lie algebra structure on
$A^*$, and the following conditions are satisfied:
\begin{itemize}
\item $\Delta_1$ is a $1$-cocycle associated to the representation $(A\otimes A\otimes A,{\rm ad}\otimes 1\otimes 1)$;
\item $\Delta_2$ is a $1$-cocycle associated to the representation $(A\otimes A\otimes A, 1\otimes {\rm ad}\otimes 1)$;
\item $\Delta_3$ is a $1$-cocycle associated to the representation $(A\otimes A\otimes A, 1\otimes 1\otimes {\rm ad})$.
\end{itemize}
\mlabel{defi:typeI}
\end{defi}

A \typeI is a natural generalization of a Lie
bialgebra. Recall that a Lie bialgebra is a Lie algebra $\frak g$
with a linear map $\Delta:\frak g\rightarrow \frak g\otimes \frak g$
such that $\Delta$ defines a Lie co-algebra and
\begin{equation}\label{eq:conditionbi}
\Delta([x,y])=({\rm ad}_x\otimes 1+1\otimes {\rm ad}_x)\Delta (y)-
({\rm ad}_y\otimes 1+1\otimes {\rm ad}_y)\Delta (x),\quad\forall ~ x,y\in
\frak g.\end{equation}
 The usual interpretation is that $\Delta$ is a 1-cocycle
of $\frak g$ associated to the   representation $ {\rm ad}\otimes
1+1\otimes {\rm ad}$ on the tensor space $\frak g\otimes \frak g$.
Note that for a Lie algebra, due to the underlying enveloping
algebra $U(\frak g)$ being a Hopf algebra,  there is a natural
representation on the tensor space. While this fact cannot be
extended to 3-Lie algebras, the fact that both $(\frak g\otimes \frak g, {\rm
ad}\otimes 1)$ and $(\frak
g\otimes \frak g, 1\otimes {\rm ad})$ are representations of $\frak
g$ can be extended, leading  to the concept of a \typeI.

There is an alternative interpretation of a Lie bialgebra.
\begin{lem}\label{lem:Lie bi}
  A pair $(\g,\Delta)$ is a Lie bialgebra if
$\Delta=\Delta_1+\Delta_2$ such that for any $x,y\in \g$,
 \begin{enumerate}
\item $\Delta_1$ and $\Delta_2$ are $1$-cocycles of $\frak g$ associated to  ${\rm
ad}\otimes 1$ and $1\otimes {\rm ad}$ respectively, i.e.
\begin{equation}\label{eq:conditionred}
\Delta_1[x,y]=({\rm ad}_x\otimes 1) \Delta_1(y)-({\rm ad}_y\otimes1) \Delta_1(x), \quad \Delta_2[x,y]=({1\otimes \rm ad}_x) \Delta_2(y)-(1\otimes {\rm ad}_y)\Delta_2(x);
\end{equation}
\item the following compatibility condition holds:
\begin{equation}\label{eq:conditionextra}
({\rm ad}_x\otimes 1)\Delta_2(y)+(1\otimes {\rm
ad}_x)\Delta_1(y)- ({\rm ad}_y\otimes1) \Delta_2(x)-(1\otimes {\rm
ad}_y)\Delta_1(x)=0.
\end{equation}
\end{enumerate}
\end{lem}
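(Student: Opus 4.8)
The plan is to verify the equivalence by splitting $\Delta = \Delta_1 + \Delta_2$ and rewriting the single cocycle condition~\eqref{eq:conditionbi} in terms of $\Delta_1$ and $\Delta_2$. First I would observe that \eqref{eq:conditionbi} for $\Delta = \Delta_1 + \Delta_2$ reads
\[
\Delta_1[x,y] + \Delta_2[x,y] = (\ad_x\otimes 1 + 1\otimes \ad_x)(\Delta_1(y)+\Delta_2(y)) - (\ad_y\otimes 1 + 1\otimes \ad_y)(\Delta_1(x)+\Delta_2(x)).
\]
The right-hand side naturally decomposes into eight terms, four of which involve $\ad\otimes 1$ and four of which involve $1\otimes\ad$. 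The content of the lemma is that if one \emph{imposes} \eqref{eq:conditionred} — i.e. $\Delta_1$ is a $1$-cocycle for $\ad\otimes 1$ and $\Delta_2$ is a $1$-cocycle for $1\otimes\ad$ — then the four ``diagonal'' terms $(\ad_x\otimes 1)\Delta_1(y) - (\ad_y\otimes 1)\Delta_1(x)$ and $(1\otimes\ad_x)\Delta_2(y) - (1\otimes\ad_y)\Delta_2(x)$ reproduce exactly $\Delta_1[x,y]$ and $\Delta_2[x,y]$, and what remains is precisely the ``cross'' terms $(\ad_x\otimes 1)\Delta_2(y) + (1\otimes\ad_x)\Delta_1(y) - (\ad_y\otimes 1)\Delta_2(x) - (1\otimes\ad_y)\Delta_1(x)$, whose vanishing is condition~\eqref{eq:conditionextra}. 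So the direction I would prove is: assuming (a) and (b), add the two cocycle identities in \eqref{eq:conditionred} and add zero in the form of \eqref{eq:conditionextra}; regrouping the eight resulting terms gives exactly \eqref{eq:conditionbi}, and together with the hypothesis that $\Delta$ (equivalently $\Delta_1+\Delta_2$) defines a Lie coalgebra structure — which I would note is part of the data, or follows since the dual condition is about the co-Jacobi identity for $\Delta$ itself and is unaffected by the splitting — one concludes $(\g,\Delta)$ is a Lie bialgebra.

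Concretely the key steps, in order, are: (1) write out \eqref{eq:conditionbi} with $\Delta=\Delta_1+\Delta_2$ and expand into eight summands; (2) use \eqref{eq:conditionred} to replace the four ``diagonal'' summands by $\Delta_1[x,y]+\Delta_2[x,y]=\Delta[x,y]$; (3) identify the remaining four ``cross'' summands with the left-hand side of \eqref{eq:conditionextra} and invoke (b) to kill them; (4) conclude that \eqref{eq:conditionbi} holds, hence $\Delta$ satisfies the Lie bialgebra compatibility condition; (5) remark that the Lie coalgebra axiom on $\Delta$ is retained (it is an assumption on $\Delta$, or equivalently on $\Delta^*$, independent of the decomposition). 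This is a short computation: the only genuine step is the bookkeeping of which of the eight terms are ``diagonal'' versus ``cross,'' and that is forced by matching tensor-factor positions.

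The main obstacle — really the only subtlety — is a logical rather than a computational one: the statement in the excerpt only lists conditions (a) and (b), so one must be careful that ``$(\g,\Delta)$ is a Lie bialgebra'' also requires $(\g,\Delta)$ to be a Lie coalgebra. I would handle this either by noting it is implicitly assumed (as for the $3$-Lie case, where the definition explicitly includes ``$\Delta^*$ defines a $3$-Lie algebra structure''), or by pointing out that the co-Jacobi/co-skew-symmetry conditions constrain only $\Delta$ and not the individual pieces $\Delta_i$, so no compatibility between the coalgebra axiom and the splitting needs checking. A secondary point worth a sentence: the lemma as stated is an ``if,'' not an ``if and only if,'' so I only need the one direction; but it is worth remarking that the converse also holds after decomposing any $1$-cocycle $\Delta$ for $\ad\otimes 1 + 1\otimes\ad$ via the known construction (e.g. taking $\Delta_1,\Delta_2$ associated to the two sub-representations), which is exactly the motivation quoted later for the definition of a \typeI.
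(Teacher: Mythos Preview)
Your proposal is correct and is exactly the paper's approach: the paper's proof is the single sentence ``It follows from the fact that Eqs.~\eqref{eq:conditionred} and~\eqref{eq:conditionextra} imply Eq.~\eqref{eq:conditionbi},'' and you have simply spelled out this implication by the eight-term expansion and diagonal/cross regrouping. Your caveat about the Lie coalgebra axiom being an implicit hypothesis is well taken and is indeed tacit in the paper.
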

\pf It follows from the fact that   Eqs.~\eqref{eq:conditionred} and~\eqref{eq:conditionextra} imply Eq.~\eqref{eq:conditionbi}.\qed\vspace{3mm}

In   the theory of Lie bialgebras, it is essential to consider the
coboundary case, which is related to the theory of the classical
Yang-Baxter equation. In the coboundary case, we have
$\Delta(x)=({\rm ad}_x\otimes 1+1\otimes {\rm ad}_x)r$,  for a fixed $r\in
\frak g\otimes \frak g$ and any $x\in \g$.
In view of Lemma \ref{lem:Lie bi},
 it is
natural to take
\begin{equation}\label{eq:coboundary}
\Delta_1(x)=({\rm ad}_x\otimes 1) (r),\;\;\Delta_2(x)=(1\otimes {\rm
ad}_x)(r),\quad \forall x\in \g.
\end{equation}
Under this condition,  Eq.~\eqref{eq:conditionextra} holds
automatically by a straightforward computation. Thus, for the
purpose of the classical Yang-Baxter equation, it is enough to only
require Condition (a) in Lemma \ref{lem:Lie bi}. The compatibility
condition for \typeIs is a
natural extension of this condition. Note that Condition~(a) alone
cannot guarantee that $\g\oplus \g^*$ is a Lie algebra.

We end this subsection with an interpretation of the
Condition~(a) in Lemma \ref{lem:Lie bi} from an operadic point of
view. See~\cite{Lo} for the background and notations. The
compatibility condition of a Lie bialgebra can be expressed as
\begin{equation}
\Delta[x,y]=x_{(1)}\ot [x_{(2)},y] + [x_{(1)},y]\ot x_{(2)} + [x,y_{(1)}]\ot y_{(2)} + y_{(1)}\ot [x,y_{(2)}]
\label{eq:cmp}
\end{equation}
in Sweedler's notation.
Even though it does not fit in the frame of the generalized bialgebra of Loday in the sense that it gives a good triple in~\cite{Lo}, it has a ``unitarization" called {\bf Lie$^c$-Lie-bialgebra} that does. Its compatibility condition is $$
\Delta[x,y]=2(x\otimes y-y\otimes x) + \frac{1}{2}(x_{(1)}\ot [x_{(2)},y] + [x_{(1)},y]\ot x_{(2)} + [x,y_{(1)}]\ot y_{(2)} + y_{(1)}\ot [x,y_{(2)}]).
$$
In a similar way, a half of the compatibility condition in Eq.~(\ref{eq:cmp})
$$ \Delta[x,y] = [x_{(1)},y]\ot x_{(2)} + [x,y_{(1)}]\ot y_{(2)}$$
has the unitarization

$$ \Delta[x,y]= x\ot y + [x_{(1)},y]\ot x_{(2)} + [x,y_{(1)}]\ot y_{(2)}.$$
This is the compatibility condition of the co-variation of the {\bf NAP}$^c$-{\bf PreLie}-bialgebra of Livernet which is a good triple.
Likewise, the other half of Eq.~(\ref{eq:cmp})
$$\Delta[x,y] = x_{(1)}\ot [x_{(2)},y] + y_{(1)}\ot [x,y_{(2)}] $$
has its unitarization the compatibility condition of another good triple.

In fact, we can put the four terms of Eq.~(\ref{eq:cmp}) in a diagram
$$
\xymatrix{ x_{(1)}\ot [x_{(2)},y] \ar@{-}[rr]^{\text{opp Liv}} \ar@{-}[dd]_{\text{inf}} \ar@{-}[rrdd]^(.3){\Delta_2}&& [x_{(1)},y]\ot x_{(2)} \ar@{-}[dd]^{\text{opp inf}} \\
&& \\
[x,y_{(1)}]\ot y_{(2)} \ar@{-}[rr]^{\text{Liv}} \ar@{-}[rruu]^(.3){\Delta_1} &&  y_{(1)}\ot [x,y_{(2)}].}
$$
Then the sum of the left (resp. right) two terms is the compatibility condition of the infinitesimal (resp. opposite infinitesimal) operads. The sum of the bottom (resp. top) two terms are compatibility of the Livernet (resp. opposite Livernet) operad. The sum of the diagonal (resp. opposite diagonal) two terms are for $\Delta_1$ (resp. $\Delta_2$) in our discussion in Eq.~(\ref{eq:conditionextra}).

\subsection{Coboundary \typeIs and the \tcybe}
\mlabel{ss:3cybe}
In this subsection, we study coboundary local cocycle 3-Lie bialgebras, i.e. construct a \typeI from an element $r\in A\otimes A$.
First we give some preliminary notations.

Let $A$ be a vector space. For any $T=x_1\otimes x_2\otimes
\cdots \otimes x_n\in \otimes^n A $ and $1\leq i<j\leq n$, define the $(ij)$-switching operator
$$\sigma_{ij} (T)=x_{1}\otimes \cdots\otimes x_{j}\otimes \cdots\otimes x_i\otimes\cdots
\otimes x_n.$$

For any $1\leq p\neq q\leq n$, define an inclusion $\cdot_{pq}:\otimes^2A\longrightarrow \otimes^n A$ by sending $r=\sum_i x_i\otimes y_i\in A\otimes A$ to
$$
r_{pq}:=\sum_i z_{i1}\otimes\cdots\otimes z_{in},\quad \text{ where } z_{ij}=\left\{\begin{array}{ll} x_i,& j=p,\\ y_i, & j=q, \\ 1, & i\neq p, q. \end{array} \right.
$$
In other words, $r_{pq}$ puts $x_i$ at the $p$-th position, $y_i$
at the $q$-th position and 1 elsewhere in an $n$-tensor, where 1
is the unit if $A$ is a unital algebra, otherwise,
1 is a symbol playing a similar role of unit.
For example, when $n=4$, we have
\begin{eqnarray*}
&&r_{12}=\sum_i x_i\otimes y_i\otimes 1\otimes 1\in A^{\otimes 4},\quad
r_{21}=\sum_i y_i\otimes x_i\otimes 1\otimes 1\in A^{\otimes 4}.
\end{eqnarray*}

When $A$ is a 3-Lie algebra with the 3-Lie bracket
$[\cdot,\cdot,\cdot]$, for any $r=\sum_ix_i\otimes y_i\in A\otimes
A$, we define $[[r,r,r]]\in \otimes^4 A$ by
\begin{eqnarray}
\label{eq:rrr}[[r,r,r]]&:=&[r_{12},r_{13},r_{14}]+[r_{12},r_{23},r_{24}]+[r_{13},r_{23},r_{34}]+[r_{14},r_{24},r_{34}]\\
\nonumber&=&\sum_{i,j,k}\big([x_i,x_j,x_k]\otimes y_i\otimes y_j\otimes y_k+x_i\otimes [y_i,x_j,x_k]\otimes y_j\otimes y_k\\
\nonumber&&+ x_i\otimes x_j\otimes [y_i, y_j,x_k]\otimes y_k+ x_i\otimes x_j\otimes x_k\otimes [y_i,y_j,y_k]\big).
\end{eqnarray}

For any $r=\sum_ix_i\otimes y_i\in A\otimes A$, set
\begin{equation}\label{eq:delta123}\left\{\begin{array}{ccc}
\Delta_1(x)&:=&\sum_{i,j} [x,x_i,x_j]\otimes y_j\otimes y_i;\\
\Delta_2(x)&:=&\sum_{i,j} y_i\otimes [x,x_i,x_j]\otimes y_j;\\
\Delta_3(x)&:=&\sum_{i,j} y_j\otimes y_i\otimes [x,x_i,x_j],
\end{array}\right.
\end{equation}
where $x\in A$.

\begin{lem}
With the above notations, we have
\begin{enumerate}
\item[\rm (i)] $\Delta_1$ is a $1$-cocycle associated to the representation $(A\otimes A\otimes A, {\rm ad}\otimes 1\otimes 1)$;
\item[\rm (ii)] $\Delta_2$ is a $1$-cocycle associated to the representation $(A\otimes A\otimes A,1\otimes {\rm ad}\otimes 1)$;
\item[\rm (iii)] $\Delta_3$ is a $1$-cocycle associated to the representation $(A\otimes A\otimes A,1\otimes 1\otimes {\rm ad})$.
\end{enumerate}
\end{lem}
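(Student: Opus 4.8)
The plan is to verify each of the three cocycle conditions directly from the defining formulas in Eq.~\eqref{eq:delta123}, using the Fundamental Identity for the 3-Lie bracket. I would first treat item (i) in detail, since items (ii) and (iii) are entirely analogous after relabeling tensor slots. Fix $r=\sum_i x_i\otimes y_i$. For the representation $(A\otimes A\otimes A,{\rm ad}\otimes 1\otimes 1)$, the 1-cocycle condition \eqref{eq:1-cocycle} for $\Delta_1$ reads
\begin{equation*}
\Delta_1([a,b,c])=({\rm ad}_{a,b}\otimes 1\otimes 1)\Delta_1(c)+({\rm ad}_{b,c}\otimes 1\otimes 1)\Delta_1(a)+({\rm ad}_{c,a}\otimes 1\otimes 1)\Delta_1(b),
\end{equation*}
for all $a,b,c\in A$. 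Expanding the left side via \eqref{eq:delta123} gives $\sum_{i,j}[[a,b,c],x_i,x_j]\otimes y_j\otimes y_i$, while the right side is $\sum_{i,j}\big([a,b,[c,x_i,x_j]]+[b,c,[a,x_i,x_j]]+[c,a,[b,x_i,x_j]]\big)\otimes y_j\otimes y_i$. Since the second and third tensor factors ($y_j\otimes y_i$) are untouched by the representation and identical on both sides, the required identity reduces to the purely bracket-level statement
\begin{equation*}
[[a,b,c],x_i,x_j]=[a,b,[c,x_i,x_j]]+[b,c,[a,x_i,x_j]]+[c,a,[b,x_i,x_j]]
\end{equation*}
holding for each pair $(i,j)$; but this is exactly the Fundamental Identity \eqref{eq:de1} (in the form where ${\rm ad}_{a,b}$ is a derivation, applied with $x_1=a,x_2=b$ and the remaining three slots filled by $c,x_i,x_j$, using skew-symmetry to arrange the terms). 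So item (i) follows.

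For item (ii), the representation acts on the middle slot, and $\Delta_2(x)=\sum_{i,j}y_i\otimes[x,x_i,x_j]\otimes y_j$. The same computation goes through: the outer factors $y_i$ and $y_j$ are spectators, and what remains is again the Fundamental Identity applied to the middle bracket $[x,x_i,x_j]$. Item (iii) is identical with the bracket in the third slot and $y_j\otimes y_i$ as spectators. In each case the key point is that only one tensor component is "active" under the relevant representation, so the cocycle identity for $\Delta_k$ collapses to the derivation property of ${\rm ad}$, i.e. the Fundamental Identity, tensored with the identity on the two passive components.

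There is essentially no obstacle here — the lemma is a bookkeeping exercise. The only mild care needed is in matching the sign conventions and the cyclic pattern of the three terms $\rho(x_1,x_2)f(x_3)+\rho(x_2,x_3)f(x_1)+\rho(x_3,x_1)f(x_2)$ in the definition of a 1-cocycle against the three terms produced by the Fundamental Identity when ${\rm ad}_{a,b}$ is moved past a bracket; this is handled by the skew-symmetry of $[\cdot,\cdot,\cdot]$ together with the identity ${\rm ad}_{c,a}=-{\rm ad}_{a,c}$. One should also note that the ordering $y_j\otimes y_i$ (rather than $y_i\otimes y_j$) in the definitions of $\Delta_1$ and $\Delta_3$ plays no role in the cocycle verification itself — it is chosen for later compatibility with the coboundary and the map $r\mapsto[[r,r,r]]$ — so it can be carried along untouched throughout. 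Writing out item (i) in full and then remarking that (ii) and (iii) follow "in the same way" (or "by a similar computation") is the intended level of detail.
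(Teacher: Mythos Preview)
Your proposal is correct and follows essentially the same approach as the paper: expand $\Delta_1([a,b,c])$ via \eqref{eq:delta123}, apply the Fundamental Identity to the first tensor factor, and recognize the three resulting terms as $({\rm ad}_{b,c}\otimes 1\otimes 1)\Delta_1(a)+({\rm ad}_{c,a}\otimes 1\otimes 1)\Delta_1(b)+({\rm ad}_{a,b}\otimes 1\otimes 1)\Delta_1(c)$, then remark that (ii) and (iii) are proved similarly. The paper's write-up is slightly terser but the argument is identical.
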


\begin{proof}
For all $x,y,z\in A$, we have
\begin{eqnarray*}
\Delta_1([x,y,z])&=& \sum_{i,j} [[x,y,z],x_i,x_j]\otimes y_j\otimes y_i\\
&=&\sum_{i,j} \left([[x,x_i,x_j],y,z]+[x,[y,x_i,x_j],z]+[x,y,[z,x_i,x_j]]\right)\otimes y_j\otimes y_i\\
&=&({\rm ad}_{y,z}\otimes 1\otimes 1)\Delta_1(x)+({\rm ad}_{z,x}\otimes 1\otimes 1)\Delta_1(y)+({\rm ad}_{x,y}\otimes 1\otimes 1)\Delta_1(z).
\end{eqnarray*}
Therefore, $\Delta_1$ is 1-cocycle  associated to the representation $(A\otimes A\otimes A,{\rm ad}\otimes 1\otimes 1)$. The other two statements can be proved similarly.
\end{proof}

\begin{pro}\label{pro:skew-symmetric}
Let $A$ be a $3$-Lie algebra and $r\in A\otimes A$.
Let $\Delta=\Delta_1+\Delta_2+\Delta_3$, where
$\Delta_1,\Delta_2,\Delta_3$ are induced by $r$ as in Eq.~\eqref{eq:delta123}. Then
$\Delta^*:A^*\otimes A^*\otimes A^*\rightarrow A^*$ defines a
skew-symmetric operation.
\end{pro}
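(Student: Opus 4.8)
The plan is to reduce the assertion to a symmetry property of $\Delta$ itself and then verify that property by a short computation with summation indices, using nothing but the skew-symmetry of the $3$-Lie bracket. First I would unwind the definition of the dual map: for $x\in A$ and $\alpha,\beta,\gamma\in A^*$ we have $\langle \Delta^*(\alpha\otimes\beta\otimes\gamma),x\rangle=\langle\alpha\otimes\beta\otimes\gamma,\Delta(x)\rangle$, and permuting the three arguments $\alpha,\beta,\gamma$ on the left corresponds to applying the corresponding switching operator $\sigma_{ij}$ to the three tensor slots of $\Delta(x)$ on the right. Hence $\Delta^*$ is skew-symmetric as a ternary operation on $A^*$ if and only if $\sigma_{12}\Delta(x)=\sigma_{23}\Delta(x)=-\Delta(x)$ for every $x\in A$; since the transpositions $(12)$ and $(23)$ generate $S_3$, it is enough to treat these two.

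Next I would run the computation. Writing $r=\sum_i x_i\otimes y_i$ and abbreviating $B_{ij}:=[x,x_i,x_j]$, the skew-symmetry of the bracket gives $B_{ij}=-B_{ji}$, while by Eq.~\eqref{eq:delta123} we have $\Delta_1(x)=\sum_{i,j}B_{ij}\otimes y_j\otimes y_i$, $\Delta_2(x)=\sum_{i,j}y_i\otimes B_{ij}\otimes y_j$, $\Delta_3(x)=\sum_{i,j}y_j\otimes y_i\otimes B_{ij}$. Applying $\sigma_{12}$ and then renaming the dummy indices $i\leftrightarrow j$ yields $\sigma_{12}\Delta_1(x)=-\Delta_2(x)$, $\sigma_{12}\Delta_2(x)=-\Delta_1(x)$, $\sigma_{12}\Delta_3(x)=-\Delta_3(x)$, so that $\sigma_{12}\Delta(x)=-\Delta(x)$; in the same way $\sigma_{23}\Delta_1(x)=-\Delta_1(x)$, $\sigma_{23}\Delta_2(x)=-\Delta_3(x)$, $\sigma_{23}\Delta_3(x)=-\Delta_2(x)$, whence $\sigma_{23}\Delta(x)=-\Delta(x)$. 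In each of these six identities the lone sign is produced exactly by $B_{ij}=-B_{ji}$ after the relabeling, so nothing about $r$ is used beyond $r\in A\otimes A$; in particular $\Delta(x)\in\wedge^3 A$, which gives a conceptual reason for the skew-symmetry of $\Delta^*$.

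I do not expect a genuine obstacle here: the content is essentially bookkeeping. The only points requiring care are the translation between a permutation of the arguments of $\Delta^*$ and the induced permutation of the tensor factors of $\Delta(x)$ (together with fixing the sign convention for a ``skew-symmetric operation''), and keeping straight which of $\Delta_1,\Delta_2,\Delta_3$ is carried to which by $\sigma_{12}$ and by $\sigma_{23}$ after relabeling the summation indices. Once that is organized, the six one-line identities above complete the proof.
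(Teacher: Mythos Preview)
Your proposal is correct and follows essentially the same approach as the paper: reduce to showing $\sigma_{12}\Delta(x)=-\Delta(x)$ and $\sigma_{23}\Delta(x)=-\Delta(x)$, then verify the six term-by-term identities $\sigma_{12}\Delta_1=-\Delta_2$, $\sigma_{12}\Delta_2=-\Delta_1$, $\sigma_{12}\Delta_3=-\Delta_3$ (and the analogous $\sigma_{23}$ relations) by relabeling the summation indices and using $[x,x_i,x_j]=-[x,x_j,x_i]$. The only difference is that you spell out the duality reduction and the $\sigma_{23}$ case explicitly, whereas the paper states the reduction without comment and handles $\sigma_{23}$ with a ``Similarly''.
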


\begin{proof}
We only need to prove that for all $x\in A$,
$$\Delta(x)+\sigma_{12}
\Delta(x)=0,\;\Delta(x)+\sigma_{23}\Delta(x)=0.$$ In fact, we have
\begin{eqnarray*}
\sigma_{12}\Delta_1(x)&=&\sum_{i,j}y_j\otimes [x,x_i,x_j]\otimes
y_i=\sum_{i,j}y_i\otimes [x,x_j,x_i]\otimes y_j=-\Delta_2(x);\\
\sigma_{12}\Delta_2(x)&=& \sum_{i,j}[x,x_i,x_j]\otimes y_i\otimes
y_j=-\Delta_1(x);\\
\sigma_{12}\Delta_3(x)&=& \sum_{i,j}y_i\otimes y_j\otimes
[x,x_i,x_j]=-\Delta_3(x).
\end{eqnarray*}
Hence $\sigma_{12} \Delta(x)=-\Delta(x)$. Similarly, we have
$\sigma_{23}
\Delta(x)=-\Delta(x)$. This completes the proof.
\end{proof}

\begin{rmk}
{\rm In fact, the above $\Delta_i$ ($i=1,2,3$) can be regarded as
a kind of ``1-coboundaries" that generalizes the one in Example~\ref{ex:d}. There, for $x\in A$, the derivation $$D(x):={\rm ad}(u,v)x=[u,v,x], \quad \forall u, v\in A, $$
defines a 1-cocycle on $A$.
Analogously, for $u,v,a,b\in A$, the linear map
$$\Delta_1':A\to \ot^3A, \quad x\mapsto {\rm ad}(u,v) x\otimes a \otimes b=[u,v,x]\otimes a\otimes b, \quad\forall x \in A,$$
is a 1-cocycle associated to $(A\otimes A\otimes A, {\rm
ad}\otimes 1\otimes 1 )$.
More generally, for four
families of elements $u_i,v_i,a_i,b_i\in A$, we define the 1-cocycle associated to
$(A\otimes A\otimes A, {\rm ad}\otimes 1\otimes 1 )$:
$$\Delta_1'(x)=\sum_i {\rm ad}_{u_i,v_i} x\otimes a_i \otimes b_i=
\sum_i[u_i,v_i,x]\otimes a_i\otimes b_i,\quad\forall x\in A.$$
Similarly, for $u_j',v_j',a_j',b_j', u_k'',v_k'',a_k'',
b_k''\in A$, the linear maps $$\Delta_2'(x)=\sum_j a_j'\otimes {\rm
ad}_{u_j',v_j'} x\otimes b_j',\;\; \Delta_3'(x)=\sum_k
a_k''\otimes b_k''\otimes {\rm ad}_{u_k'',v_k''} x,\;\;\forall x\in
A$$ are $1$-cocycles associated to $(A\otimes A\otimes A,1\otimes
{\rm ad}\otimes 1)$ and $(A\otimes A\otimes A,1\otimes 1\otimes
{\rm ad})$ respectively.

Moreover, set
$$\Delta=\Delta_1'+\Delta_2'+\Delta_3'.$$
From the proof of Proposition \ref{pro:skew-symmetric}, in order
for $\Delta^*$ to define a ``natural" skew-symmetric operation,
there should be some constraint conditions for the choice of the
above elements $u_i,v_i,a_i,b_i, u_j',v_j',a_j',b_j',
u_k'',v_k'',a_k'', b_k''\in A$. Here, ``natural" means that there
should not be any additional condition for the skew-symmetry of
$\Delta^*$. In particular, by a straightforward observation, it
seems reasonable to assume that the following conditions should be
satisfied: (the following sets are multi-sets, in the sense that elements can repeat in each of them)
\begin{enumerate}
\item The sets $\{u_j',v_j',a_j',b_j'\}$,
$\{u_k'',v_k'',a_k'', b_k''\}$ and
$\{u_i,v_i,a_i,b_i\}$ coincide.
\item The sets $\{u_i\}$ and $\{v_i\}$ coincide; while the sets $\{a_i\}$ and $\{b_i\}$ coincide.
\end{enumerate}
Note that the above two conditions force
$\Delta_1',\Delta_2',\Delta_3'$ to depend on only two family of
elements $\{u_i\}$ and $\{a_i\}$. With some more constraints on the indices
involving $u_i=x_i, a_i=y_i$, the $\Delta_i$ ($i=1,2,3$) given in
Eq.~(\ref{eq:delta123}) are  what we need in the above sense.

In the sequel, we will apply the notation $r=\sum_i
x_i\otimes y_i$ to represent the two families of elements $x_i,y_i\in A$ when they are used to define the cocycles $\Delta_i$ ($i=1,2,3$).
One of
the advantages of using the notation $r$ here is that $\Delta_i$
can be expressed more concisely and conventionally as
$$\Delta_1(x)=\sum_{i,j} [x_i,x_j, x]\otimes y_j\otimes
y_i=:x.[r_{13},r_{12}], \Delta_2(x)=x.[r_{21},r_{23}],
\Delta_3(x)=x.[r_{32},r_{31}],\quad \forall x\in A.$$

}
\end{rmk}

The following result is straightforward to verify.
\begin{lem}
Let $V$ be a vector space and $\Delta:V\rightarrow V\otimes
V\otimes V$  a linear map. Then $\Delta^*:V^*\otimes V^*\otimes
V^*\rightarrow V^*$ defines a $3$-Lie algebra structure on $V^*$ if and only if
$\Delta^*$ is a skew-symmetric operation and $\Delta$ satisfies
\begin{equation}\label{eq:co}
(\Delta\otimes 1\otimes 1)\Delta(x)
+\sigma_{23}\sigma_{12}((1\otimes \Delta\otimes 1)\Delta(x))+\sigma_{13}\sigma_{24}((1\otimes 1\otimes\Delta )\Delta(x))-(1\otimes 1
\otimes \Delta)\Delta(x)=0,
\end{equation}
for $x\in A$.
\end{lem}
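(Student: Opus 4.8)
The plan is to reduce the statement to the elementary duality between ternary operations on a finite-dimensional space and the co-operations on its dual, the only real work being the bookkeeping of permutations of tensor factors. Write $\mu:=\Delta^*\colon\otimes^3 V^*\to V^*$. By definition $(V^*,\mu)$ is a $3$-Lie algebra exactly when $\mu$ is skew-symmetric and satisfies the Fundamental Identity~\eqref{eq:de1}, and the skew-symmetry of $\mu$ is literally the skew-symmetry clause on $\Delta^*$ in the statement. Thus the content to be proved is that, for an arbitrary linear map $\Delta\colon V\to\otimes^3 V$ (skew-symmetry playing no role here), the Fundamental Identity for $\mu$ is equivalent to Eq.~\eqref{eq:co} for $\Delta$.

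First I would rewrite the Fundamental Identity for $\mu$ as an equality of linear maps $\otimes^5 V^*\to V^*$, all applied to arguments in the fixed order $(\alpha_1,\dots,\alpha_5)$. The left-hand side $[\alpha_1,\alpha_2,[\alpha_3,\alpha_4,\alpha_5]]$ equals $\mu\circ(1\otimes 1\otimes\mu)$, and the first term on the right equals $\mu\circ(\mu\otimes 1\otimes 1)$. For the other two terms the inner bracket occupies the second or third slot, so one precomposes with a permutation of the five factors: $[\alpha_3,[\alpha_1,\alpha_2,\alpha_4],\alpha_5]=\mu\circ(1\otimes\mu\otimes 1)\circ\sigma_{12}\circ\sigma_{23}$ and $[\alpha_3,\alpha_4,[\alpha_1,\alpha_2,\alpha_5]]=\mu\circ(1\otimes 1\otimes\mu)\circ\sigma_{13}\circ\sigma_{24}$, where one checks directly that $\sigma_{12}\circ\sigma_{23}$ (resp. $\sigma_{13}\circ\sigma_{24}$) sends $\alpha_1\otimes\cdots\otimes\alpha_5$ to $\alpha_3\otimes\alpha_1\otimes\alpha_2\otimes\alpha_4\otimes\alpha_5$ (resp. $\alpha_3\otimes\alpha_4\otimes\alpha_1\otimes\alpha_2\otimes\alpha_5$). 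Then I would dualize this equality of maps, using $(f\circ g)^*=g^*\circ f^*$, $(f\otimes g)^*=f^*\otimes g^*$, $\mu^*=\Delta$, the canonical identification $(\otimes^5 V^*)^*\cong\otimes^5 V$, and $(\sigma_{ij})^*=\sigma_{ij}$ (a transposition is its own inverse). The left side becomes $(1\otimes 1\otimes\Delta)\Delta$, the first right-side term becomes $(\Delta\otimes 1\otimes 1)\Delta$, and the two permutation-composites become $\sigma_{23}\sigma_{12}\bigl((1\otimes\Delta\otimes 1)\Delta\bigr)$ and $\sigma_{24}\sigma_{13}\bigl((1\otimes 1\otimes\Delta)\Delta\bigr)$; since $\sigma_{13}$ and $\sigma_{24}$ have disjoint supports the latter equals $\sigma_{13}\sigma_{24}\bigl((1\otimes 1\otimes\Delta)\Delta\bigr)$. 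Transposing all terms to one side gives precisely Eq.~\eqref{eq:co}. Because $V$ is finite-dimensional, a linear map vanishes iff its transpose does, so the two conditions are equivalent, which together with the skew-symmetry clause proves the lemma. Equivalently, one may pair both sides of~\eqref{eq:co} with an arbitrary $\alpha_1\otimes\cdots\otimes\alpha_5\in\otimes^5 V^*$ and recognize the Fundamental Identity evaluated at $\alpha_1,\dots,\alpha_5$; I would present whichever version reads more cleanly.

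I expect the only obstacle to be the permutation bookkeeping: one must verify that inserting the inner bracket into the three admissible positions corresponds to precomposition with $1$, with $\sigma_{12}\sigma_{23}$, and with $\sigma_{13}\sigma_{24}$, and that dualization — which reverses the order of composition and inverts each factor, hence is harmless on transpositions and merely reorders the commuting pair $\sigma_{13},\sigma_{24}$ — reproduces exactly the operators $\sigma_{23}\sigma_{12}$ and $\sigma_{13}\sigma_{24}$ appearing in~\eqref{eq:co}. Everything else is the formal duality between a $3$-ary operation and its co-operation, so no genuinely new idea is required.
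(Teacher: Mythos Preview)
Your proposal is correct: the paper itself offers no proof beyond the sentence ``The following result is straightforward to verify,'' and your duality argument is exactly the intended straightforward verification. Your bookkeeping of the permutations $\sigma_{12}\sigma_{23}$ and $\sigma_{13}\sigma_{24}$ and their behavior under transpose is accurate, so there is nothing to compare.
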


With this preparation, we can begin our discussion on the 3-classical Yang-Baxter equation.
We introduce a notation before the next theorem.
For $a\in A$ and $1\leq i\leq 5$, define the linear map $\otimes_i a:\otimes^4 A\longrightarrow \otimes^5 A$ by inserting $a$ at the $i$-th position.
For example, for any $t=t_1\otimes t_2\otimes t_3\otimes t_4$, we have
$t\otimes_2 a= t_1\otimes a\otimes t_2\otimes t_3\otimes t_4.$

\begin{thm}\label{thm:Condition on r} Let $A$ be a $3$-Lie algebra and $r=\sum_i x_i\otimes y_i\in A\otimes A$.
Define the linear map $\Delta=\Delta_1+\Delta_2+\Delta_3:
A\rightarrow A\otimes A\otimes A$, where
$\Delta_1,\Delta_2,\Delta_3$ are given by Eq.~\eqref{eq:delta123}. Then
$\Delta^*:A^*\otimes A^*\otimes A^*\rightarrow A^*$ defines
a $3$-Lie algebra structure if and only if for any $x\in A$, the following equation holds:
\begin{eqnarray*}
&&\sum_i ({\rm ad}_{x_i, x}\otimes 1\otimes 1\otimes 1 \otimes 1) ([[r,r,r]]_1\otimes_2 y_i)+\sum_i (1\otimes{\rm ad}_{x,x_i}\otimes 1\otimes 1\otimes 1) ([[r,r,r]]_1\otimes_1 y_i)\\
&&+\sum_i (1\otimes 1\otimes{\rm ad}_{x,x_i}\otimes 1\otimes 1 ) ([[r,r,r]]_2\otimes_5 y_i)+\sum_i (1\otimes 1\otimes{\rm ad}_{x_i,x}\otimes 1\otimes 1 ) ([[r,r,r]]_2\otimes_4 y_i)\\
&&+\sum_i (1\otimes 1\otimes 1\otimes {\rm ad}_{x,x_i}\otimes 1 ) ([[r,r,r]]_2\otimes_3 y_i)+\sum_i (1\otimes 1\otimes 1\otimes {\rm ad}_{x_i,x}\otimes 1 ) ([[r,r,r]]_3\otimes_5 y_i)\\
&&+\sum_i (1\otimes 1\otimes 1\otimes 1 \otimes {\rm ad}_{x,x_i}) ([[r,r,r]]_3\otimes_4 y_i)+\sum_i (1\otimes 1\otimes 1\otimes 1\otimes{\rm ad}_{x_i,x}) ([[r,r,r]]_3\otimes_3 y_i)\\
&&=0,
\end{eqnarray*}
where
\begin{eqnarray*}~[[r,r,r]]_1&:=&[r_{12},r_{13},r_{14}]+[r_{12},r_{23},r_{24}]-[r_{13}, r_{32},r_{34}]+[r_{14},r_{42},r_{43}];\\
~[[r,r,r]]_2&:=&[r_{12},r_{31},r_{14}]-[r_{21},r_{32},r_{24}]-[r_{31}, r_{32},r_{34}]-[r_{41},r_{42},r_{34}];\\
~[[r,r,r]]_3&:=&-[r_{12},r_{13},r_{41}]+[r_{21},r_{23},r_{42}]-[r_{31},r_{32},r_{43}]-[r_{41},r_{42},r_{43}].
\end{eqnarray*}
\end{thm}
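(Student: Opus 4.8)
The plan is to reduce the statement to a single identity in $A^{\otimes 5}$ and then verify that identity by a direct expansion. By the lemma immediately preceding the theorem, $\Delta^*$ endows $A^*$ with a $3$-Lie algebra structure if and only if $\Delta^*$ is a skew-symmetric operation and $\Delta$ satisfies the co-Fundamental Identity~\eqref{eq:co}. Since $\Delta=\Delta_1+\Delta_2+\Delta_3$ with $\Delta_1,\Delta_2,\Delta_3$ given by~\eqref{eq:delta123}, Proposition~\ref{pro:skew-symmetric} already supplies the skew-symmetry of $\Delta^*$. Hence it suffices to show that the left-hand side of~\eqref{eq:co} equals the eight-term expression displayed in the theorem; the claimed equivalence then follows at once, since ``$\Delta^*$ is a $3$-Lie structure'' $\Longleftrightarrow$ ``\eqref{eq:co} holds'' $\Longleftrightarrow$ ``that expression vanishes''.

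To expand the left-hand side of~\eqref{eq:co}, substitute $\Delta=\Delta_1+\Delta_2+\Delta_3$ into each of the three compositions $(\Delta\otimes 1\otimes 1)\Delta(x)$, $(1\otimes \Delta\otimes 1)\Delta(x)$ and $(1\otimes 1\otimes \Delta)\Delta(x)$, obtaining nine blocks apiece, hence twenty-seven blocks in all (the last composition occurring twice in~\eqref{eq:co}, once preceded by $\sigma_{13}\sigma_{24}$). Each block is a fourfold sum over two independent copies of $r=\sum_i x_i\otimes y_i$, lives in $A^{\otimes 5}$, and has brackets nested to depth two. In a block of the shape $(\Delta_a\otimes 1\otimes 1)\Delta_b(x)$ (and in its analogues with the outer $\Delta_a$ in the second or third slot), the inner map $\Delta_b$ puts the single $x$-dependent bracket $[x,x_i,x_j]$ in one tensor slot and bare elements $y_\bullet$ in the others; if the outer $\Delta_a$ acts on the bracket slot, rewrite $\Delta_a([x,x_i,x_j])$ by the $1$-cocycle identity~\eqref{eq:1-cocycle} for $\Delta_a$ (proved for each $\Delta_a$ above, and itself a form of the Fundamental Identity~\eqref{eq:de1}), moving an adjoint $\ad_{x,x_\bullet}$ or $\ad_{x_\bullet,x}$ to the front; otherwise substitute $\Delta_a(y_i)$ directly. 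Then apply the Fundamental Identity~\eqref{eq:de1} and Proposition~\ref{pro:someequalities} repeatedly to bring every summand into the normal form in which its entire $x$-dependence sits in a single outermost adjoint $\ad_{x,x_\bullet}$ or $\ad_{x_\bullet,x}$ acting on an iterated bracket built purely from the $x_\bullet$'s and $y_\bullet$'s, with bare $y_\bullet$'s in the remaining slots.

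The final step is to collect and recognise. Sort the resulting summands by which of the five slots carries the surviving adjoint action — first applying the permutations $\sigma_{23}\sigma_{12}$ and $\sigma_{13}\sigma_{24}$ so that slots can be compared — and within each group use identities (a) and (b) of Proposition~\ref{pro:someequalities} to reorganise the outer brackets. The iterated brackets that survive are precisely the combinations $[[r,r,r]]_1$, $[[r,r,r]]_2$ and $[[r,r,r]]_3$ (the signed, re-indexed variants of $[[r,r,r]]$ from~\eqref{eq:rrr}), each tensored with a spectator $y_i$ in the appropriate slot and acted on by the matching $\ad_{x,x_i}$ or $\ad_{x_i,x}$ — which is exactly the displayed equation. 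It streamlines the computation to work in the shorthand $\Delta_1(x)=x.[r_{13},r_{12}]$, $\Delta_2(x)=x.[r_{21},r_{23}]$, $\Delta_3(x)=x.[r_{32},r_{31}]$ noted in the remark above, so that composing $\Delta$ with itself literally produces iterated brackets of the $r_{pq}$'s; a useful consistency check is that both sides have bracket-depth two and are of degree four in $r$.

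The conceptual input is modest — it is just the co-Fundamental Identity together with the cocycle property of the $\Delta_a$ and the Fundamental Identity — so the real obstacle is the bookkeeping: some twenty-seven blocks, each an element of $A^{\otimes 5}$ carrying several summation indices, with tensor positions to be tracked through the switching operators and with signs from the skew-symmetry of the $3$-bracket and from the difference between $r_{pq}$ and $r_{qp}$. I would carry this out one composition at a time, recording the output slot by slot and matching it against $[[r,r,r]]_1,[[r,r,r]]_2,[[r,r,r]]_3$ only at the end, using consistency with the skew-symmetry in the first three slots (Proposition~\ref{pro:skew-symmetric}) as a running sanity check. The genuinely delicate point, which has no counterpart in the classical Lie-bialgebra computation, is that the mixed blocks $(\Delta_a\otimes\cdots)\Delta_b$ with $a\neq b$ must combine correctly; this is where the precise index conventions in~\eqref{eq:delta123} — the transposition $y_j\otimes y_i$ in $\Delta_1$ and its analogues — are indispensable, and where most of the effort lies.
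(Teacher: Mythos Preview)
Your proposal is correct and follows essentially the same route as the paper: reduce via Proposition~\ref{pro:skew-symmetric} to the co-Fundamental Identity~\eqref{eq:co}, expand, sort the summands by which of the five tensor slots carries the $x$-dependent factor (the paper calls these groups $G_1,\dots,G_5$), and use Proposition~\ref{pro:someequalities} together with the Fundamental Identity to collapse each group into one or two of the eight displayed terms. One small slip in your bookkeeping: the co-FI~\eqref{eq:co} has four summands, so the total is $36$ blocks (as the paper notes), not $27$; this does not affect the argument.
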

\begin{proof}  By Proposition \ref{pro:skew-symmetric}, $\Delta^*$ is skew-symmetric. Thus we only need to give the condition for which Eq.~\eqref{eq:co} holds. Since each $\Delta$ contains three terms, there are 36 terms in Eq.~(\ref{eq:co}). Let $G_i, 1\leq i\leq 5,$ denote the sum of these terms where $x$ is at the $i$-th position in the 5-tensors. We then obtain
$$
G_1+G_2+G_3+G_4+G_5=0.
$$
There are 6 terms in $G_1$:
\begin{eqnarray*}
  G_1=G_{11}+G_{12}+G_{13}+G_{14}+G_{15}+G_{16},
\end{eqnarray*}
where
{\footnotesize
$$
\begin{array}{ll}
G_{11}=\sum_{ijkl}[[x,x_i,x_j],x_k,x_l]\otimes y_l\otimes y_k\otimes y_j\otimes y_i,
    &G_{12}=\sum_{ijkl}[[x,x_i,x_j],x_k,x_l]\otimes y_l\otimes y_i\otimes y_k\otimes y_j,\\
      G_{13}=\sum_{ijkl}[[x,x_i,x_j],x_k,x_l]\otimes y_l\otimes y_j\otimes y_i\otimes y_k,
       & G_{14}=-\sum_{ijkl}[x,x_i,x_j]\otimes y_j\otimes [y_i,x_k,x_l]\otimes y_l\otimes y_k,\\
          G_{15}=-\sum_{ijkl}[x,x_i,x_j]\otimes y_j\otimes y_k\otimes [y_i,x_k,x_l]\otimes y_l,
            &G_{16}=-\sum_{ijkl}[x,x_i,x_j]\otimes y_j\otimes y_l\otimes y_k\otimes [y_i,x_k,x_l].
\end{array}
$$
}
By Condition (a) in Proposition \ref{pro:someequalities}, we have
\begin{eqnarray*}
  G_{11}+G_{12}+G_{13}&=&\sum_{ijkl}[[x_i,x_j,x_k],x_,x_l]\otimes y_l\otimes y_k\otimes y_j\otimes y_i\\
  &=&\sum_{ijkl}(\ad_{x_l,x}\otimes 1\otimes 1\otimes 1\otimes 1) [x_k,x_j,x_i]\otimes y_l\otimes y_k\otimes y_j\otimes y_i\\
  &=&\sum_l(\ad_{x_l,x}\otimes 1\otimes 1\otimes 1\otimes 1) [r_{12},r_{13},r_{14}]\otimes_2 y_l.
\end{eqnarray*}
Furthermore, we have
\begin{eqnarray*}
 G_{14}&=&\sum_{ijkl}(\ad_{x_j,x}\otimes 1\otimes 1\otimes 1\otimes 1)x_i\otimes y_j\otimes [y_i,x_l,x_k]\otimes y_l\otimes y_k\\
 &=&\sum_j(\ad_{x_j,x}\otimes 1\otimes 1\otimes 1\otimes 1)[r_{12},r_{23},r_{24}]\otimes_2y_j,
\end{eqnarray*}
and similarly,
\begin{eqnarray*}
 G_{15}&=&-\sum_j(\ad_{x_j,x}\otimes 1\otimes 1\otimes 1\otimes 1)[r_{13},r_{32},r_{34}]\otimes_2y_j,\\
 G_{16}&=&\sum_j(\ad_{x_j,x}\otimes 1\otimes 1\otimes 1\otimes 1)[r_{14},r_{42},r_{43}]\otimes_2y_j.
\end{eqnarray*}
Therefore, we obtain
$$
G_1=\sum_i ({\rm ad}_{x_i, x}\otimes 1\otimes 1\otimes 1 \otimes 1) ([[r,r,r]]_1\otimes_2 y_i).
$$
In a similar manner, we have
$$
G_2=\sum_i (1\otimes{\rm ad}_{x,x_i}\otimes 1\otimes 1\otimes 1) ([[r,r,r]]_1\otimes_1 y_i).
$$

There are 8 terms in $G_3$:
\begin{eqnarray*}
  G_3=G_{31}+G_{32}+G_{33}+G_{34}+G_{35}+G_{36}+G_{37}+G_{38},
\end{eqnarray*}
where {\footnotesize
$$
\begin{array}{ll}
 G_{31}=\sum_{ijkl}y_l\otimes y_k\otimes[[x,x_i,x_j],x_k,x_l]\otimes y_j\otimes y_i,
   & G_{32}=-\sum_{ijkl}y_j\otimes y_i\otimes[[x,x_i,x_j],x_k,x_l]\otimes y_l\otimes y_k,\\
      G_{33}=\sum_{ijkl} [y_j,x_k,x_l]\otimes y_l\otimes [x,x_i,x_j]\otimes y_k\otimes y_i,
       & G_{34}=\sum_{ijkl} y_k\otimes [y_j,x_k,x_l]\otimes[x,x_i,x_j]\otimes y_l\otimes y_i,\\
          G_{35}=\sum_{ijkl}y_l\otimes y_k\otimes[x,x_i,x_j]\otimes[y_j,x_k,x_l]\otimes y_i,
           & G_{36}=\sum_{ijkl}[y_i,x_k,x_l]\otimes y_l\otimes [x,x_i,x_j]\otimes y_j\otimes y_k,\\
            G_{37}=\sum_{ijkl}y_k\otimes[y_i,x_k,x_l]\otimes[x,x_i,x_j]\otimes y_j\otimes y_l,
            &G_{38}=\sum_{ijkl} y_l\otimes y_k\otimes[x,x_i,x_j]\otimes y_j\otimes[y_i,x_k,x_l].
\end{array}
$$
 }
We have
\begin{eqnarray*}
  G_{31}+G_{32}&=&\sum_{ijkl}y_l\otimes y_k\otimes([[x,[x_i,x_k,x_l],x_j]+[x,x_i,[x_j,x_k,x_l]])\otimes y_j\otimes y_i\\
&=&  -\sum_{ijkl}(1\otimes 1\otimes \ad_{x_j,x}\otimes 1\otimes1)y_l\otimes y_k\otimes[ x_l,x_k,x_i]\otimes y_j\otimes y_i\\
&&-\sum_{ijkl}(1\otimes 1\otimes \ad_{x,x_i}\otimes 1\otimes1)y_l\otimes y_k\otimes[ x_l,x_k,x_j]\otimes y_j\otimes y_i\\
&=&-  \sum_j(1\otimes 1\otimes \ad_{x_j,x}\otimes 1\otimes1)[r_{31},r_{32},r_{34}]\otimes_4y_j\\
&&-  \sum_i(1\otimes 1\otimes \ad_{x,x_i}\otimes 1\otimes1)[r_{31},r_{32},r_{34}]\otimes_5y_i.
\end{eqnarray*}
Furthermore, we have
\begin{eqnarray*}
  G_{33}+G_{36}&=&\sum_{ijkl}(1\otimes 1\otimes \ad_{x,x_i}\otimes 1\otimes1)[x_l,y_j,x_k]\otimes y_l\otimes x_j\otimes y_k\otimes y_i\\
  &&+\sum_{ijkl}(1\otimes 1\otimes \ad_{x_j,x}\otimes 1\otimes1)[x_l,y_i,x_k]\otimes y_l\otimes x_i\otimes y_j\otimes y_k\\
  &=&\sum_i(1\otimes 1\otimes \ad_{x,x_i}\otimes 1\otimes1)[r_{12},r_{31},r_{14}]\otimes_5 y_i\\
  &&+\sum_{j}(1\otimes 1\otimes \ad_{x_j,x}\otimes 1\otimes1)[r_{12},r_{31},r_{14}]\otimes_4 y_j,
\end{eqnarray*}
and similarly,
\begin{eqnarray*}
  G_{34}+G_{37}&=&-\sum_i(1\otimes 1\otimes \ad_{x,x_i}\otimes 1\otimes1)[r_{21},r_{32},r_{24}]\otimes_5 y_i\\
  &&-\sum_i(1\otimes 1\otimes \ad_{x_j,x}\otimes 1\otimes1)[r_{21},r_{32},r_{24}]\otimes_4 y_j,\\
  G_{35}+G_{38}&=&-\sum_{i}(1\otimes 1\otimes \ad_{x,x_i}\otimes 1\otimes1)[r_{41},r_{42},r_{34}]\otimes_5 y_i\\
  &&-\sum_{j}(1\otimes 1\otimes \ad_{x_j,x}\otimes 1\otimes1)[r_{41},r_{42},r_{34}]\otimes_4 y_j.
\end{eqnarray*}
Therefore, we obtain
$$G_3=\sum_i (1\otimes 1\otimes{\rm ad}_{x,x_i}\otimes 1\otimes 1 ) ([[r,r,r]]_2\otimes_5 y_i)+\sum_i (1\otimes 1\otimes{\rm ad}_{x_i,x}\otimes 1\otimes 1 ) ([[r,r,r]]_2\otimes_4 y_i).$$
We similarly obtain
\begin{eqnarray*}
  G_4=\sum_i \Big(1\otimes 1\otimes 1 \otimes{\rm ad}_{x,x_i}\otimes 1 ) ([[r,r,r]]_2\otimes_3 y_i)+ (1\otimes 1\otimes 1\otimes {\rm ad}_{x_i,x}\otimes 1) ([[r,r,r]]_3\otimes_5 y_i)\Big),
  \end{eqnarray*}
  \begin{eqnarray*}
  G_5=\sum_i \Big( (1\otimes 1\otimes 1\otimes 1 \otimes {\rm ad}_{x,x_i}) ([[r,r,r]]_3\otimes_4 y_i)+ (1\otimes 1\otimes 1\otimes 1 \otimes{\rm ad}_{x_i,x}) ([[r,r,r]]_3\otimes_3 y_i)\Big).
\end{eqnarray*}
This completes the proof.
\end{proof}

A direct checking gives
\begin{lem} With the notations above, if $r$ is skew-symmetric, then
$$[[r,r,r]]_1=[[r,r,r]],\;\;[[r,r,r]]_2=-[[r,r,r]],\;\;[[r,r,r]]_3=[[r,r,r]].$$
\end{lem}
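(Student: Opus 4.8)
The plan is to verify the three claimed identities by direct substitution of the skew-symmetry hypothesis $r = \sum_i x_i\otimes y_i = -\sum_i y_i\otimes x_i$ into the defining formulas for $[[r,r,r]]_1$, $[[r,r,r]]_2$, $[[r,r,r]]_3$ given in Theorem~\ref{thm:Condition on r}, and comparing with the expansion of $[[r,r,r]]$ in Eq.~\eqref{eq:rrr}. The key observation is that skew-symmetry of $r$ means $r_{pq} = -r_{qp}$ as elements of $\otimes^4 A$ (swapping the roles of the $x_i$'s and $y_i$'s is the same as swapping the two tensor slots), so every $r_{qp}$ appearing in the formulas for $[[r,r,r]]_k$ can be replaced by $-r_{pq}$, bringing each bracket into the ``sorted-index'' form that appears in Eq.~\eqref{eq:rrr}.

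Concretely, I would proceed term by term. For $[[r,r,r]]_1 = [r_{12},r_{13},r_{14}] + [r_{12},r_{23},r_{24}] - [r_{13},r_{32},r_{34}] + [r_{14},r_{42},r_{43}]$: the first two terms already match the first two terms of Eq.~\eqref{eq:rrr}; for the third, write $r_{32} = -r_{23}$, so $-[r_{13},r_{32},r_{34}] = [r_{13},r_{23},r_{34}]$; for the fourth, write $r_{42}=-r_{24}$ and $r_{43}=-r_{34}$, so $[r_{14},r_{42},r_{43}] = (-1)^2[r_{14},r_{24},r_{34}] = [r_{14},r_{24},r_{34}]$. Summing gives exactly $[[r,r,r]]$. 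For $[[r,r,r]]_3 = -[r_{12},r_{13},r_{41}] + [r_{21},r_{23},r_{42}] - [r_{31},r_{32},r_{43}] - [r_{41},r_{42},r_{43}]$, one similarly replaces $r_{41}=-r_{14}$, $r_{21}=-r_{12}$, $r_{42}=-r_{24}$, $r_{31}=-r_{13}$, $r_{32}=-r_{23}$, $r_{43}=-r_{34}$, counting signs: the first term contributes $+[r_{12},r_{13},r_{14}]$ wait—$-[r_{12},r_{13},r_{41}] = -(-1)[r_{12},r_{13},r_{14}] = [r_{12},r_{13},r_{14}]$; the second $[r_{21},r_{23},r_{42}] = (-1)(-1)[r_{12},r_{23},r_{24}] = [r_{12},r_{23},r_{24}]$; the third $-[r_{31},r_{32},r_{43}] = -(-1)^3[r_{13},r_{23},r_{34}] = [r_{13},r_{23},r_{34}]$; the fourth $-[r_{41},r_{42},r_{43}] = -(-1)^3[r_{14},r_{24},r_{34}] = [r_{14},r_{24},r_{34}]$, again totalling $[[r,r,r]]$. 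The same bookkeeping applied to $[[r,r,r]]_2 = [r_{12},r_{31},r_{14}] - [r_{21},r_{32},r_{24}] - [r_{31},r_{32},r_{34}] - [r_{41},r_{42},r_{34}]$ should produce an overall global sign $-1$ relative to $[[r,r,r]]$; I would check each term: $[r_{12},r_{31},r_{14}] = -[r_{12},r_{13},r_{14}]$, $-[r_{21},r_{32},r_{24}] = -(-1)(-1)[r_{12},r_{23},r_{24}] = -[r_{12},r_{23},r_{24}]$, $-[r_{31},r_{32},r_{34}] = -(-1)^2[r_{13},r_{23},r_{34}] = -[r_{13},r_{23},r_{34}]$, $-[r_{41},r_{42},r_{34}] = -(-1)^2[r_{14},r_{24},r_{34}] = -[r_{14},r_{24},r_{34}]$, giving $-[[r,r,r]]$ as claimed.

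There is essentially no conceptual obstacle here—the lemma is a pure bookkeeping exercise—so the main point of care is simply tracking the parity of the number of slot-transpositions applied inside each ternary bracket (the $3$-Lie bracket is skew-symmetric in its three arguments, but here the signs all come from the $r_{pq}\mapsto r_{qp}$ replacements in the tensor factors, not from permuting the bracket entries), and making sure the identification $r_{pq} = -r_{qp}$ is used consistently. One subtlety worth stating explicitly at the outset is why $r$ skew-symmetric gives $r_{pq}=-r_{qp}$: if $r = \sum_i x_i\otimes y_i$ with $\sum_i x_i\otimes y_i = -\sum_i y_i\otimes x_i$, then $r_{qp} = \sum_i (\text{$y_i$ at slot $q$, $x_i$ at slot $p$}) = \sum_i(\text{$x_i$ at slot $q$, $y_i$ at slot $p$ after relabelling}) = -r_{pq}$, where the relabelling uses precisely the skew-symmetry. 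With this in hand, the three equalities follow immediately by the term-by-term comparison sketched above.
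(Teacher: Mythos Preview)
Your proposal is correct and is precisely the ``direct checking'' the paper alludes to (the paper gives no details beyond that phrase). The key identity $r_{pq}=-r_{qp}$ under skew-symmetry and the term-by-term sign bookkeeping you carry out are exactly what is needed; aside from the informal ``wait---'' aside, the argument is complete.
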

We then have
\begin{cor}
Let $A$ be a $3$-Lie algebra and $r=\sum_i x_i\otimes y_i\in
A\otimes A$ skew-symmetric. Define the linear map
$\Delta=\Delta_1+\Delta_2+\Delta_3: A\rightarrow A\otimes A\otimes
A$, where $\Delta_1,\Delta_2,\Delta_3$ are induced by $r$ as in Eq.~(\ref{eq:delta123}).
Then $\Delta^*:A^*\otimes A^*\otimes A^*\rightarrow A^*$ defines a
$3$-Lie algebra structure if and only if for any $x\in A$,  the following equation
holds:
\begin{eqnarray*}
&&\sum_i ({\rm ad}_{x_i, x}\otimes 1\otimes 1\otimes 1 \otimes 1) ([[r,r,r]]\otimes_2 y_i)+\sum_i (1\otimes{\rm ad}_{x,x_i}\otimes 1\otimes 1\otimes 1) ([[r,r,r]]\otimes_1 y_i)\\
&&-\sum_i (1\otimes 1\otimes{\rm ad}_{x,x_i}\otimes 1\otimes 1 ) ([[r,r,r]]\otimes_5 y_i)-\sum_i (1\otimes 1\otimes{\rm ad}_{x_i,x}\otimes 1\otimes 1 ) ([[r,r,r]]\otimes_4 y_i)\\
&&-\sum_i (1\otimes 1\otimes 1\otimes {\rm ad}_{x,x_i}\otimes 1 ) ([[r,r,r]]\otimes_3 y_i)+\sum_i (1\otimes 1\otimes 1\otimes {\rm ad}_{x_i,x}\otimes 1 ) ([[r,r,r]]\otimes_5 y_i)\\
&&+\sum_i (1\otimes 1\otimes 1\otimes 1 \otimes {\rm ad}_{x,x_i}) ([[r,r,r]]\otimes_4 y_i)+\sum_i (1\otimes 1\otimes 1\otimes 1\otimes{\rm ad}_{x_i,x}) ([[r,r,r]]\otimes_3 y_i)\\
&&=0.
\end{eqnarray*}
In particular, if
$[[r,r,r]]=0,$
then $\Delta^*:A^*\otimes A^*\otimes A^*\rightarrow A^*$ defines
a $3$-Lie algebra structure.
\end{cor}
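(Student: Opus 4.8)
The plan is to read the corollary off directly from Theorem~\ref{thm:Condition on r} once skew-symmetry is imposed on $r$. Since a skew-symmetric $r$ is a special case of the setup of that theorem, we already know that $\Delta^*$ defines a $3$-Lie algebra structure on $A^*$ if and only if the eight-summand identity displayed in Theorem~\ref{thm:Condition on r} holds for every $x\in A$; the only work left is to rewrite that identity in terms of the single tensor $[[r,r,r]]$.

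For this, I would first invoke the lemma just preceding the corollary, which records that for skew-symmetric $r$ one has $[[r,r,r]]_1=[[r,r,r]]$, $[[r,r,r]]_2=-[[r,r,r]]$, and $[[r,r,r]]_3=[[r,r,r]]$. Substituting these three equalities into the eight summands of the identity in Theorem~\ref{thm:Condition on r}, the five terms built from $[[r,r,r]]_1$ or $[[r,r,r]]_3$ retain their sign, while the three terms built from $[[r,r,r]]_2$ — namely the two with ${\rm ad}$ in the third tensor slot and insertions at positions $5$ and $4$, and the one with ${\rm ad}$ in the fourth slot and insertion at position $3$ — pick up an overall minus sign. Collecting terms produces exactly the displayed equation of the corollary. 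The only thing requiring care is this sign and tensor-position bookkeeping, and I would cross-check it against the way $r=\sum_i x_i\otimes y_i=-\sum_i y_i\otimes x_i$ is used in proving the preceding lemma.

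The ``in particular'' clause is then immediate: if $[[r,r,r]]=0$ in $\otimes^4 A$, then each of the eight summands of the displayed equation is the image of $[[r,r,r]]$ under a composite of a linear (adjoint-action) map and an insertion operator $\otimes_k$, hence vanishes, so the equation holds for all $x\in A$ and the equivalence just proved yields that $\Delta^*$ defines a $3$-Lie algebra structure on $A^*$. I do not expect a genuine obstacle here: all of the substantive computation has already been carried out in Theorem~\ref{thm:Condition on r} and the preceding lemma, and the remaining argument is a short matching of signs and tensor positions.
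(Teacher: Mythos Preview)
Your proposal is correct and matches the paper's approach exactly: the corollary is stated immediately after the lemma with no separate proof, since it is obtained precisely by substituting $[[r,r,r]]_1=[[r,r,r]]$, $[[r,r,r]]_2=-[[r,r,r]]$, $[[r,r,r]]_3=[[r,r,r]]$ into the identity of Theorem~\ref{thm:Condition on r}. Your sign and position bookkeeping is accurate.
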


Summarizing the above discussions, we obtain

\begin{thm}
Let $A$ be a $3$-Lie algebra and $r\in A\otimes A$   skew-symmetric. If
$$[[r,r,r]]=0,$$
then $\Delta^*$ defines a $3$-Lie algebra structure on $A^*$, where  $\Delta=\Delta_1+\Delta_2+\Delta_3:
A\rightarrow A\otimes A\otimes A$, in which
$\Delta_1,\Delta_2,\Delta_3$ are induced by $r$ as in Eq.~\eqref{eq:delta123}. Furthermore, $(A,\Delta)$ is a \typeI.
\label{thm:ybe}
\end{thm}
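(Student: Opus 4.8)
The plan is to recognize that this theorem is a repackaging of results already established, so the proof reduces to assembling the pieces in the order demanded by Definition~\ref{defi:typeI}. First I would dispose of the claim that $\Delta^*$ defines a $3$-Lie algebra structure on $A^*$. Since $r$ is skew-symmetric, the lemma relating $[[r,r,r]]_1,[[r,r,r]]_2,[[r,r,r]]_3$ to $[[r,r,r]]$ applies, so the hypothesis $[[r,r,r]]=0$ gives $[[r,r,r]]_1=[[r,r,r]]_2=[[r,r,r]]_3=0$. Substituting this into the explicit criterion of Theorem~\ref{thm:Condition on r} (equivalently, invoking the ``in particular'' clause of the corollary immediately preceding this theorem) shows that the big identity, hence Eq.~\eqref{eq:co}, holds for all $x\in A$. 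Together with the skew-symmetry of $\Delta^*$ furnished by Proposition~\ref{pro:skew-symmetric} and the lemma characterizing when $\Delta^*$ endows $V^*$ with a $3$-Lie algebra structure, this yields the first assertion.

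For the second assertion, I would simply recall that the lemma immediately following Eq.~\eqref{eq:delta123} already proves that $\Delta_1$, $\Delta_2$, $\Delta_3$ are $1$-cocycles associated to the representations $(A\otimes A\otimes A,\ad\otimes 1\otimes 1)$, $(A\otimes A\otimes A,1\otimes \ad\otimes 1)$, and $(A\otimes A\otimes A,1\otimes 1\otimes \ad)$ respectively. Combining this with the $3$-Lie algebra structure on $A^*$ from the first assertion, the pair $(A,\Delta)$ satisfies every clause of Definition~\ref{defi:typeI}, so it is a \typeI.

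The only point requiring care is bookkeeping: one must confirm that Definition~\ref{defi:typeI} demands nothing beyond the three cocycle conditions and the (dualized) $3$-Lie coalgebra condition on $\Delta^*$ — which is indeed the case — so that the two ingredients above exhaust the verification. There is no genuine computational obstacle here: all the substantive work (the thirty-six-term expansion in the proof of Theorem~\ref{thm:Condition on r}, the Fundamental Identity manipulations via Proposition~\ref{pro:someequalities}, and the simplification for skew-symmetric $r$) has already been carried out upstream, so the present argument is purely a matter of citing those results in the right order.
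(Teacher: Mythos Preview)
Your proposal is correct and matches the paper's approach exactly: the paper itself presents this theorem with the preface ``Summarizing the above discussions, we obtain'' and gives no separate proof, precisely because it is a repackaging of the cocycle lemma, Proposition~\ref{pro:skew-symmetric}, Theorem~\ref{thm:Condition on r}, the skew-symmetric reduction lemma, and the corollary preceding the statement. Your ordering and citations are on target, and your closing remark that all substantive computation lies upstream is accurate.
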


Theorem~\mref{thm:ybe} can be regarded as a 3-Lie algebra analogue of the fact that a skew-symmetric solution of the classical Yang-Baxter equation gives a Lie bialgebra, leading us to give the following definition.

\begin{defi}
Let $A$ be a $3$-Lie algebra and $r\in A\otimes A$. The equation
$$[[r,r,r]]=0$$
is called the {\bf $3$-Lie classical Yang-Baxter equation (\tcybe)}.
\mlabel{defi:3cybe}
\end{defi}

This can be regarded as a natural extension of the classical Yang-Baxter equation
$$
[[r,r]]:= [r_{12},r_{13}]+[r_{12},r_{23}]+[r_{13},r_{23}]=0
$$
to the context of 3-Lie algebras.

\begin{ex}\label{ex:3d}{\rm Let $A$ be the (unique) non-trivial 3-dimensional complex 3-Lie algebra
whose non-zero product with respect to a basis $\{e_1,e_2,e_3\}$ is
given by~\cite{BSZ}
$$[e_1,e_2,e_3]=e_1.$$
If $r\in A\otimes A$ is skew-symmetric, then $r$ is a solution of the 3-Lie CYBE in $A$.
Moreover, for $r=\sum\limits_{i<j}^3r_{ij}(e_i\otimes e_j-e_j\otimes e_i)$, with the notation introduced before Theorem~\mref{thm:Condition on r}, the corresponding \typeI is given by
$$ \Delta_i(e_1)=(-1)^i r_{23}r\ot_i e_1, \quad
\Delta_i(e_2)=(-1)^{i+1}r_{13}r\ot_i e_1, \quad
\Delta_i(e_3)=(-1)^{i}r_{12}r\ot_i e_1,$$
and the comultiplication $\Delta:A\longrightarrow\wedge^3A$ is given by
$$\Delta(e_1)=-r_{23}^2e_1\wedge e_2\wedge e_3,\;\;\Delta(e_2)=r_{13}r_{23}e_1\wedge e_2\wedge e_3,\;\;
\Delta(e_3)=-r_{12}r_{23}e_1\wedge e_2\wedge e_3,$$ where $e_1\wedge
e_2\wedge e_3=\sum_{\sigma \in S_3}{\rm sgn}(\sigma)
e_{\sigma(1)}\otimes e_{\sigma(2)}\otimes e_{\sigma(3)}$ and $S_3$
is the permutation group on $\{1,2,3\}$. In particular, when $r_{23}\ne
0$, we get a \typeI whose coproduct is not zero. }
\end{ex}

Let $r\in A\otimes A$. Then $r$ induces a linear map $A^*\rightarrow A$ that we still denote by $r$:
\begin{equation}
\langle r(\xi), \eta\rangle=\langle r, \xi\otimes \eta\rangle,\quad \forall \xi, \eta\in A^*.
\label{eq:opform}
\end{equation}
Furthermore, we denote the ternary operation $\Delta^*:A^*\otimes A^*\otimes A^*\rightarrow A^*$ by $[\cdot,\cdot,\cdot]^*$.

\begin{pro}
Let $A$ be a $3$-Lie algebra and $r\in A\otimes A$. Suppose that
$r$ is skew-symmetric and $\Delta=\Delta_1+\Delta_2+\Delta_3:
A\rightarrow A\otimes A\otimes A$, in which
$\Delta_1,\Delta_2,\Delta_3$ are induced by $r$ as in
Eq.~\eqref{eq:delta123}. Then we have
\begin{equation}\label{eq:dual bracket}
  [\xi,\eta,\gamma]^*={\rm ad}^*_{r(\xi),r(\eta)}\gamma+{\rm ad}^*_{r(\eta),r(\gamma)}\xi+{\rm ad}^*_{r(\gamma),r(\xi)}\eta,\quad
  \forall \xi,\eta,\gamma\in A^*.
\end{equation}
Furthermore, we have
\begin{equation}\label{eq:rformula}
[r(\xi),r(\eta),r(\gamma)]- r([\xi,\eta,\gamma]^*)=[[r,r,r]](\xi,\eta,\gamma),\quad
  \forall \xi,\eta,\gamma\in A^*.
\end{equation}
\end{pro}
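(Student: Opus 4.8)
The plan is to establish the two displayed identities by direct, but carefully bookkept, computation using the dual pairing $\langle\cdot,\cdot\rangle$ between $A$ and $A^*$. For Eq.~\eqref{eq:dual bracket}, I would start from the definition $[\xi,\eta,\gamma]^*=\Delta^*(\xi\otimes\eta\otimes\gamma)$, which means $\langle[\xi,\eta,\gamma]^*,x\rangle=\langle\xi\otimes\eta\otimes\gamma,\Delta(x)\rangle$ for all $x\in A$. Since $\Delta=\Delta_1+\Delta_2+\Delta_3$ with the $\Delta_i$ given explicitly by Eq.~\eqref{eq:delta123}, I would pair each $\Delta_i(x)$ against $\xi\otimes\eta\otimes\gamma$ in turn. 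For instance, $\langle\xi\otimes\eta\otimes\gamma,\Delta_1(x)\rangle=\sum_{i,j}\langle\xi,[x,x_i,x_j]\rangle\langle\eta,y_j\rangle\langle\gamma,y_i\rangle$; using skew-symmetry of $r$ (so that $r(\xi)=\sum_i\langle\xi,x_i\rangle y_i=-\sum_i\langle\xi,y_i\rangle x_i$, consistently with Eq.~\eqref{eq:opform}) one rewrites this in terms of $r(\eta)$ and $r(\gamma)$, and recognizes $\langle\xi,[x,r(\gamma),r(\eta)]\rangle=\langle\xi,\ad_{r(\gamma),r(\eta)}x\rangle=-\langle\ad^*_{r(\gamma),r(\eta)}\xi,x\rangle$ by the definition of the coadjoint representation in Eq.~\eqref{eq:dual}. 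Summing the three contributions, collecting signs, and using skew-symmetry of the bracket and of $\ad^*$ in its two arguments, one should arrive at exactly $\ad^*_{r(\xi),r(\eta)}\gamma+\ad^*_{r(\eta),r(\gamma)}\xi+\ad^*_{r(\gamma),r(\xi)}\eta$ (the cyclic sum forced by the symmetry of the three positions). The main care is tracking the transpositions of $\eta$ and $\gamma$ coming from the $y_j\otimes y_i$ ordering in $\Delta_1$ versus $\Delta_2,\Delta_3$.

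For Eq.~\eqref{eq:rformula}, I would compute both sides as elements of $A$ (equivalently, pair against an arbitrary $\zeta\in A^*$) and compare. The term $r([\xi,\eta,\gamma]^*)$ is evaluated by feeding the formula just proved in Eq.~\eqref{eq:dual bracket} into $r$; the key input is the interaction of $r$ with $\ad^*$, namely the identity $r(\ad^*_{u,v}\xi)$ for $u,v\in A$, which unfolds via Eq.~\eqref{eq:dual} and the skew-symmetry of $r$ into a combination of brackets $[u,v,r(\xi)]$ plus a ``correction'' term involving $r$ applied twice. Meanwhile $[r(\xi),r(\eta),r(\gamma)]$ is just the $3$-Lie bracket in $A$ of the three images. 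The difference of the two sides, after expansion, should organize into exactly the four families of terms $[r_{12},r_{13},r_{14}]+[r_{12},r_{23},r_{24}]+[r_{13},r_{23},r_{34}]+[r_{14},r_{24},r_{34}]$ appearing in the definition~\eqref{eq:rrr} of $[[r,r,r]]$, once one interprets $[[r,r,r]]\in\otimes^4 A$ as a trilinear map $A^*\otimes A^*\otimes A^*\to A$ by pairing the last three tensor slots against $\xi,\eta,\gamma$ (in the appropriate order dictated by~\eqref{eq:rrr}) and leaving the first slot in $A$. Concretely, pairing $[r_{12},r_{13},r_{14}]$ against $\xi\otimes\eta\otimes\gamma$ in slots $2,3,4$ yields $\sum_{i,j,k}\langle\xi,y_i\rangle\langle\eta,y_j\rangle\langle\gamma,y_k\rangle[x_i,x_j,x_k]=[{-}r(\xi),{-}r(\eta),{-}r(\gamma)]$ up to sign, which is how the ``$[r(\xi),r(\eta),r(\gamma)]$'' piece emerges, and the three mixed terms produce the $r([\xi,\eta,\gamma]^*)$ piece.

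The hard part will be the sign and ordering bookkeeping: keeping straight which tensor position is paired with which of $\xi,\eta,\gamma$, the minus sign in the definition~\eqref{eq:dual} of the dual representation, and the fact that skew-symmetry of $r$ lets one convert $\sum\langle\xi,x_i\rangle y_i$ into $-\sum\langle\xi,y_i\rangle x_i$ freely — each such conversion flips a sign and must be applied a consistent number of times across all terms. I would manage this by fixing once and for all the convention $r=\sum_i x_i\otimes y_i=-\sum_i y_i\otimes x_i$, writing $r(\xi)=\sum_i\langle\xi,x_i\rangle y_i$, and then systematically reducing every expression to a standard form $\sum\langle\xi,?\rangle\langle\eta,?\rangle\langle\gamma,?\rangle[\,\cdot,\cdot,\cdot\,]$ before comparing. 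Given Eq.~\eqref{eq:dual bracket}, the second identity is essentially forced, so the real content is the first one; I expect no conceptual obstacle, only the need for disciplined computation, and I would present the $\Delta_1$ pairing in detail and indicate that $\Delta_2,\Delta_3$ are handled by the analogous relabelling, exactly as was done in the proof of the preceding lemma on the cocycle property.
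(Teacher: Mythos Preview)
Your approach is correct and essentially the same as the paper's: pair against a test element and expand the definitions, treating each $\Delta_i$ (resp.\ each summand of $[[r,r,r]]$) in turn; the paper merely runs the first computation in the reverse direction, starting from $\langle x,\ad^*_{r(\xi),r(\eta)}\gamma\rangle$ and recognizing it as $\langle\Delta_3(x),\xi\otimes\eta\otimes\gamma\rangle$, etc. Two bookkeeping corrections to make before you execute: the convention for $[[r,r,r]](\xi,\eta,\gamma)$ pairs the \emph{first three} tensor slots of $[[r,r,r]]\in A^{\otimes 4}$ against $\xi,\eta,\gamma$ and leaves the fourth slot in $A$ (so the term $[r(\xi),r(\eta),r(\gamma)]$ arises from $[r_{14},r_{24},r_{34}]$, not from $[r_{12},r_{13},r_{14}]$ as you wrote); and for $r([\xi,\eta,\gamma]^*)$ there is no need to analyze $r(\ad^*_{u,v}\xi)$ as ``bracket plus correction''---after pairing with $\zeta$, skew-symmetry of $r$ gives $\langle r([\xi,\eta,\gamma]^*),\zeta\rangle=-\langle[\xi,\eta,\gamma]^*,r(\zeta)\rangle$, and then~\eqref{eq:dual bracket} together with~\eqref{eq:dual} finishes immediately.
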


\begin{proof}
Let $x\in A, \xi,\eta,\gamma\in A^*$. For the first conclusion, we only need to prove
$$\langle \Delta(x),\xi\otimes \eta\otimes \gamma\rangle=\langle x,[\xi,\eta,\gamma]^*\rangle.$$Let $r=\sum_ix_i\otimes y_i$. Since $r$ is skew-symmetric, we have
\begin{eqnarray*}
\langle x, {\rm ad}^*_{r(\xi),r(\eta)}\gamma\rangle
&=& \langle -[r(\xi),r(\eta),x],\gamma\rangle
=-\langle r,\eta\otimes {\rm ad}^*_{r(\xi),x}\gamma\rangle\\
&=&\sum_i\langle y_i, \eta\rangle \langle x_i, {\rm ad}^*_{r(\xi),x}\gamma\rangle
=\sum_i\langle y_i, \eta\rangle \langle r,\xi\otimes {\rm ad}^*_{x, x_i}\gamma\rangle\\
&=&\sum_{i,j}\langle y_i, \eta\rangle \langle y_j,\xi\rangle\langle [x,x_i,x_j],\gamma\rangle
=\Big\langle \sum_{ij} y_j\otimes y_i\otimes [x,x_i,x_j], \xi\otimes \eta\otimes \gamma\Big\rangle\\
&=&\langle\Delta_3(x),\xi\otimes \eta\otimes \gamma\rangle.
\end{eqnarray*}
Similarly, we have
$$  \langle x, {\rm ad}^*_{r(\eta),r(\gamma)}\xi\rangle
=\langle\Delta_1(x),\xi\otimes \eta\otimes \gamma\rangle,\quad
\langle x, {\rm ad}^*_{r(\gamma),r(\xi)}\eta\rangle=\langle\Delta_2(x),\xi\otimes \eta\otimes \gamma\rangle.
$$
Therefore, we obtain
\begin{eqnarray*}
  \langle \Delta(x),\xi\otimes \eta\otimes \gamma\rangle&=&\langle \Delta_1(x)+\Delta_2(x)+\Delta_3(x),\xi\otimes \eta\otimes \gamma\rangle\\
  &=&\langle x, {\rm ad}^*_{r(\eta),r(\gamma)}\xi\rangle+\langle x, {\rm ad}^*_{r(\gamma),r(\xi)}\eta\rangle+\langle x, {\rm ad}^*_{r(\xi),r(\eta)}\gamma\rangle\\
 &=&\langle x,[\xi,\eta,\gamma]^*\rangle.
\end{eqnarray*}
This finishes the proof of Eq.~\eqref{eq:dual bracket}.
\smallskip

Applying the left hand side of Eq.~\eqref{eq:rformula} to $\kappa\in A^*$ gives
\begin{eqnarray*}
\langle [r(\xi),r(\eta),r(\gamma)]-r([\xi,\eta,\gamma]^*),\kappa\rangle
&=&\langle\kappa,[r(\xi),r(\eta),r(\gamma)]\rangle-\langle\gamma,[r(\xi),r(\eta),r(\kappa)]\rangle\\
&&-\langle\xi,[r(\eta),r(\gamma),r(\kappa)]\rangle-\langle\eta,[r(\gamma),r(\xi),r(\kappa)]\rangle.
\end{eqnarray*}
Applying the right hand side of Eq.~\eqref{eq:rformula} to $\kappa\in A^*$ gives
\allowdisplaybreaks
{\begin{eqnarray*}
 &&[[r,r,r]](\xi,\eta,\gamma,\kappa)\\&=&\sum_{i,j,k}\Big([x_i,x_j,x_k]\otimes y_i\otimes y_j\otimes y_k(\xi,\eta,\gamma,\kappa)
 +x_i\otimes [y_i,x_j,x_k]\otimes y_j\otimes y_k(\xi,\eta,\gamma,\kappa)\\
  &&+x_i\otimes x_j\otimes [y_i,y_j,x_k]\otimes y_k(\xi,\eta,\gamma,\kappa)
  +x_i\otimes x_j\otimes x_k\otimes [y_i,y_j,y_k](\xi,\eta,\gamma,\kappa)\Big)\\
  &=&\sum_{i,j,k}\Big(\langle\xi,[x_i,x_j,x_k]\rangle \langle \eta, y_i\rangle\langle\gamma, y_j\rangle\langle\kappa, y_k\rangle+\langle\eta,[y_i,x_j,x_k]\rangle \langle \xi, x_i\rangle\langle\gamma, y_j\rangle\langle\kappa, y_k\rangle\\
 &&+\langle\gamma,[y_i,y_j,x_k]\rangle \langle \xi, x_i\rangle\langle\eta, x_j\rangle\langle\kappa, y_k\rangle+\langle\kappa,[y_i,y_j,y_k]\rangle \langle \xi, x_i\rangle\langle\eta, x_j\rangle\langle\gamma,x_k\rangle\Big)\\
 &=&-\langle\xi,[r(\eta),r(\gamma),r(\kappa)]\rangle-\langle\eta,[r(\gamma),r(\xi),r(\kappa)]\rangle\\
 &&-\langle\gamma,[r(\xi),r(\eta),r(\kappa)]\rangle+\langle\kappa,[r(\xi),r(\eta),r(\gamma)]\rangle.
\end{eqnarray*}
}
Therefore, Eq.~\eqref{eq:rformula} holds. This completes the proof.
\end{proof}
As a direct consequence, we obtain
\begin{cor}
Suppose that $r$ is a skew-symmetric solution of the $3$-Lie classical Yang-Baxter equation. Then the linear map $r$ defined in Eq.~(\ref{eq:opform}) is a $3$-Lie algebra morphism from $(A^*,[\cdot,\cdot,\cdot]^*)$ to $(A,[\cdot,\cdot,\cdot])$.
\end{cor}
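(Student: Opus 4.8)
The plan is to obtain the corollary as an immediate consequence of Eq.~\eqref{eq:rformula} in the preceding proposition, together with Theorem~\ref{thm:ybe}. First I would note that the statement is well-posed: since $r$ is a skew-symmetric solution of the $3$-Lie classical Yang-Baxter equation, i.e. $[[r,r,r]]=0$, Theorem~\ref{thm:ybe} guarantees that $\Delta^*$ endows $A^*$ with a $3$-Lie algebra structure. Hence $(A^*,[\cdot,\cdot,\cdot]^*)$ is genuinely a $3$-Lie algebra, and it makes sense to ask whether the induced linear map $r:A^*\to A$ of Eq.~\eqref{eq:opform} is a $3$-Lie algebra morphism into $(A,[\cdot,\cdot,\cdot])$.

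Next, recall that a linear map between $3$-Lie algebras is a morphism exactly when it preserves the ternary brackets. Thus it suffices to verify
\[
r([\xi,\eta,\gamma]^*)=[r(\xi),r(\eta),r(\gamma)],\qquad \forall\,\xi,\eta,\gamma\in A^*.
\]
This follows directly from Eq.~\eqref{eq:rformula}, which states
\[
[r(\xi),r(\eta),r(\gamma)]-r([\xi,\eta,\gamma]^*)=[[r,r,r]](\xi,\eta,\gamma),
\]
since the hypothesis $[[r,r,r]]=0$ forces the right-hand side to vanish for all $\xi,\eta,\gamma\in A^*$.

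I do not anticipate any real obstacle: the corollary is a one-line deduction from an identity already proved. The only point worth flagging is that the skew-symmetry of $r$ is used in two distinct ways — once (through Theorem~\ref{thm:ybe}) to ensure that the codomain $A^*$ actually carries a $3$-Lie algebra structure, and once as the standing hypothesis under which Eq.~\eqref{eq:rformula} was established — so both uses should be mentioned explicitly, after which no further computation is required.
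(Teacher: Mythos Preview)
Your proof is correct and matches the paper's approach exactly: the corollary is stated there as ``a direct consequence'' of Eq.~\eqref{eq:rformula}, and your write-up simply spells out that deduction. One small slip: in your final paragraph $A^*$ is the \emph{domain} of the map $r:A^*\to A$, not the codomain.
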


We further give the following
interpretation of the invertible skew-symmetric solutions of the
$3$-Lie classical Yang-Baxter equation which is parallel to a
similar result for the classical Yang-Baxter equation in a Lie
algebra given by Drinfeld \cite{D}.

\begin{pro}
Let $A$ be a $3$-Lie algebra and $r\in A\otimes A$. Suppose that $r$ is skew-symmetric and
nondegenerate. Then $r$ is a solution of the $3$-Lie classical Yang-Baxter equation if and only if
the nondegenerate skew-symmetric bilinear form $B$ on $A$ defined by $B(x,y):=\langle r^{-1}(x), y\rangle$ for any $x,y\in A$ satisfies
\begin{equation}\label{eq:3sb}
B([x,y,z],w)-B([x,y,w],z)+B([x,z,w],y)-B([y,z,w],x)=0,\quad \forall x,y,z,w\in A.
\end{equation}
\end{pro}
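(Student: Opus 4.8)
The plan is to reduce the equivalence to the single structural identity
$$
r^{-1}([x,y,z]) = [r^{-1}(x),r^{-1}(y),r^{-1}(z)]^*, \qquad \forall\, x,y,z\in A,
$$
and then to test this identity against an arbitrary fourth element $w\in A$. First I would record the preliminary (already asserted in the statement) that $B$ is a well-defined skew-symmetric bilinear form: since $r$ is nondegenerate, $r^{-1}\colon A\to A^*$ exists, and writing $\xi=r^{-1}(x)$, $\eta=r^{-1}(y)$, the skew-symmetry of $r$ gives $\langle r(\xi),\eta\rangle=-\langle r(\eta),\xi\rangle$, whence $B(x,y)=\langle\xi,r(\eta)\rangle=-\langle r(\xi),\eta\rangle=-B(y,x)$. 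Nondegeneracy of $B$ is immediate from that of $r$.

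The key input is Eq.~\eqref{eq:rformula}, $[r(\xi),r(\eta),r(\gamma)]-r([\xi,\eta,\gamma]^*)=[[r,r,r]](\xi,\eta,\gamma)$ for all $\xi,\eta,\gamma\in A^*$. Because contracting three of the four tensor slots of an element of $\otimes^4A$ against covectors is a nondegenerate pairing, $[[r,r,r]]=0$ if and only if $r\colon (A^*,[\cdot,\cdot,\cdot]^*)\to(A,[\cdot,\cdot,\cdot])$ is a $3$-Lie algebra morphism (this is the content of the corollary above), and since $r$ is invertible the latter is equivalent to $r^{-1}$ being a $3$-Lie algebra morphism, i.e.\ to the displayed identity. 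Note that this step uses only Eq.~\eqref{eq:rformula}, so no recomputation of $[[r,r,r]]$ in coordinates is needed.

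Finally I would pair the displayed identity with an arbitrary $w\in A$. The left side gives $\langle r^{-1}([x,y,z]),w\rangle=B([x,y,z],w)$. For the right side, put $\xi=r^{-1}(x),\eta=r^{-1}(y),\gamma=r^{-1}(z)$, so $r(\xi)=x$, $r(\eta)=y$, $r(\gamma)=z$, and apply the dual-bracket formula Eq.~\eqref{eq:dual bracket}: $[\xi,\eta,\gamma]^*=\ad^*_{x,y}\gamma+\ad^*_{y,z}\xi+\ad^*_{z,x}\eta$. Pairing each summand with $w$ and using the definition of the dual representation Eq.~\eqref{eq:dual} together with the skew-symmetry of $[\cdot,\cdot,\cdot]$ yields $\langle\ad^*_{x,y}\gamma,w\rangle=-\langle\gamma,[x,y,w]\rangle=-B([x,y,w],z)$, $\langle\ad^*_{y,z}\xi,w\rangle=-B([y,z,w],x)$, and $\langle\ad^*_{z,x}\eta,w\rangle=\langle\eta,[x,z,w]\rangle=B([x,z,w],y)$. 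Summing, the displayed identity holds against every $w$ precisely when $B([x,y,z],w)-B([x,y,w],z)+B([x,z,w],y)-B([y,z,w],x)=0$, which is Eq.~\eqref{eq:3sb}; this proves both implications simultaneously. The only point requiring care is the bookkeeping of signs coming from $\ad^*$ and from the skew-symmetry of the $3$-Lie bracket; there is no substantive obstacle beyond that.
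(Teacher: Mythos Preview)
Your approach is essentially identical to the paper's: both invoke Eq.~\eqref{eq:rformula} to reduce $[[r,r,r]]=0$ to $r$ (equivalently $r^{-1}$) being a $3$-Lie morphism, then expand $[\xi,\eta,\gamma]^*$ via Eq.~\eqref{eq:dual bracket} and pair with $w$. One caution on execution: in your intermediate steps you have the arguments of $B$ swapped, e.g.\ $\langle\gamma,[x,y,w]\rangle=\langle r^{-1}(z),[x,y,w]\rangle=B(z,[x,y,w])$, so $-\langle\gamma,[x,y,w]\rangle=+B([x,y,w],z)$ rather than $-B([x,y,w],z)$; correcting this consistently across all three terms recovers Eq.~\eqref{eq:3sb} as claimed.
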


\begin{proof}
For any $x,y,z,w\in A$, since $r$ is nondegenerate, there exist $\xi,\eta,\gamma,\kappa\in A^*$ such that $r(\xi)=x, r(\eta)=y,r(\gamma)=z, r(\kappa)=w$.
By Eq.~\eqref{eq:rformula}, if $[[r,r,r]]=0,$ we have
\begin{eqnarray*}
B([x,y,z],w)&=&\langle r^{-1}[r(\xi),r(\eta),r(\gamma)], w\rangle\\
&=&\langle {\rm ad}^*_{r(\xi),r(\eta)}\gamma+{\rm ad}^*_{r(\eta),r(\gamma)}\xi+{\rm ad}^*_{r(\gamma),r(\xi)}\eta, w\rangle\\
&=&\langle -\gamma, [x,y,w]\rangle -\langle \xi, [y,z,w]\rangle -\langle \eta, [z,x,w]\rangle\\
&=&-B(z, [x,y,w])-B(x, [y,z,w])-B(y, [z,x,w]).
\end{eqnarray*}
Hence the conclusion follows.
\end{proof}

\subsection{$\mathcal O$-operators, 3-pre-Lie algebras and solutions of the 3-Lie CYBE}
\mlabel{ss:oop}
Pre-Lie algebras and $\mathcal O$-operators are useful tools for the construction of solutions of the classical Yang-Baxter equation.  For further details, see~\cite{leftsymm4} for pre-Lie algebras and~\cite{Bai-Unifiedapproach} for the relationship between pre-Lie algebras, $\mathcal O$-operators and CYBE.
In this subsection, we first introduce the notion of an $\mathcal
O$-operator in the 3-Lie algebra context, which could give solutions of the 3-Lie classical
Yang-Baxter equation. Then we introduce the notion of a 3-pre-Lie
algebra which is closely related to $\mathcal O$-operators. In particular,
there is a construction of solutions of the 3-Lie classical
Yang-Baxter equation in some special 3-Lie algebras obtained from
3-pre-Lie algebras.

\begin{defi}\label{defi:o}
Let $(A,[\cdot,\cdot,\cdot])$ be a $3$-Lie algebra and $(V,\rho)$
a representation.  A linear operator $T:V\rightarrow A$ is called
an {\bf $\mathcal O$-operator} associated to $( V,\rho)$ if $T$
satisfies
\begin{equation}\label{eq:Ooperator}
 [Tu,Tv,Tw]=T\left(\rho(Tu,Tv)w+\rho(Tv,Tw)u+\rho(Tw,Tu)v\right),\quad \forall u,v,w\in V.
\end{equation}
\end{defi}

\begin{ex}
{\rm Let $A$ be a 3-Lie algebra and $r\in A\otimes A$. Suppose that $r$ is skew-symmetric. Then $r$
is a solution of the 3-Lie classical Yang-Baxter equation if and only if $r$ is an $\mathcal O$-operator of $A$ associated
to the coadjoint representation $(A^*,{\rm ad}^*)$.}
\end{ex}

\begin{ex}
{\rm Let $A$ be a 3-Lie algebra. An $\mathcal O$-operator of $A$ associated
to the adjoint representation $(A,{\rm ad})$ is called a {\bf Rota-Baxter operator of weight zero}. See \cite{BaiRGuo} for more details.}
\end{ex}

 Let  $T:V\rightarrow A$ be a linear map and $\overline{T}\in V^*\otimes A$ the corresponding tensor, i.e.
$$
\overline{T}(v,\xi)=\langle\xi,Tv\rangle,\quad \forall~v\in V,~\xi\in A^*.
$$

The following result is the 3-Lie algebra analogue of the relationship between $\mathcal O$-operators and the classical Yang-Baxter equation on  Lie algebras~\cite{Bai-Unifiedapproach}.
\begin{thm}\label{thm:O-r}
With the above notations,
$T$ is an $\mathcal O$-operator if and only if
\begin{equation}
r=\overline{T}-\sigma_{12}(\overline{T})
\end{equation}
is a skew-symmetric solution of the $3$-Lie classical Yang-Baxter equation in the semi-direct product $3$-Lie algebra $A\ltimes_{\rho^*}V^*$.
\end{thm}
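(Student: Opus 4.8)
The plan is to realize everything inside the semi-direct product $\frakg:=A\ltimes_{\rho^*}V^*$, whose underlying space is $A\oplus V^*$ with bracket \eqref{eq:sum}. Fix a basis $\{e_a\}$ of $V$ with dual basis $\{e^a\}$ of $V^*$, so $\overline T=\sum_a e^a\otimes Te_a$ and
$$
r=\overline T-\sigma_{12}(\overline T)=\sum_a\big(e^a\otimes Te_a-Te_a\otimes e^a\big)\in\frakg\otimes\frakg
$$
is manifestly skew-symmetric; hence only the equivalence ``$T$ is an $\mathcal O$-operator $\iff[[r,r,r]]=0$ in $\frakg$'' carries content. Applying, with $\frakg$ in place of $A$, the Example above that characterizes skew-symmetric solutions of the \tcybe\ as $\mathcal O$-operators associated to the coadjoint representation, this becomes: the induced linear map $r\colon\frakg^*\to\frakg$ of \eqref{eq:opform} is an $\mathcal O$-operator on $\frakg$ associated to $(\frakg^*,\mathrm{ad}^*)$. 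A direct pairing computation identifies this map under the splitting $\frakg^*=A^*\oplus V$:
$$
r(\xi+u)=Tu-T^*\xi,\qquad \xi\in A^*,\ u\in V,
$$
where $T^*\colon A^*\to V^*$ is the transpose of $T$; in particular $r|_V=T$ and $r|_{A^*}=-T^*$. So the theorem reduces to the claim that $T$ is an $\mathcal O$-operator for $(V,\rho)$ iff $\xi+u\mapsto Tu-T^*\xi$ is an $\mathcal O$-operator for the coadjoint representation of $\frakg$.

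For the reduced statement I would evaluate the defining identity \eqref{eq:Ooperator} of an $\mathcal O$-operator for $\xi+u\mapsto Tu-T^*\xi$ on arguments $\Xi_i=u_i+\xi_i\in\frakg^*$ ($i=1,2,3$) and split the resulting identity---which takes values in $\frakg=A\oplus V^*$---according to the $2^3$ pure choices of $V$ or $A^*$ for $\Xi_1,\Xi_2,\Xi_3$. Two facts do most of the work. First, $\mathrm{ad}^*$ of $\frakg$ on $\frakg^*=A^*\oplus V$ is read off at once from \eqref{eq:sum} and \eqref{eq:dual}: a pair of elements of $A$ acts on the $V$-summand by $\rho$ and on the $A^*$-summand by the coadjoint action of $A$, while $V^*$-entries contribute only extra $\rho$-terms landing in $A^*$. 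Second, a bracket in $\frakg$ of a triple, at least two of whose members lie in $V^*$, vanishes, directly from \eqref{eq:sum}. Consequently every component with two or more of the $\Xi_i$ in $A^*$ vanishes on both sides, and the $(V,V,V)$-component is exactly Eq.~\eqref{eq:Ooperator} for $T$ on $u_1,u_2,u_3$. This already gives the ``only if'' direction and one case of ``if''; the remaining content is the single $(V,V,A^*)$-component, the $(V,A^*,V)$ and $(A^*,V,V)$ ones being equivalent to it by the total skew-symmetry of the $\mathcal O$-operator identity in its three arguments.

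For that component, with $\Xi_1=u_1,\ \Xi_2=u_2,\ \Xi_3=\xi_3$, the left side of \eqref{eq:Ooperator} equals $\rho^*(Tu_1,Tu_2)(-T^*\xi_3)\in V^*$. On the right side the three coadjoint terms have vanishing $V$-components (the relevant $w=0$) and so all lie in $A^*$; unwinding them, applying $r=-T^*$, pairing against an arbitrary $w\in V$, and using \eqref{eq:Ooperator} to rewrite $[Tu_1,Tu_2,Tw]$ yields five scalar terms, four of which cancel in pairs purely by the skew-symmetry of $\rho$ (e.g.\ $\rho(Tu_2,Tw)+\rho(Tw,Tu_2)=0$), leaving exactly $\langle T^*\xi_3,\rho(Tu_1,Tu_2)w\rangle=\langle\rho^*(Tu_1,Tu_2)(-T^*\xi_3),w\rangle$ by \eqref{eq:dual}; so the two sides agree and the ``if'' direction follows. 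The main obstacle is the bookkeeping in this last step: computing $\mathrm{ad}^*$ of the semi-direct product on each summand of $\frakg^*$ with correct signs, and tracking these scalar terms through the substitution of \eqref{eq:Ooperator} and the sum of the three coadjoint contributions so that the right cancellations occur. An equivalent but heavier route skips the passage to $\frakg^*$ and expands $[[r,r,r]]\in\otimes^4\frakg$ directly from \eqref{eq:rrr}, using the same vanishing-of-brackets observation to discard most summands before matching the residue against \eqref{eq:Ooperator}.
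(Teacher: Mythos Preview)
Your argument is correct and takes a genuinely different route from the paper. The paper proceeds by brute-force expansion: writing $r=\sum_i(v_i^*\otimes Tv_i-Tv_i\otimes v_i^*)$, it computes each of $[r_{12},r_{13},r_{14}]$, $[r_{12},r_{23},r_{24}]$, $[r_{13},r_{23},r_{34}]$, $[r_{14},r_{24},r_{34}]$ in $\otimes^4\frakg$ term by term, and then shows that their sum reorganizes as
\[
[[r,r,r]]=\sum_{i,j,k}\Big(-OT(v_i,v_j,v_k)\otimes v_i^*\otimes v_j^*\otimes v_k^*\ +\ \text{three signed permutations}\Big),
\]
where $OT(u,v,w)$ is the defect in Eq.~\eqref{eq:Ooperator}; the equivalence is then read off. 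Your approach instead invokes the preceding Example (skew-symmetric $r$ solves the \tcybe\ iff the induced map is an $\mathcal O$-operator for the coadjoint representation), identifies the induced map $\frakg^*\to\frakg$ as $\xi+u\mapsto Tu-T^*\xi$, and reduces the question to a case analysis on the splitting $\frakg^*=A^*\oplus V$. This is more conceptual: the vanishing of brackets with two $V^*$-entries kills most cases for free, the $(V,V,V)$-case is literally Eq.~\eqref{eq:Ooperator}, and only the mixed $(V,V,A^*)$-case requires the short scalar computation you outline. The paper's route is self-contained but heavier in bookkeeping; yours is lighter but leans on the Example (which in turn rests on Eq.~\eqref{eq:rformula}).

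One small correction: you have the labels ``if'' and ``only if'' reversed. In the statement ``$T$ is an $\mathcal O$-operator if and only if $r$ \dots'', the ``only if'' direction is $T\Rightarrow r$, which is the one requiring the extra $(V,V,A^*)$ verification; the ``if'' direction $r\Rightarrow T$ follows immediately from the $(V,V,V)$-component alone. The mathematics is unaffected.
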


\pf
Let $\{v_1,\cdots,v_n\}$ be a basis of $V$ and $\{v_1^*,\cdots,v_n^*\}$ the dual basis. Then we have
$$\overline{T}=\sum_i v_i^*\otimes Tv_i\in V^*\otimes A.$$
Therefore, we derive
\begin{eqnarray*}
~[r_{12},r_{13},r_{14}]&=& \sum_{ijk}\Big(-[Tv_i,Tv_j,Tv_k]\otimes v_i^*\otimes v_j^*\otimes v_k^*
+[Tv_i,v_j^*,Tv_k]\otimes v_i^*\otimes Tv_j\otimes v_k^*\\
&\mbox{}&+[Tv_i,Tv_j,v_k^*]\otimes v_i^*\otimes v_j^*\otimes Tv_k+
[v_i^*,Tv_j,Tv_k]\otimes Tv_i\otimes v_j^*\otimes v_k^*\Big);\\
~[r_{12},r_{23},r_{24}]&=& \sum_{ijk}\Big(v_i^*\otimes [Tv_i,Tv_j,Tv_k]\otimes v_j^*\otimes v_k^*-
Tv_i\otimes [v_i^*,Tv_j,Tv_k]\otimes v_j^*\otimes v_k^*\\
&\mbox{}&- v_i^*\otimes [Tv_i,Tv_j,v_k^*]\otimes v_j^*\otimes v_k^*-
v_i^*\otimes [Tv_i,v_j^*,Tv_k]\otimes Tv_j\otimes v_k^*\Big);\\
~[r_{13},r_{23},r_{34}]&=& \sum_{ijk}\Big(-
v_i^*\otimes v_j^*\otimes [Tv_i,Tv_j,Tv_k]\otimes v_k^*+Tv_i\otimes v_j^*\otimes[v_i^*,Tv_j,Tv_k]\otimes v_k^*\\
&\mbox{}& +v_i^*\otimes Tv_j\otimes [Tv_i,v_j^*,Tv_k]\otimes v_k^*+v_i^*\otimes v_j^*\otimes [Tv_i,Tv_j,v_k^*]\otimes Tv_k\Big);\\
~[r_{14},r_{24},r_{34}]&=& \sum_{ijk}\Big(
-Tv_i\otimes v_j^*\otimes v_k^*\otimes [v_i^*,Tv_j,Tv_k]
+v_i^*\otimes v_j^*\otimes v_k^*\otimes [Tv_i,Tv_j,Tv_k]\\
&\mbox{}& -v_i^*\otimes Tv_j\otimes v_k^*\otimes [Tv_i,v_j^*,Tv_k]-v_i^*\otimes v_j^*\otimes Tv_k\otimes [Tv_i,Tv_j,v_k^*]\Big).
\end{eqnarray*}
Set
$$OT(u,v,w)=[Tu,Tv,Tw]-T \Big(\rho(Tu,Tv)w+\rho(Tv,Tw)u+\rho(Tw,Tu)v\Big),\quad\forall u,v,w\in A.$$
Note
\begin{eqnarray*}
\sum_iTv_i\otimes [v_i^*,Tv_j,Tv_k]&=&\sum_i Tv_i\otimes \rho^*(Tv_j,Tv_k)v_i^*=\sum_i Tv_i\otimes \sum_m  \langle \rho^*(Tv_j,Tv_k)v_i^*,v_m\rangle v_m^*\\&=&\sum_{i,m}Tv_i\otimes -\langle \rho(Tv_j,Tv_k)v_m, v_i^*\rangle v_m^*=\sum_m -T(\rho(Tv_j,Tv_k)v_m\otimes v_m^*.
\end{eqnarray*}
Thus, we have
\begin{eqnarray*}
[[r,r,r]]&=& [r_{12},r_{13},r_{14}]+[r_{12},r_{23},r_{24}] +[r_{13},r_{23},r_{34}] +[r_{14},r_{24},r_{34}]\\
&=&\sum_{i,j,k}\Big( -OT(v_i,v_j,v_k)\otimes v_i^*\otimes v_j^*\otimes v_k^*
 +v_i^*\otimes OT(v_i,v_j,v_k)\otimes v_j^*\otimes v_k^* \\
 &&-v_i^*\otimes v_j^*\otimes OT(v_i,v_j,v_k)\otimes v_k^* +v_i^*\otimes v_j^*\otimes v_k^*\otimes OT(v_i,v_j,v_k)\Big).
\end{eqnarray*}
Therefore $r$ satisfies the 3-Lie classical Yang-Baxter equation, i.e. $[[r,r,r]]=0$ if and only if $OT(v_i,v_j,v_k)=0$ for all $i,j,k$, i.e.
$T$ is an $\mathcal O$-operator.
\qed\vspace{3mm}

\begin{defi}
Let $A$ be a vector space with a linear map $\{\cdot,\cdot,\cdot\}:A\otimes A\otimes A\rightarrow A$.
The pair $(A,\{\cdot,\cdot,\cdot\})$ is called a {\bf $3$-pre-Lie algebra} if the following identities hold:
\begin{eqnarray}
\{x,y,z\}&=&-\{y,x,z\},\label{eq:d1}\\
\nonumber\{x_1,x_2,\{x_3,x_4,x_5\}\}&=&\{[x_1,x_2,x_3]_C,x_4,x_5\}+\{x_3,[x_1,x_2,x_4]_C,x_5\}\\
&&+\{x_3,x_4,\{x_1,x_2,x_5\}\},\label{eq:d2}\\
\nonumber\{ [x_1,x_2,x_3]_C,x_4, x_5\}&=&\{x_1,x_2,\{ x_3,x_4, x_5\}\}+\{x_2,x_3,\{ x_1,x_4,x_5\}\}\\
&&+\{x_3,x_1,\{ x_2,x_4, x_5\}\},\label{eq:d3}
\end{eqnarray}
 where $x,y,z, x_i\in A, 1\leq i\leq 5$ and $[\cdot,\cdot,\cdot]_C$ is defined by
\begin{equation}
[x,y,z]_C=\{x,y,z\}+\{y,z,x\}+\{z,x,y\},\quad \forall  x,y,z\in A.\label{eq:3cc}
\end{equation}
\end{defi}
This agrees with the general construction of splitting of operads applied to the operad of the $3$-Lie algebra~\mcite{PBG}.

\begin{pro}
Let $(A,\{\cdot,\cdot,\cdot\})$ be a $3$-pre-Lie algebra. Then the induced $3$-commutator given by Eq.~\eqref{eq:3cc} defines
a $3$-Lie algebra.
\end{pro}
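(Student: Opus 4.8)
The plan is to verify the two defining properties of a $3$-Lie algebra for the induced bracket $[\cdot,\cdot,\cdot]_C$: total skew-symmetry, and the Fundamental Identity~\eqref{eq:de1}.

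Skew-symmetry is essentially immediate. By its definition~\eqref{eq:3cc}, $[\cdot,\cdot,\cdot]_C$ is invariant under cyclic permutation of its three arguments. Moreover, applying the skew-symmetry~\eqref{eq:d1} of $\{\cdot,\cdot,\cdot\}$ in its first two slots three times, one computes
\[
[y,x,z]_C=\{y,x,z\}+\{x,z,y\}+\{z,y,x\}=-\{x,y,z\}-\{z,x,y\}-\{y,z,x\}=-[x,y,z]_C .
\]
Cyclic invariance together with this sign change under the transposition of the first two arguments forces $[\cdot,\cdot,\cdot]_C$ to be alternating.

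The real work is the Fundamental Identity, and I would organize it around the operator $L\colon A\otimes A\to\gl(A)$, $L(x,y)z:=\{x,y,z\}$. Axiom~\eqref{eq:d1} says $L(x,y)=-L(y,x)$, and rewriting~\eqref{eq:d2} and~\eqref{eq:d3} in terms of $L$ yields precisely conditions (i) and (ii) of Definition~\ref{defi:rep} for $L$ with respect to the bracket $[\cdot,\cdot,\cdot]_C$:
\[
L(x_1,x_2)L(x_3,x_4)-L(x_3,x_4)L(x_1,x_2)=L([x_1,x_2,x_3]_C,x_4)-L([x_1,x_2,x_4]_C,x_3),
\]
\[
L([x_1,x_2,x_3]_C,x_4)=L(x_1,x_2)L(x_3,x_4)+L(x_2,x_3)L(x_1,x_4)+L(x_3,x_1)L(x_2,x_4).
\]
(This is the operadic content behind the construction, in accordance with the splitting-of-operads remark above.) To deduce~\eqref{eq:de1} for $[\cdot,\cdot,\cdot]_C$, I would expand every occurrence of $[\cdot,\cdot,\cdot]_C$ on both sides into its three cyclic $\{\cdot,\cdot,\cdot\}$-summands. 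On the left, $[x_1,x_2,[x_3,x_4,x_5]_C]_C$ produces terms of the shapes $\{x_1,x_2,\bullet\}$, $\{x_2,\bullet,x_1\}$ and $\{\bullet,x_1,x_2\}$ with $\bullet$ a cyclic permutation of $\{x_3,x_4,x_5\}$: the first shape is reduced by~\eqref{eq:d2}, while for the other two one first moves the bracketed block into the first argument using~\eqref{eq:d1} (recombining the three cyclic pieces into $[x_3,x_4,x_5]_C$) and then applies~\eqref{eq:d3}. On the right, each summand $[[x_1,x_2,x_i]_C,\dots]_C$ expands in the same way; the pieces in which the inner bracket sits in the last slot are reduced by~\eqref{eq:d2}, and those in which it can be brought (via~\eqref{eq:d1}) to the first slot by~\eqref{eq:d3}. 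After these reductions both sides become sums of terms of the form $L(\cdot,\cdot)L(\cdot,\cdot)x_j$; a final use of the operator form of~\eqref{eq:d2} to match the terms headed by $L(x_1,x_2)$ reduces the verification to a termwise comparison, in which the remaining monomials cancel in pairs.

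The only genuine obstacle is the bookkeeping: there are on the order of thirty $\{\cdot,\cdot,\cdot\}$-monomials to track, and the signs generated by the repeated use of~\eqref{eq:d1} must be handled carefully. The argument is otherwise routine and is the exact ternary analogue of the classical derivation of the Jacobi identity from the left pre-Lie identity for the commutator $[x,y]=xy-yx$. Alternatively, one may verify the two identities of Proposition~\ref{pro:someequalities} for $[\cdot,\cdot,\cdot]_C$, which involve only nested brackets of the form $[[\cdot,\cdot,\cdot]_C,\cdot,\cdot]_C$: each such bracket reduces by applying~\eqref{eq:d3} to its $\{[\cdots]_C,\cdot,\cdot\}$-part and~\eqref{eq:d2} (possibly after~\eqref{eq:d1}) to the remaining parts, after which the alternating and cyclic sums collapse.
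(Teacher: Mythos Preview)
Your approach is correct and essentially the same as the paper's: both verify skew-symmetry via \eqref{eq:d1} and then expand the Fundamental Identity into $\{\cdot,\cdot,\cdot\}$-monomials, cancelling them by repeated use of \eqref{eq:d2} and \eqref{eq:d3}. The paper is simply more explicit---it writes out all twenty terms of the difference $[x_1,x_2,[x_3,x_4,x_5]_C]_C-\sum[[x_1,x_2,x_i]_C,\ldots]_C$ and exhibits the five identities (three instances of \eqref{eq:d2}, two of \eqref{eq:d3}) that annihilate the sum---rather than framing it through the operator $L$; one small imprecision in your outline is that on the right-hand side the pieces with the inner $[\cdot,\cdot,\cdot]_C$ in the \emph{last} slot need only be expanded (they are already of the form $L(\cdot,\cdot)L(\cdot,\cdot)$), not reduced by \eqref{eq:d2}.
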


\begin{proof} By Eq.~\eqref{eq:d1}, the induced 3-commutator $[\cdot,\cdot,\cdot]_C$ given by Eq.~\eqref{eq:3cc} is skew-symmetric. For $x_1,x_2,x_3,x_4,x_5\in A$, we have
\begin{eqnarray*}
&&[x_1,x_2,[x_3,x_4,x_5]_C]_C-[[x_1,x_2,x_3]_C,x_4,x_5]_C-[x_3,[x_1,x_2,x_4]_C,x_5]_C\\
&&-[x_3,x_4,[ x_1,x_2, x_5]_C]_C\\
&=&\{x_1,x_2,\{x_3,x_4,x_5\}\}+\{x_1,x_2,\{x_4,x_5,x_3\}\}+\{x_1,x_2,\{x_5,x_3,x_4\}\}\\
&&+\{x_2,[x_3,x_4,x_5]_C, x_1\}+\{[x_3,x_4,x_5]_C, x_1, x_2\}\\
&&-\{x_4,x_5,\{x_1,x_2,x_3\}\}-\{x_4,x_5,\{x_2,x_3,x_1\}\}-\{x_4,x_5,\{x_3,x_1,x_2\}\}\\
&&-\{[x_1,x_2,x_3]_C,x_4,x_5\}-\{x_5,[x_1,x_2,x_3]_C,x_4\}\\
&&-\{x_5,x_3,\{x_1,x_2,x_4\}\}-\{x_5,x_3,\{x_2,x_4,x_1\}\}-\{x_5,x_3,\{x_4,x_1,x_2\}\}\\
&&-\{x_3,[x_1,x_2,x_4]_C,x_5\}-\{[x_1,x_2,x_4]_C,x_5,x_3\}\\
&&-\{x_3,x_4,\{ x_1,x_2, x_5\}\}-\{x_3,x_4,\{ x_2,x_5, x_1\}\}-\{x_3,x_4,\{ x_5,x_1, x_2\}\}\\
&&-\{x_4,[ x_1,x_2, x_5]_C,x_3\}-\{[ x_1,x_2, x_5]_C,x_3,x_4\}\\
&=&0.
\end{eqnarray*}
This is because
\begin{eqnarray*}
  \{x_1,x_2,\{x_3,x_4,x_5\}\}&=&\{[x_1,x_2,x_3]_C,x_4,x_5\}+\{x_3,[x_1,x_2,x_4]_C,x_5\} +\{x_3,x_4,\{ x_1,x_2, x_5\}\},\\
  \{x_1,x_2,\{x_4,x_5,x_3\}\}&=&\{[x_1,x_2,x_4]_C,x_5,x_3\}+\{x_4,[ x_1,x_2, x_5]_C,x_3\} +\{x_4,x_5,\{x_1,x_2,x_3\}\},\\
  \{x_1,x_2,\{x_5,x_3,x_4\}\}&=&\{x_5,[x_1,x_2,x_3]_C,x_4\}+\{[ x_1,x_2, x_5]_C,x_3,x_4\} +\{x_5,x_3,\{x_1,x_2,x_4\}\},\\
  \{x_2,[x_3,x_4,x_5]_C, x_1\}&=&\{x_4,x_5,\{x_2,x_3,x_1\}\}+\{x_5,x_3,\{x_2,x_4,x_1\}\}+\{x_3,x_4,\{ x_2,x_5, x_1\}\},\\
  \{[x_3,x_4,x_5]_C, x_1, x_2\}&=&\{x_3,x_4,\{ x_5,x_1, x_2\}\} +\{x_4,x_5,\{x_3,x_1,x_2\}\}+\{x_5,x_3,\{x_4,x_1,x_2\}\}.
\end{eqnarray*}
Thus the proof is completed.
\end{proof}

\begin{defi}
Let $(A,\{\cdot,\cdot,\cdot\})$ be a $3$-pre-Lie algebra. The $3$-Lie algebra $(A,[\cdot,\cdot,\cdot]_C)$
is called the {\bf sub-adjacent $3$-Lie algebra} of $(A,\{\cdot,\cdot,\cdot\})$ and $(A,\{\cdot,\cdot,\cdot\})$ is called a {\bf compatible
$3$-pre-Lie algebra} of the $3$-Lie algebra $(A,[\cdot,\cdot,\cdot]_C)$.
\end{defi}

Let $(A,\{\cdot,\cdot,\cdot\})$ be a $3$-pre-Lie algebra. Define a skew-symmetric  linear map $L: \otimes^2A\rightarrow \gl(A)$
by
\begin{equation}\label{eq:R}
L(x,y)z=\{x,y,z\},\quad \forall x,y,z\in A.
\end{equation}

By the definitions of a $3$-pre-Lie algebra and a representation of a $3$-Lie algebra, we immediately obtain

\begin{pro}
With the above notations, $(A,L)$ is a representation of the
$3$-Lie algebra $(A,[\cdot,\cdot,\cdot]_C)$. On the other hand,
let $A$ be a vector space with a linear map
$\{\cdot,\cdot,\cdot\}:A\otimes A\otimes A\rightarrow A$
satisfying
 Eq.~\eqref{eq:d1}. Then $(A,\{\cdot,\cdot,\cdot\}) $ is a $3$-pre-Lie algebra if $[\cdot,\cdot,\cdot]_C$ defined by Eq~\eqref{eq:3cc} is a $3$-Lie algebra and the left multiplication $L$ defined by Eq.~\eqref{eq:R}
gives a representation of this $3$-Lie algebra.
\end{pro}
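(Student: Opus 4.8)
The plan is to observe that, once the left-multiplication map $L$ of Eq.~\eqref{eq:R} is substituted into Definition~\ref{defi:rep}, the two representation axioms become --- verbatim, or after a single use of the skew-symmetry Eq.~\eqref{eq:d1} --- the two 3-pre-Lie identities Eqs.~\eqref{eq:d2} and~\eqref{eq:d3}. Thus the whole proposition reduces to a translation of definitions which is in fact an equivalence, and this yields both directions at once.

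Concretely, first I would note that skew-symmetry of $L(x,y)$ in $x,y$ is literally Eq.~\eqref{eq:d1}. Next, evaluating axiom (ii) of Definition~\ref{defi:rep} on an element $x_5$, and unwinding $L$ together with Eq.~\eqref{eq:3cc}, turns it into
\[
\{[x_1,x_2,x_3]_C,x_4,x_5\}=\{x_1,x_2,\{x_3,x_4,x_5\}\}+\{x_2,x_3,\{x_1,x_4,x_5\}\}+\{x_3,x_1,\{x_2,x_4,x_5\}\},
\]
which is precisely Eq.~\eqref{eq:d3}. Likewise, evaluating axiom (i) on $x_5$ gives
\[
\{x_1,x_2,\{x_3,x_4,x_5\}\}-\{x_3,x_4,\{x_1,x_2,x_5\}\}=\{[x_1,x_2,x_3]_C,x_4,x_5\}-\{[x_1,x_2,x_4]_C,x_3,x_5\},
\]
and I would check that this is equivalent to Eq.~\eqref{eq:d2}: transpose the term $\{x_3,x_4,\{x_1,x_2,x_5\}\}$ in Eq.~\eqref{eq:d2} to the left-hand side and rewrite the surviving right-hand term $\{x_3,[x_1,x_2,x_4]_C,x_5\}$ as $-\{[x_1,x_2,x_4]_C,x_3,x_5\}$ using Eq.~\eqref{eq:d1}; the reverse implication retraces the same steps.

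With these three equivalences in hand, the first assertion follows: a 3-pre-Lie algebra satisfies Eqs.~\eqref{eq:d1}--\eqref{eq:d3}, hence $L$ is skew-symmetric and obeys axioms (i) and (ii), while the preceding proposition shows that $(A,[\cdot,\cdot,\cdot]_C)$ is a 3-Lie algebra; therefore $(A,L)$ is a representation of it. For the converse, assuming Eq.~\eqref{eq:d1}, that $(A,[\cdot,\cdot,\cdot]_C)$ is a 3-Lie algebra, and that $(A,L)$ is a representation of it, axioms (i) and (ii) translate back --- via the same two identities, again invoking Eq.~\eqref{eq:d1} for axiom (i) --- into Eqs.~\eqref{eq:d2} and~\eqref{eq:d3}, so that $(A,\{\cdot,\cdot,\cdot\})$ is a 3-pre-Lie algebra.

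The computation is entirely mechanical; the only point requiring a little care is that axiom (i) does not coincide with Eq.~\eqref{eq:d2} on the nose but only after one application of Eq.~\eqref{eq:d1}, which is precisely why Eq.~\eqref{eq:d1} must be retained as a hypothesis in the converse direction rather than being extracted from the representation property alone. Beyond keeping track of the cyclic index pattern of axiom (ii) against that of Eq.~\eqref{eq:d3}, I do not expect any genuine obstacle.
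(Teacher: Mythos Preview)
Your proposal is correct and is exactly the unpacking of definitions that the paper has in mind: the paper gives no explicit proof, merely stating that the proposition follows ``by the definitions of a $3$-pre-Lie algebra and a representation of a $3$-Lie algebra,'' and your translation of axioms (i) and (ii) of Definition~\ref{defi:rep} into Eqs.~\eqref{eq:d2} and~\eqref{eq:d3} (with one use of Eq.~\eqref{eq:d1} for the former) is precisely that verification.
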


New identities of 3-pre-Lie algebras can be derived from Proposition~\ref{pro:new}. For example,

\begin{cor}
Let $(A,\{\cdot,\cdot,\cdot\})$ be a $3$-pre-Lie algebra. Then the following identities hold:
\begin{eqnarray*}
\{ [x_1,x_2,x_3]_C,x_4, x_5\}-\{[x_1,x_2,x_4]_C,x_3, x_5 \}\\
+\{[x_1,x_3,x_4]_C,x_2, x_5\}
-\{[x_2,x_3,x_4]_C,x_1, x_5\}&=&0,\\
\{x_1,x_2,\{x_3,x_4, x_5\}\}+\{x_3,x_4,\{x_1,x_2, x_5\}\}+\{x_2,x_4,\{x_3,x_1, x_5\}\}\\\nonumber
+\{x_3,x_1,\{x_2,x_4, x_5\}\}+\{x_2,x_3,\{x_1,x_4, x_5\}\}+\{x_1,x_4,\{x_2,x_3, x_5\}\}&=&0,
\end{eqnarray*}
for $x_i\in A, 1\leq i\leq 5$.
\end{cor}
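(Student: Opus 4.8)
The plan is to recognize both identities as immediate specializations of Proposition~\ref{pro:rep properties} applied to the representation $(A,L)$ of the sub-adjacent $3$-Lie algebra $(A,[\cdot,\cdot,\cdot]_C)$, where $L(x,y)z=\{x,y,z\}$ as in Eq.~\eqref{eq:R}. The proposition preceding this corollary already establishes that $(A,L)$ is a representation of $(A,[\cdot,\cdot,\cdot]_C)$, and $L$ is skew-symmetric in its first two arguments by Eq.~\eqref{eq:d1}; hence Proposition~\ref{pro:rep properties}, which is itself extracted from the proof of Proposition~\ref{pro:new}, applies with $\rho=L$ and the bracket taken to be $[\cdot,\cdot,\cdot]_C$.

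First I would write identity~(a) of Proposition~\ref{pro:rep properties} for $\rho=L$, namely the operator identity $L([x_1,x_2,x_3]_C,x_4)-L([x_1,x_2,x_4]_C,x_3)+L([x_1,x_3,x_4]_C,x_2)-L([x_2,x_3,x_4]_C,x_1)=0$ on $A$, and then evaluate it on an arbitrary element $x_5\in A$. Since $L(u,v)w=\{u,v,w\}$, this produces the first displayed identity of the corollary verbatim. Next I would write identity~(b) of Proposition~\ref{pro:rep properties} for $\rho=L$, which is a sum of six operator compositions of the form $L(x_i,x_j)L(x_k,x_\ell)$, and evaluate it on an arbitrary $x_5\in A$; each composite becomes $L(x_i,x_j)L(x_k,x_\ell)x_5=\{x_i,x_j,\{x_k,x_\ell,x_5\}\}$, and after matching the six resulting summands with those listed in the corollary (they occur there in a permuted order) one obtains the second displayed identity.

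The argument is therefore essentially a translation, so there is no genuine obstacle; the only point requiring attention is the bookkeeping of the signs in~(a) and of the ordering of the six summands in~(b). As an alternative that bypasses Proposition~\ref{pro:rep properties}, one could argue directly from the defining relations Eqs.~\eqref{eq:d2} and~\eqref{eq:d3} of a $3$-pre-Lie algebra, substituting one relation into another exactly in the manner of the proof of Proposition~\ref{pro:new}; but routing the proof through the representation $(A,L)$ keeps it short and is the approach I would present.
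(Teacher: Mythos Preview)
Your proposal is correct and matches the paper's intended argument: the paper simply states that these identities ``can be derived from Proposition~\ref{pro:new}'', and your route---apply Proposition~\ref{pro:rep properties} (itself extracted from Proposition~\ref{pro:new}) to the representation $(A,L)$ of the sub-adjacent $3$-Lie algebra and evaluate on $x_5$---is exactly how that derivation goes. The bookkeeping you flag (signs in~(a), reordering of the six terms in~(b)) is the only thing to check and it comes out as you describe.
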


\begin{pro}\label{pro:3preLieT}
Let $(A,[\cdot,\cdot,\cdot])$ be a $3$-Lie algebra and $(V,\rho)$ a representation. Suppose that the linear map $T:V\rightarrow A$ is an $\mathcal O$-operator associated
to $(V,\rho)$. Then there exists a $3$-pre-Lie algebra structure on $V$ given by
\begin{equation}
\{u,v,w\}=\rho(Tu,Tv)w,\quad\forall ~ u,v,w\in V.
\end{equation}
\end{pro}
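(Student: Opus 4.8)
The plan is to verify directly the three defining axioms~\eqref{eq:d1}, \eqref{eq:d2} and~\eqref{eq:d3} of a $3$-pre-Lie algebra for the operation $\{u,v,w\}:=\rho(Tu,Tv)w$, using only the skew-symmetry of $\rho$, the two identities (i) and (ii) of Definition~\ref{defi:rep}, and the defining identity~\eqref{eq:Ooperator} of an $\mathcal{O}$-operator.

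First I would record the key consequence of the $\mathcal{O}$-operator condition: for all $u,v,w\in V$,
\[
T([u,v,w]_C)=T\big(\rho(Tu,Tv)w+\rho(Tv,Tw)u+\rho(Tw,Tu)v\big)=[Tu,Tv,Tw],
\]
so that $T$ is a morphism from the (not yet established) sub-adjacent $3$-Lie algebra $(V,[\cdot,\cdot,\cdot]_C)$ to $(A,[\cdot,\cdot,\cdot])$. Axiom~\eqref{eq:d1}, namely $\{x,y,z\}=-\{y,x,z\}$, is then immediate from the skew-symmetry of $\rho$ in its two arguments.

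Next I would check~\eqref{eq:d2}. The left-hand side is $\rho(Tx_1,Tx_2)\rho(Tx_3,Tx_4)x_5$. On the right-hand side, the displayed morphism property gives $\{[x_1,x_2,x_3]_C,x_4,x_5\}=\rho([Tx_1,Tx_2,Tx_3],Tx_4)x_5$ and $\{x_3,[x_1,x_2,x_4]_C,x_5\}=\rho(Tx_3,[Tx_1,Tx_2,Tx_4])x_5=-\rho([Tx_1,Tx_2,Tx_4],Tx_3)x_5$, while $\{x_3,x_4,\{x_1,x_2,x_5\}\}=\rho(Tx_3,Tx_4)\rho(Tx_1,Tx_2)x_5$; thus~\eqref{eq:d2} reduces exactly to identity (i) of Definition~\ref{defi:rep} evaluated at $Tx_1,Tx_2,Tx_3,Tx_4$ acting on $x_5$. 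Likewise, for~\eqref{eq:d3} the left-hand side equals $\rho([Tx_1,Tx_2,Tx_3],Tx_4)x_5$ by the morphism property, and the right-hand side equals $\rho(Tx_1,Tx_2)\rho(Tx_3,Tx_4)x_5+\rho(Tx_2,Tx_3)\rho(Tx_1,Tx_4)x_5+\rho(Tx_3,Tx_1)\rho(Tx_2,Tx_4)x_5$, so~\eqref{eq:d3} is precisely identity (ii) of Definition~\ref{defi:rep} at $Tx_1,Tx_2,Tx_3,Tx_4$ on $x_5$.

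There is no serious obstacle; the only points needing care are the sign bookkeeping coming from the skew-symmetry of $\rho$, and the observation that~\eqref{eq:Ooperator} is exactly what turns each bracket $[\cdot,\cdot,\cdot]_C$ occurring inside a $\{\cdot,\cdot,\cdot\}$ into a bracket $[\cdot,\cdot,\cdot]$ of elements of $A$, after which the two remaining $3$-pre-Lie identities become the two representation axioms verbatim.
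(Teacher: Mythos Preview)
Your proof is correct and follows essentially the same approach as the paper's: verify \eqref{eq:d1} from the skew-symmetry of $\rho$, use the $\mathcal{O}$-operator identity to get $T[u,v,w]_C=[Tu,Tv,Tw]$, and then reduce \eqref{eq:d2} and \eqref{eq:d3} to the two representation axioms (i) and (ii) of Definition~\ref{defi:rep} evaluated at $Tx_1,\dots,Tx_4$ acting on $x_5$. The paper's argument is term-by-term identical to yours.
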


\begin{proof} Let $u,v,w\in V$.
It is obvious that
$$\{u,v,w\}=\rho(Tu,Tv)w=-\rho(Tv,Tu)w=-\{v,u,w\}.$$
Furthermore, the following equation holds.
$$
[u,v,w]_C=\rho(Tu,Tv)w+\rho(Tv,Tw)u+\rho(Tw,Tu)v.
$$
Since $T$ is an $\mathcal O$-operator, we have
$$
T[u,v,w]_C=[Tu,Tv,Tw].
$$
For $v_1,v_2,v_3,v_4,v_5\in V$, we have
\begin{eqnarray*}
\{v_1,v_2,\{v_3,v_4,v_5 \}\}&=&\rho(Tv_1,Tv_2)\rho(Tv_3,Tv_4)v_5;\\
\{[v_1,v_2,v_3]_C,v_4,v_5\}&=&\rho(T[v_1,v_2,v_3]_C,Tv_4)v_5=\rho([Tv_1,Tv_2,Tv_3],Tv_4)v_5;\\
\{v_3,[v_1,v_2,v_4]_C,v_5\}&=&\rho(Tv_3,T[v_1,v_2,v_4]_C)v_5=\rho(Tv_3,[T v_1,Tv_2,Tv_4])v_5;\\
\{v_3,v_4,\{v_1,v_2,v_5\}\}&=&\rho(Tv_3,Tv_4)\rho(Tv_1,Tv_2)v_5.
\end{eqnarray*}
Since $(V,\rho)$ is a representation of $A$, by Condition (i) in Definition \ref{defi:rep}, Eq.~(\ref{eq:d2}) holds. On the other hand, we have
\begin{eqnarray*}
\{[v_1,v_2,v_3]_C,v_4, v_5\}&=&\rho(T[v_1,v_2,v_3]_C,Tv_4)v_5=\rho([Tv_1,Tv_2,Tv_3],Tv_4)v_5;\\
\{v_1,v_2,\{v_3,v_4, v_5\}\}&=&\rho(Tv_1,Tv_2)\rho(Tv_3,Tv_4)v_5;\\
\{v_2,v_3,\{v_1,v_4, v_5\}\}&=&\rho(Tv_2,Tv_3)\rho(Tv_1,Tv_4)v_5;\\
\{v_3,v_1,\{v_2,v_4, v_5\}\}&=&\rho(Tv_3,Tv_1)\rho(Tv_2,Tv_4)v_5.
\end{eqnarray*}
By Condition (ii) in Definition \ref{defi:rep}, Eq.~(\ref{eq:d3})
holds. This completes the proof.
\end{proof}

\begin{cor}
With the above conditions,  $(V,[\cdot,\cdot,\cdot]_C)$ is a $3$-Lie
algebra as the sub-adjacent $3$-Lie algebra of the $3$-pre-Lie
algebra given in Proposition \ref{pro:3preLieT}, and $T$ is a $3$-Lie algebra morphism from $(V,[\cdot,\cdot,\cdot]_C)$ to $(A,[\cdot,\cdot,\cdot])$. Furthermore,
$T(V)=\{Tv|v\in V\}\subset A$ is a $3$-Lie subalgebra of $A$ and there is an induced $3$-pre-Lie algebra structure $\{\cdot,\cdot,\cdot\}_{T(V)}$ on
$T(V)$ given by
\begin{equation}
\{Tu,Tv,Tw\}_{T(V)}:=T\{u,v,w\},\quad\;\forall u,v,w\in V.
\end{equation}
\end{cor}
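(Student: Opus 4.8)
The strategy is to transport all the structure along $T$, using the $\mathcal O$-operator identity \eqref{eq:Ooperator} as the only nontrivial ingredient. The sub-adjacent statement is immediate: by Proposition~\ref{pro:3preLieT}, $(V,\{\cdot,\cdot,\cdot\})$ with $\{u,v,w\}=\rho(Tu,Tv)w$ is a $3$-pre-Lie algebra, so the proposition asserting that the induced $3$-commutator \eqref{eq:3cc} of a $3$-pre-Lie algebra is a $3$-Lie algebra shows that $(V,[\cdot,\cdot,\cdot]_C)$ is a $3$-Lie algebra. For the morphism claim, recall from the proof of Proposition~\ref{pro:3preLieT} that $[u,v,w]_C=\rho(Tu,Tv)w+\rho(Tv,Tw)u+\rho(Tw,Tu)v$; comparing this with \eqref{eq:Ooperator} gives $T[u,v,w]_C=[Tu,Tv,Tw]$ for all $u,v,w\in V$, which is precisely the statement that $T$ is a morphism of $3$-Lie algebras from $(V,[\cdot,\cdot,\cdot]_C)$ to $(A,[\cdot,\cdot,\cdot])$. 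From this the subalgebra claim follows at once: for $x,y,z\in T(V)$, choosing preimages $x=Tu$, $y=Tv$, $z=Tw$ yields $[x,y,z]=T[u,v,w]_C\in T(V)$, so $T(V)$ is closed under the bracket of $A$.

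The one point requiring real care is that the proposed formula $\{Tu,Tv,Tw\}_{T(V)}:=T\{u,v,w\}$ gives a well-defined operation, i.e. is independent of the chosen preimages; equivalently, $T\{u,v,w\}=0$ whenever one of $u,v,w$ lies in $\ker T$. If $u\in\ker T$ or $v\in\ker T$, then $\{u,v,w\}=\rho(Tu,Tv)w=0$ already, since $\rho$ is bilinear and skew-symmetric, so $\rho(0,\cdot)=\rho(\cdot,0)=0$. The remaining case $w=k\in\ker T$ is where \eqref{eq:Ooperator} is used: substituting $w=k$ into \eqref{eq:Ooperator} makes the left-hand side $[Tu,Tv,Tk]=[Tu,Tv,0]=0$ vanish and kills the two terms $T(\rho(Tv,Tk)u)$ and $T(\rho(Tk,Tu)v)$ on the right, leaving $T(\rho(Tu,Tv)k)=T\{u,v,k\}=0$. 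Thus $\ker T$ absorbs $\{\cdot,\cdot,\cdot\}$ in each slot --- it is a $3$-pre-Lie ideal --- and the formula is well defined. I expect this to be the main (indeed the only) obstacle.

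Once well-definedness is in hand, $T$ becomes a surjective linear map $V\to T(V)$ satisfying $T\{u,v,w\}=\{Tu,Tv,Tw\}_{T(V)}$ by construction. Summing this identity over cyclic permutations and using that $T$ is a $3$-Lie morphism shows that the $3$-commutator of $\{\cdot,\cdot,\cdot\}_{T(V)}$ coincides with the restriction to $T(V)$ of the bracket of $A$, so $T$ also intertwines the two $3$-commutators. Therefore each of the defining identities \eqref{eq:d1}, \eqref{eq:d2}, \eqref{eq:d3} for $\{\cdot,\cdot,\cdot\}_{T(V)}$ is obtained by writing arbitrary elements of $T(V)$ as $T$-images of elements of $V$ and applying $T$ to the corresponding identity, already established in $(V,\{\cdot,\cdot,\cdot\})$ in Proposition~\ref{pro:3preLieT}. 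This verifies that $(T(V),\{\cdot,\cdot,\cdot\}_{T(V)})$ is a $3$-pre-Lie algebra and completes the proof.
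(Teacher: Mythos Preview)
Your proof is correct. The paper states this corollary without proof, evidently regarding it as immediate from Proposition~\ref{pro:3preLieT} and the $\mathcal O$-operator identity $T[u,v,w]_C=[Tu,Tv,Tw]$ already derived there. Your argument supplies the details the paper omits; in particular, your verification that the formula $\{Tu,Tv,Tw\}_{T(V)}:=T\{u,v,w\}$ is independent of the choice of preimages is the only point with genuine content, and you handle it correctly by showing that $\ker T$ is absorbed in each slot (trivially in the first two via $\rho(0,\cdot)=\rho(\cdot,0)=0$, and in the third via the $\mathcal O$-operator identity with $Tw=0$).
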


\begin{pro}\label{pro:preLieOoper}
Let $(A,[\cdot,\cdot,\cdot])$ be a $3$-Lie algebra. Then there exists a compatible $3$-pre-Lie algebra if and only if there exists an invertible $\mathcal O$-operator on $A$.
\end{pro}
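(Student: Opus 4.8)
The plan is to prove both implications by explicit constructions, using Proposition~\ref{pro:3preLieT}, the corollary following it, and the proposition asserting that the left multiplication $L$ of Eq.~\eqref{eq:R} is a representation of the sub-adjacent $3$-Lie algebra.

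For the ``only if'' direction, suppose $(A,\{\cdot,\cdot,\cdot\})$ is a compatible $3$-pre-Lie algebra of $(A,[\cdot,\cdot,\cdot])$, so that $[\cdot,\cdot,\cdot]_C=[\cdot,\cdot,\cdot]$ for $[\cdot,\cdot,\cdot]_C$ given by Eq.~\eqref{eq:3cc}. Then $(A,L)$ is a representation of $(A,[\cdot,\cdot,\cdot])$, and I would verify directly from Eq.~\eqref{eq:3cc} that the identity map $\id_A:A\to A$ is an $\mathcal{O}$-operator associated to $(A,L)$: for all $x,y,z\in A$ the right-hand side of Eq.~\eqref{eq:Ooperator} is
\begin{align*}
\id_A\big(L(x,y)z+L(y,z)x+L(z,x)y\big)&=\{x,y,z\}+\{y,z,x\}+\{z,x,y\}\\
&=[x,y,z]_C=[x,y,z]=[\id_A x,\id_A y,\id_A z].
\end{align*}
Since $\id_A$ is invertible, this exhibits an invertible $\mathcal{O}$-operator on $A$.

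For the ``if'' direction, let $T:V\to A$ be an invertible $\mathcal{O}$-operator associated to a representation $(V,\rho)$. By Proposition~\ref{pro:3preLieT}, $V$ carries the $3$-pre-Lie structure $\{u,v,w\}_V=\rho(Tu,Tv)w$, whose sub-adjacent $3$-Lie bracket is $[\cdot,\cdot,\cdot]_C$; and by the corollary following Proposition~\ref{pro:3preLieT}, $T$ is a $3$-Lie algebra morphism from $(V,[\cdot,\cdot,\cdot]_C)$ to $(A,[\cdot,\cdot,\cdot])$. Since $T$ is a linear isomorphism, I would transport this $3$-pre-Lie structure to $A$ by setting $\{x,y,z\}_A:=T\big(\{T^{-1}x,T^{-1}y,T^{-1}z\}_V\big)=T\big(\rho(x,y)(T^{-1}z)\big)$ for all $x,y,z\in A$. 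As $T$ is bijective, $(A,\{\cdot,\cdot,\cdot\}_A)$ inherits Eqs.~\eqref{eq:d1}--\eqref{eq:d3} from $(V,\{\cdot,\cdot,\cdot\}_V)$, so it is a $3$-pre-Lie algebra and $T$ is an isomorphism of $3$-pre-Lie algebras. Finally, for its induced $3$-commutator one computes, using Eq.~\eqref{eq:3cc} and then the morphism property of $T$, that $\{x,y,z\}_A+\{y,z,x\}_A+\{z,x,y\}_A=T\big([T^{-1}x,T^{-1}y,T^{-1}z]_C\big)=[TT^{-1}x,TT^{-1}y,TT^{-1}z]=[x,y,z]$. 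Hence $(A,\{\cdot,\cdot,\cdot\}_A)$ is a compatible $3$-pre-Lie algebra of $(A,[\cdot,\cdot,\cdot])$.

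I do not expect a genuine obstacle: the substantive content is already contained in Proposition~\ref{pro:3preLieT}, its corollary, and the proposition on $L$. The only step requiring some care is the bookkeeping with $T^{-1}$ when transporting the $3$-pre-Lie structure along the isomorphism $T$ --- in particular, confirming that both the defining identities Eqs.~\eqref{eq:d1}--\eqref{eq:d3} and the compatibility with the original bracket are preserved, which follows routinely once one observes that $T$ is, by construction, an isomorphism of $3$-pre-Lie algebras that intertwines $[\cdot,\cdot,\cdot]_C$ on $V$ with $[\cdot,\cdot,\cdot]$ on $A$.
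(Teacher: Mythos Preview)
Your proposal is correct and follows essentially the same approach as the paper: the identity map is an invertible $\mathcal{O}$-operator associated to $(A,L)$ for the ``only if'' direction, and for the ``if'' direction one transports the $3$-pre-Lie structure from $V$ to $A$ along the isomorphism $T$ and checks that its $3$-commutator recovers $[\cdot,\cdot,\cdot]$. Your write-up is in fact a bit more explicit than the paper's in invoking Proposition~\ref{pro:3preLieT} and its corollary and in justifying why the transported structure satisfies Eqs.~\eqref{eq:d1}--\eqref{eq:d3}.
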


\begin{proof}
Let $T$ be an invertible $\mathcal O$-operator of $A$ associated to a representation $(V,\rho)$. Then there exists a 3-pre-Lie algebra structure on $V$
defined by
$$\{u,v,w\}=\rho(Tv,Tw)u,\quad \forall u,v,w\in V.$$
Moreover, there is an induced 3-pre-Lie algebra structure $\{\cdot,\cdot,\cdot\}_A$ on $A=T(V)$ given by
$$\{x,y,z\}_A=T\{T^{-1}x,T^{-1}y,T^{-1}z\}=T\rho(x,y)T^{-1}(z)$$
for all $x,y,z\in A$. Since $T$ is an $\mathcal O$-operator, we have
\begin{eqnarray*}
[x,y,z]&=&T\left(\rho (y,z)T^{-1}(x)+\rho (z,x)T^{-1}(y)+\rho (x,y)T^{-1}z\right)\\
&=&\{x,y,z\}_A+\{y,z,x\}_A+\{z,x,y\}_A.\end{eqnarray*}
 Therefore $(A,\{\cdot,\cdot,\cdot\}_A)$ is a compatible 3-pre-Lie algebra.
Conversely, the identity map ${\rm id}$ is an $\mathcal O$-operator of the sub-adjacent 3-Lie algebra of a 3-pre-Lie algebra
associated to the representation $(A, L)$.
\end{proof}

\begin{pro}
Let $(A,[\cdot,\cdot,\cdot])$ be a $3$-Lie algebra, and $B$  a nondegenerate skew-symmetric bilinear form satisfying Eq.~\eqref{eq:3sb}. Then there exists
a compatible $3$-pre-Lie algebra structure on $A$ given by
\begin{equation}
B(\{x,y,z\},w)=-B(z,[x,y,w]),\quad \forall  x,y,z,w\in A.
\end{equation}
\end{pro}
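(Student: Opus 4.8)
The plan is to deduce the statement from the $\mathcal O$-operator machinery developed above rather than to verify the $3$-pre-Lie axioms by hand. Since $B$ is nondegenerate, it induces an isomorphism $A\to A^{*}$; let $r\colon A^{*}\to A$ be its inverse, which corresponds via Eq.~\eqref{eq:opform} to an element of $A\otimes A$, still denoted $r$, so that by construction $B(x,y)=\langle r^{-1}(x),y\rangle$ for all $x,y\in A$. Skew-symmetry of $B$ is equivalent to skew-symmetry of $r$, and, by the Proposition characterizing nondegenerate skew-symmetric solutions of the \tcybe through exactly this form $B$, the hypothesis that $B$ satisfies Eq.~\eqref{eq:3sb} means precisely that $r$ is a (skew-symmetric, nondegenerate) solution of the $3$-Lie classical Yang-Baxter equation. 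By the Example immediately following Definition~\ref{defi:o}, such an $r$ is an $\mathcal O$-operator associated to the coadjoint representation $(A^{*},{\rm ad}^{*})$, and it is invertible.

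Next I would apply Proposition~\ref{pro:3preLieT} and the Corollary following it (equivalently, the construction in the first half of Proposition~\ref{pro:preLieOoper}) to the invertible $\mathcal O$-operator $r$ with $\rho={\rm ad}^{*}$. This produces a $3$-pre-Lie structure on $A^{*}$ by $\{u,v,w\}_{A^{*}}={\rm ad}^{*}_{r(u),r(v)}w$ and, since $r(A^{*})=A$, a compatible induced $3$-pre-Lie structure on $A$ given by
\[
\{x,y,z\}_{A}:=r\bigl(\{r^{-1}(x),r^{-1}(y),r^{-1}(z)\}_{A^{*}}\bigr)=r\bigl({\rm ad}^{*}_{x,y}\bigl(r^{-1}(z)\bigr)\bigr),\qquad x,y,z\in A.
\]
Here ``compatible'' means $[x,y,z]=\{x,y,z\}_{A}+\{y,z,x\}_{A}+\{z,x,y\}_{A}$, which is part of the cited construction (it can also be read off directly from Eq.~\eqref{eq:3sb} after pairing with an arbitrary $w$ and using skew-symmetry of $B$). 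So existence is settled; it remains to identify this structure with the one in the statement.

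For the last step I would pair $\{x,y,z\}_{A}$ with an arbitrary $w\in A$: using $B(x,y)=\langle r^{-1}(x),y\rangle$, the definition of ${\rm ad}^{*}$ in Eq.~\eqref{eq:dual}, and ${\rm ad}_{x,y}(w)=[x,y,w]$,
\[
B(\{x,y,z\}_{A},w)=\langle {\rm ad}^{*}_{x,y}\bigl(r^{-1}(z)\bigr),w\rangle=-\langle r^{-1}(z),[x,y,w]\rangle=-B(z,[x,y,w]).
\]
Since $B$ is nondegenerate, this condition pins down $\{x,y,z\}_{A}$ uniquely for each $x,y,z$, so the induced structure coincides with the prescription in the statement; in particular that prescription is well defined and yields a compatible $3$-pre-Lie algebra.

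The only delicate point — and there is no deeper obstacle — is the bookkeeping of duality and sign conventions: one must check that the $r$ attached to $B$ is literally the $\mathcal O$-operator appearing in the Example, and one should use the convention $\{u,v,w\}=\rho(Tu,Tv)w$ of Proposition~\ref{pro:3preLieT} (rather than the slightly differently indexed formula written in the proof of Proposition~\ref{pro:preLieOoper}). If a self-contained argument is preferred, one may instead define $\{x,y,z\}$ outright by the displayed equation (legitimate because $B$ is nondegenerate), deduce skew-symmetry in $x,y$, i.e.\ Eq.~\eqref{eq:d1}, from skew-symmetry of the bracket, and verify Eqs.~\eqref{eq:d2} and~\eqref{eq:d3} by pairing with an arbitrary $w$ and repeatedly invoking Eq.~\eqref{eq:3sb} and the Fundamental Identity~\eqref{eq:de1}; this is more computational but uses nothing beyond the results already stated.
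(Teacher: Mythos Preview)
Your proposal is correct and follows essentially the same route as the paper: define the inverse isomorphism $T=r:A^{*}\to A$ via $\langle T^{-1}x,y\rangle=B(x,y)$, observe that Eq.~\eqref{eq:3sb} makes $T$ an invertible $\mathcal O$-operator for the coadjoint representation, invoke Proposition~\ref{pro:preLieOoper} to obtain the compatible $3$-pre-Lie structure $\{x,y,z\}_{A}=T\bigl({\rm ad}^{*}_{x,y}T^{-1}z\bigr)$, and then compute $B(\{x,y,z\}_{A},w)=-B(z,[x,y,w])$ exactly as you do. The only difference is cosmetic: the paper passes directly from Eq.~\eqref{eq:3sb} to the $\mathcal O$-operator property, while you make the intermediate step through the \tcybe characterization and the Example after Definition~\ref{defi:o} explicit; your remark about the indexing convention $\{u,v,w\}=\rho(Tu,Tv)w$ is also apt.
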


\begin{proof}
Define a linear map $T:A^*\longrightarrow A$  by $\langle T^{-1}x,y\rangle=B(x,y)$, or equivalently, $B(T\xi,y)=\langle\xi,y\rangle$ for all $x,y\in A$ and $\xi\in A^*$.  By Eq.~\eqref{eq:3sb}, we obtain that $T$ is an invertible $\mathcal O$-operator associated to
the coadjoint representation $(A^*,{\rm ad}^*)$. By Proposition \ref{pro:preLieOoper},   there exists a compatible 3-pre-Lie algebra on $A$ given by $\{x,y,z\}_A=T({\rm ad}^*_{x,y}T^{-1}z)$ for $x,y,z\in A$. Then we have
\begin{eqnarray*}
B(\{x,y,z\}_A,w)&=&B(T({\rm ad}^*_{x,y}T^{-1}z),w)=\langle {\rm ad}^*_{x,y}T^{-1}z, w\rangle\\
&=&\langle T^{-1}(z), -[x,y,w]\rangle=-B(z,[x,y,w]),
\end{eqnarray*}
for $x,y,z,w\in A$. This completes the proof.
\end{proof}

We end this subsection by obtaining solutions of  the $3$-Lie classical Yang-Baxter equation from  $3$-pre-Lie algebras.
\begin{thm} Let $(A,[\cdot,\cdot,\cdot])$ be a $3$-pre-Lie algebra. Let $\{e_i\}$ be a basis of $A$ and $\{e_i^*\}$   the dual basis. Then
\begin{equation}
r=\sum_{i}e_i\otimes e_i^*-e_i^*\otimes e_i\
\end{equation}
is a skew-symmetric solution of the $3$-Lie classical Yang-Baxter equation in $A\ltimes_{L^*}A^*$.
\end{thm}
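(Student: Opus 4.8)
The plan is to deduce this from Theorem~\ref{thm:O-r} by exhibiting the identity map as an $\mathcal{O}$-operator. Write $(A,\{\cdot,\cdot,\cdot\})$ for the given $3$-pre-Lie algebra, $[\cdot,\cdot,\cdot]_C$ for its sub-adjacent $3$-Lie bracket defined in Eq.~\eqref{eq:3cc}, and $L$ for the left-multiplication map of Eq.~\eqref{eq:R}. As noted in the proof of Proposition~\ref{pro:preLieOoper}, the identity map ${\rm id}_A\colon A\to A$ is an $\mathcal{O}$-operator of $(A,[\cdot,\cdot,\cdot]_C)$ associated to the representation $(A,L)$: specializing Definition~\ref{defi:o} to $T={\rm id}_A$ and $\rho=L$ turns Eq.~\eqref{eq:Ooperator} into $[u,v,w]_C=L(u,v)w+L(v,w)u+L(w,u)v=\{u,v,w\}+\{v,w,u\}+\{w,u,v\}$, which is precisely the defining identity \eqref{eq:3cc}, while $(A,L)$ was already shown to be a representation of $(A,[\cdot,\cdot,\cdot]_C)$.

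Next I would apply Theorem~\ref{thm:O-r} with $V=A$, $\rho=L$, and $T={\rm id}_A$. With respect to the basis $\{e_i\}$ of $A$ and the dual basis $\{e_i^*\}$, the tensor associated to ${\rm id}_A$ is $\overline{{\rm id}_A}=\sum_i e_i^*\otimes e_i\in A^*\otimes A$, so the theorem gives that
$$\overline{{\rm id}_A}-\sigma_{12}(\overline{{\rm id}_A})=\sum_i(e_i^*\otimes e_i-e_i\otimes e_i^*)$$
is a skew-symmetric solution of the $3$-Lie classical Yang-Baxter equation in the semi-direct product $3$-Lie algebra $A\ltimes_{\rho^*}V^*=A\ltimes_{L^*}A^*$. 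Since the $3$-Lie CYBE $[[r,r,r]]=0$ is homogeneous of degree three in $r$, its negative $\sum_i(e_i\otimes e_i^*-e_i^*\otimes e_i)$ is also a skew-symmetric solution, and this is the element $r$ in the statement.

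I do not anticipate a real obstacle: all of the work is already encapsulated in Theorem~\ref{thm:O-r}, and the remaining point is the routine observation that ${\rm id}_A$ is an $\mathcal{O}$-operator for the left-multiplication representation. The only things to handle with care are bookkeeping: confirming that the representation entering the semi-direct product of Theorem~\ref{thm:O-r} is the dual $L^*$ on $A^*$ (not $L$ on $A$), and noting that the overall sign discrepancy between $\sum_i(e_i^*\otimes e_i-e_i\otimes e_i^*)$ and the stated $\sum_i(e_i\otimes e_i^*-e_i^*\otimes e_i)$ is immaterial for being a solution. One could instead bypass Theorem~\ref{thm:O-r} and plug $r$ directly into $[[r,r,r]]$, expanding with the semi-direct product bracket \eqref{eq:sum} for $\rho=L^*$ and collapsing the resulting sums via the dual-basis relations and the $3$-pre-Lie axioms \eqref{eq:d1}--\eqref{eq:d3}, but this would merely re-derive the relevant special case of Theorem~\ref{thm:O-r}.
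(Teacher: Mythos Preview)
Your proposal is correct and follows exactly the paper's own proof, which simply invokes Theorem~\ref{thm:O-r} together with the observation (already recorded in Proposition~\ref{pro:preLieOoper}) that ${\rm id}_A$ is an $\mathcal{O}$-operator of the sub-adjacent $3$-Lie algebra associated to $(A,L)$. Your extra bookkeeping on the sign discrepancy is a welcome clarification but does not depart from the paper's argument.
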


\begin{proof}
It follows from Theorem~\ref{thm:O-r} and the fact that the identity map ${\rm id}$ is an $\mathcal O$-operator of the sub-adjacent 3-Lie algebra of a 3-pre-Lie algebra
associated to the representation $( A, L)$.
\end{proof}

\section{Double construction 3-Lie bialgebras}
\label{sec:triple}

We consider \typeIIs in this section. In
Section 4.1, we introduce the notion of a Manin triple and a matched pair of 3-Lie
algebras. In Section 4.2, we introduce the notion
of a \typeII, and establish the correspondence
between a Manin triple of 3-Lie algebras, a matched pair of 3-Lie
algebras and a \typeII (Theorem
\ref{thm:relations}). In Section 4.3, we establish the relationship between \typeIs and \typeIIs, and provide explicit examples of the
latter.

\subsection{Manin triples and matched pairs of 3-Lie algebras}

\begin{defi}
Let $A$ be a $3$-Lie algebra. A bilinear form $(\cdot,\cdot)_A$  on $A$ is called {\bf invariant} if it satisfies
\begin{equation}
 ([x_1,x_2,x_3],x_4)_A+([x_1,x_2,x_4],x_3)_A=0,\quad\forall  x_1,x_2,x_3,x_4\in A.
\end{equation}
A $3$-Lie algebra $A$ is called {\bf pseudo-metric} if there is a nondegenerate symmetric invariant bilinear form on $A$.
\end{defi}

\begin{defi}\label{defi:Manin}
A {\bf Manin triple of $3$-Lie algebras} consists of a pseudo-metric $3$-Lie algebra $(\huaA,(\cdot,\cdot)_\huaA)$ and $3$-Lie algebras $A_1, A_2$   such that
\begin{enumerate}
\item
$A_1$ and $A_2$ are $3$-Lie subalgebras of $\huaA$ which are isotropic, that is,
$(x_1,y_1)_\huaA=(x_2,y_2)_\huaA=0$ for $x_1,y_1\in A_1$ and
$x_2,y_2\in A_2$;
\item $\huaA=A_1\oplus A_2$ as the direct sum of vector spaces;
\item For all $x_1,y_1\in A_1 $ and $x_2,y_2\in A_2$, we have $\pr_1[x_1,y_1,x_2]=0$ and $\pr_2[x_2,y_2,x_1]=0$, where $\pr_1$ and $\pr_2$ are the projections from $A_1\oplus A_2$ to $A_1$ and $A_2$ respectively.
\end{enumerate}
An {\bf isomorphism} between two Manin triples $((\huaA,(\cdot,\cdot)_\huaA),A_1,A_2)$ and $((\huaA',(\cdot,\cdot)_{\huaA'}),A'_1,A'_2)$ is an isomorphism $f:\huaA\rightarrow \huaA'$ of
$3$-Lie algebras satisfying
$$f(A_1)\subset A'_1, \quad f(A_2)\subset A'_2, \quad  (x_1,x_2)_\huaA=(f(x_1),f(x_2))_{\huaA'},\quad\forall  x_1,x_2\in \huaA.$$
\end{defi}

Let $(A,[\cdot,\cdot,\cdot])$ and $(A^*,[\cdot,\cdot,\cdot]^*)$ be  3-Lie algebras. On $A\oplus A^*$, there is a natural nondegenerate symmetric bilinear form $(\cdot,\cdot)_+$  given by
\begin{equation} \label{eq:bf}
( x+\xi, y+\eta)_+=\langle x, \eta\rangle+\langle \xi,y\rangle,\;\;\forall  x,y\in A, \xi,\eta\in A^*.
\end{equation}
There is also a bracket operation
$[\cdot,\cdot,\cdot]_{A\oplus A^*}$ on $A\oplus A^*$ given by
\begin{eqnarray}
 \nonumber [x+\xi,y+\eta,z+\gamma]_{A\oplus A^*}&=&[x,y,z]+\ad_{x,y}^*\gamma+\ad_{y,z}^*\xi+\ad_{z,x}^*\eta\\
\label{eq:formularAA*}  &&+\add_{\xi,\eta}^*z+\add_{\eta,\gamma}^*x+\add_{\gamma,\xi}^*y+[\xi,\eta,\gamma]^*,
\end{eqnarray}
where $x,y,z\in A, \xi,\eta,\gamma\in A^*$,  $\ad^*$ and $\add^*$ are the coadjoint representations of $A$ and $A^*$ on $A^*$ and $A$ respectively. Note that the bracket operation $[\cdot,\cdot,\cdot]_{A\oplus A^*}$ is naturally invariant with respect to the symmetric bilinear form $(\cdot,\cdot)_+$, and satisfies Condition~(c) in Definition \ref{defi:Manin}. If $(A\oplus A^*,[\cdot,\cdot,\cdot]_{A\oplus A^*})$ is a 3-Lie algebra, then obviously $A$ and $A^*$ are isotropic subalgebras. Consequently, $((A\oplus A^*,(\cdot,\cdot)_+),A, A^*)$ is a Manin triple, which we call {\bf the standard Manin triple of 3-Lie algebras.}

\begin{pro}
Any Manin triple of $3$-Lie algebras is isomorphic to a standard one.
\end{pro}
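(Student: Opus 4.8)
The plan is to reconstruct the $3$-Lie bracket of $\huaA$ from the three defining properties in Definition~\ref{defi:Manin} and to recognize it, after an obvious change of variables, as the bracket $[\cdot,\cdot,\cdot]_{A\oplus A^*}$ of Eq.~\eqref{eq:formularAA*}. Let $((\huaA,(\cdot,\cdot)_\huaA),A_1,A_2)$ be a Manin triple. Since $(\cdot,\cdot)_\huaA$ is nondegenerate, $A_1$ and $A_2$ are isotropic, and $\huaA=A_1\oplus A_2$ as vector spaces, the induced pairing $A_1\times A_2\to\bfk$ is nondegenerate in each variable; as everything is finite dimensional, the map $\iota\colon A_2\to A_1^*$ defined by $\langle\iota(x_2),x_1\rangle:=(x_2,x_1)_\huaA$ is a linear isomorphism. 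I would put $A:=A_1$ with its $3$-Lie subalgebra structure, transport the $3$-Lie bracket of the subalgebra $A_2$ to $A^*:=A_1^*$ along $\iota$ and call the result $[\cdot,\cdot,\cdot]^*$, and set $\varphi:=\id_{A_1}\oplus\,\iota\colon\huaA=A_1\oplus A_2\to A\oplus A^*$. A one-line bilinear check on pairs from $A_1\times A_1$, $A_1\times A_2$ and $A_2\times A_2$ shows $(\varphi(a),\varphi(b))_+=(a,b)_\huaA$, while $\varphi(A_1)=A$ and $\varphi(A_2)=A^*$ are clear.

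The core step is to verify that $\varphi$ intertwines $[\cdot,\cdot,\cdot]_\huaA$ with $[\cdot,\cdot,\cdot]_{A\oplus A^*}$. Both are skew-symmetric and trilinear, so it is enough to compare them on triples having $0$, $1$, $2$ or $3$ of their arguments in $A_2$. For all three (resp.\ none) in $A_2$, the identity is just the statement that $A_2$ (resp.\ $A_1$) is a subalgebra, together with the definition of $[\cdot,\cdot,\cdot]^*$. For $x_1,y_1\in A_1$ and $x_2\in A_2$, Condition~(c) of Definition~\ref{defi:Manin} forces $[x_1,y_1,x_2]_\huaA\in A_2$, and for any $z_1\in A_1$ the invariance of $(\cdot,\cdot)_\huaA$ gives
\begin{equation*}
\big([x_1,y_1,x_2]_\huaA,z_1\big)_\huaA=-\big([x_1,y_1,z_1],x_2\big)_\huaA=-\langle\iota(x_2),[x_1,y_1,z_1]\rangle=\langle\ad^*_{x_1,y_1}\iota(x_2),z_1\rangle ,
\end{equation*}
so $\iota\big([x_1,y_1,x_2]_\huaA\big)=\ad^*_{x_1,y_1}\iota(x_2)$, which is exactly the contribution $\ad^*_{x,y}\gamma$ in Eq.~\eqref{eq:formularAA*}. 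The mirror computation, using Condition~(c) to place $[x_1,x_2,y_2]_\huaA$ in $A_1$ and invariance to identify it, produces the $\add^*$-contributions. Finally one expands a general bracket $[a,b,c]_\huaA$ over the decomposition $\huaA=A_1\oplus A_2$ into its eight summands and uses skew-symmetry to bring each $(2,1)$-summand (resp.\ $(1,2)$-summand) into the normal form just computed; matching with the expansion of Eq.~\eqref{eq:formularAA*} yields $\varphi\big([a,b,c]_\huaA\big)=[\varphi(a),\varphi(b),\varphi(c)]_{A\oplus A^*}$.

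Hence $[\cdot,\cdot,\cdot]_{A\oplus A^*}$ is the image of the $3$-Lie bracket of $\huaA$ under the linear isomorphism $\varphi$, so it is a $3$-Lie bracket; since $(\cdot,\cdot)_+$ is invariant for it and Condition~(c) holds automatically (both noted immediately before the Proposition), $((A\oplus A^*,(\cdot,\cdot)_+),A,A^*)$ is precisely the standard Manin triple associated with the pair $(A,A^*)$, and $\varphi$ is an isomorphism of Manin triples. The only place demanding attention is the bracket-intertwining step: one must track signs carefully and confirm that the three coadjoint summands $\ad^*_{x,y}\gamma,\ad^*_{y,z}\xi,\ad^*_{z,x}\eta$ (and the three $\add^*$-summands) are generated with the correct cyclic labelling of the arguments. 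This follows routinely from invariance of $(\cdot,\cdot)_\huaA$ and the definitions of the coadjoint representations $\ad^*$ and $\add^*$ via Eq.~\eqref{eq:dual}, so no real difficulty is encountered.
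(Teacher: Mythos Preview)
Your proof is correct and follows the same approach as the paper: identify $A_2$ with $A_1^*$ via the nondegenerate pairing induced by $(\cdot,\cdot)_\huaA$, transport the $3$-Lie structure, and observe that the resulting triple is standard. The paper's own proof is a two-sentence sketch that omits the bracket-intertwining verification you carry out explicitly; your detailed check using Condition~(c) and invariance to recover the $\ad^*$ and $\add^*$ contributions of Eq.~\eqref{eq:formularAA*} is exactly what is implicitly being asserted there.
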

\begin{proof}
For any Manin triple $((\huaA,(\cdot,\cdot)_\huaA),A_1,A_2)$,
through the nondegenerate bilinear form $(\cdot,\cdot)_\huaA$, $A_2$
is isomorphic to $A_1^*$ as a vector space.
Moreover, $A_1^*$ is equipped with a 3-Lie algebra structure from $A_2$ via this isomorphism. Then $((\huaA,(\cdot,\cdot)_\huaA),A_1,A_2)$ is
isomorphic to the standard Manin triple $((A_1\oplus
A_1^*,(\cdot,\cdot)_+),A_1,A_1^*)$.
\end{proof}

Now we turn our attention to matched pairs of 3-Lie algebras.

\begin{pro}
Let $(A,[\cdot,\cdot,\cdot])$ and $(A',[\cdot,\cdot,\cdot]')$ be
two $3$-Lie algebras. Suppose that there are skew-symmetric
linear maps $\rho:\otimes^2A\rightarrow \gl(A')$ and
$\mu:\otimes^2A'\rightarrow \gl(A)$ satisfying the following
conditions:
\begin{enumerate}
\item $(A',\rho)$ is a representation of $A$;
\item $(A,\mu)$ is a representation of $A'$;
\item For all $x_i\in A$ and $a_i\in A', 1\leq i\leq 5$, $ \rho, \mu$ satisfy the following compatibility conditions:
{\footnotesize
\begin{eqnarray}
 \mu(a_4,a_5)[x_1,x_2,x_3]&=&[\mu(a_4,a_5)x_1,x_2,x_3]+[x_1,\mu(a_4,a_5)x_2,x_3]+[x_1,x_2,\mu(a_4,a_5)x_3];\label{eq:deri1}\\
 -\mu(\rho(x_1,x_2)a_3,a_5)x_4&=&-\mu (\rho(x_1,x_4)a_5,a_3)x_2+\mu (\rho(x_2,x_4)a_5,a_3)x_1-[x_1,x_2,\mu(a_3,a_5)x_4];\label{eq:mp2}\\
 ~[\mu(a_2,a_3)x_1,x_4,x_5]&=&\mu(a_2,a_3)[x_1,x_4,x_5]+\mu(\rho(x_4,x_5)a_2,a_3)x_1+\mu(a_2,\rho(x_4,x_5)a_3)x_1;\label{eq:mp3}\\
~\rho(x_4,x_5)[a_1,a_2,a_3]'&=&[\rho(x_4,x_5)a_1,a_2,a_3]'+[a_1,\rho(x_4,x_5)a_2,a_3]'+[a_1,a_2,\rho(x_4,x_5)a_3]';\label{eq:deri2}\\
~-\rho(\mu(a_1,a_2)x_3,x_5)a_4&=&-\rho (\mu(a_1,a_4)x_5,x_3)a_2+\rho (\mu(a_2,a_4)x_5,x_3)a_1-[a_1,a_2,\rho(x_3,x_5)a_4]';\label{eq:mp5}\\
~[\rho(x_2,x_3)a_1,a_4,a_5]'&=&\rho(x_2,x_3)[a_1,a_4,a_5]'+\rho(\mu(a_4,a_5)x_2,x_3)a_1+\rho(x_2,\rho(a_4,a_5)x_3)a_1,\label{eq:mp6}
\end{eqnarray}
}
\end{enumerate}
Then there is a $3$-Lie algebra structure on $A\oplus A'$ $($as the direct sum of vector spaces$)$ defined by
\begin{eqnarray*}
[x_1+a_1,x_2+a_2,x_3+a_3]_{A\oplus A'}:&=&[x_1,x_2,x_3]+\rho(x_1,x_2)a_3+\rho(x_3,x_1)a_2+\rho(x_2,x_3)a_1\nonumber \\
&&+[a_1,a_2,a_3]'+\mu (a_1,a_2)x_3+\mu (a_3,a_1)x_2+\mu (a_2,a_3)x_1,
\label{eq:sum2}
\end{eqnarray*}
for $x_i\in A, a_i\in A', 1\leq i\leq 3$.
\end{pro}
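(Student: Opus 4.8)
The plan is to verify directly that the displayed formula for $[\cdot,\cdot,\cdot]_{A\oplus A'}$ defines a $3$-Lie bracket on the vector space $A\oplus A'$; this is the ternary analogue of the matched-pair (bicrossed product) construction for Lie algebras. Skew-symmetry is immediate: the defining formula is built only out of $[\cdot,\cdot,\cdot]$, $[\cdot,\cdot,\cdot]'$, $\rho$ and $\mu$, all of which are skew-symmetric, and the $\rho$- and $\mu$-terms occur symmetrically over the three cyclic positions. Hence the whole content is the Fundamental Identity \eqref{eq:de1} for $[\cdot,\cdot,\cdot]_{A\oplus A'}$.

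By multilinearity it suffices to check \eqref{eq:de1} when each of the five arguments lies in $A$ or in $A'$. I would organize the cases by the pair $(i,j)$, where $i\in\{0,1,2\}$ is the number of arguments in slots $1,2$ that lie in $A'$ and $j\in\{0,1,2,3\}$ is the number in slots $3,4,5$ that lie in $A'$ — twelve types in all. This reduction is legitimate because the two sides of \eqref{eq:de1} transform by the same sign under the transposition $x_1\leftrightarrow x_2$ and under any permutation of $x_3,x_4,x_5$; so one placement per type suffices. For $(0,0)$ and $(2,3)$ the bracket restricts to $[\cdot,\cdot,\cdot]$ on $A$ and to $[\cdot,\cdot,\cdot]'$ on $A'$, and \eqref{eq:de1} holds since $A$ and $A'$ are $3$-Lie algebras. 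For the four types $(1,0)$, $(0,1)$, $(1,3)$, $(2,2)$, in which exactly one argument lies in the minority summand, expanding the nested brackets turns \eqref{eq:de1}, after using skew-symmetry, into the axioms of Definition~\ref{defi:rep}: hypothesis (a) (that $(A',\rho)$ is a representation of $A$, via Proposition~\ref{pro:rep properties}) disposes of $(1,0)$ and $(0,1)$, and hypothesis (b) (that $(A,\mu)$ is a representation of $A'$) disposes of $(1,3)$ and $(2,2)$.

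The remaining six types are the genuinely mixed ones, and this is where hypothesis (c) is used. Since the whole datum is symmetric under interchanging $(A,[\cdot,\cdot,\cdot],\mu)$ with $(A',[\cdot,\cdot,\cdot]',\rho)$, the types $(1,2)$, $(2,1)$, $(0,3)$ are carried to $(1,1)$, $(0,2)$, $(2,0)$ by this interchange, so it is enough to treat the latter three. For each of these, substituting homogeneous elements into \eqref{eq:de1} and expanding shows that every term lands in a single summand ($A$ or $A'$), and the identity collapses — after only cyclic reordering of the outer $3$-Lie bracket and the skew-symmetry of $\rho$ and $\mu$ — to precisely one of the listed relations: type $(2,0)$ yields \eqref{eq:deri1} (i.e., $\mu(a,b)$ is a derivation of $A$), type $(0,2)$ yields \eqref{eq:mp3}, and type $(1,1)$ yields \eqref{eq:mp2}; dually $(0,3)$, $(2,1)$, $(1,2)$ yield \eqref{eq:deri2}, \eqref{eq:mp6}, \eqref{eq:mp5}. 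Exhausting all twelve types, \eqref{eq:de1} holds for all arguments, so $[\cdot,\cdot,\cdot]_{A\oplus A'}$ is a $3$-Lie bracket.

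The argument is conceptually routine; the real work, and the main obstacle, is the bookkeeping. One must track signs carefully when each nested bracket is expanded through the eight-term defining formula, and one must check that the mixed types reproduce \emph{exactly} the six stated conditions, with no hidden additional constraint left over. Exploiting the two symmetries above to reduce the naive $2^5=32$ sign patterns to twelve, and recording in which projection ($A$ or $A'$) each case lives, is what keeps the verification manageable.
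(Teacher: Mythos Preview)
Your proposal is correct and follows the same approach as the paper, which gives only the one-line proof ``It follows from straightforward applications of the Fundamental Identity for the bracket operation $[\cdot,\cdot,\cdot]_{A\oplus A'}$.'' Your case analysis by type $(i,j)$ and the matching of the six mixed types to the six compatibility conditions \eqref{eq:deri1}--\eqref{eq:mp6} is precisely the bookkeeping that the paper leaves implicit, and your use of the skew-symmetry of \eqref{eq:de1} in slots $1,2$ and in slots $3,4,5$ to reduce $2^5$ patterns to twelve is the right organizing device.
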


\begin{proof}
  It follows from straightforward applications of the Fundamental Identity for the bracket operation $[\cdot,\cdot,\cdot]_{A\oplus A'}.$
\end{proof}

\begin{rmk}{\rm Eq.~(\ref{eq:deri1}) means that $\mu(a_1,a_2)$ is a derivation on the 3-Lie algebra $A$ for all $a_1,a_2\in A'$.
Eq.~(\ref{eq:deri2}) means that $\rho(x_1,x_2)$ is a derivation on the 3-Lie algebra $A'$ for all $x_1,x_2\in A$.}
\end{rmk}

\begin{defi}
 Let $(A,[\cdot,\cdot,\cdot])$ and $(A^*,[\cdot,\cdot,\cdot]^*)$ be  $3$-Lie algebras. Suppose that there are skew-symmetric linear maps $\rho:\otimes^2A\rightarrow \gl(A')$ and
$\mu:\otimes^2A'\rightarrow \gl(A)$ such that $(A',\rho)$ is a representation of $A$, $( A,\mu)$ is a representation of $A'$ and
$\rho, \mu$ satisfy Eqs.~\eqref{eq:deri1}-\eqref{eq:mp6}. Then we call $(A,A')$ (or more precisely $(A,A',\rho,\mu)$) a {\bf matched pair of $3$-Lie algebras}.
\end{defi}

We immediately get the following relation between Manin triples
and matched pairs of 3-Lie algebras:

\begin{pro} Let $(A,[\cdot,\cdot,\cdot])$ and $(A^*,[\cdot,\cdot,\cdot]^*)$ be  $3$-Lie algebras. Then $((A\oplus A^*,(\cdot,\cdot)_+),A,A^*)$ is a standard Manin triple if and only if
$(A,A^*,{\rm ad}^*,\add^*)$ is a matched pair.
\label{pp:mm}
\end{pro}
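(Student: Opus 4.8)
The plan is to reduce the asserted equivalence to a single statement about the bracket $[\cdot,\cdot,\cdot]_{A\oplus A^*}$ defined by Eq.~\eqref{eq:formularAA*}: that it satisfies the Fundamental Identity exactly when the compatibility relations~\eqref{eq:deri1}--\eqref{eq:mp6} hold with $\rho=\ad^*$ and $\mu=\add^*$. For the first reduction I would invoke what has already been recorded just before Proposition~\ref{pp:mm}: for arbitrary $3$-Lie algebras $A$ and $A^*$, the operation $[\cdot,\cdot,\cdot]_{A\oplus A^*}$ is skew-symmetric, is invariant with respect to the symmetric bilinear form $(\cdot,\cdot)_+$ of Eq.~\eqref{eq:bf}, restricts to the given brackets on $A$ and on $A^*$ (so $A$ and $A^*$ are always subspaces closed under it and isotropic for $(\cdot,\cdot)_+$), and satisfies Condition~(c) of Definition~\ref{defi:Manin}. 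Hence $((A\oplus A^*,(\cdot,\cdot)_+),A,A^*)$ is a standard Manin triple if and only if $(A\oplus A^*,[\cdot,\cdot,\cdot]_{A\oplus A^*})$ is a $3$-Lie algebra, i.e.\ if and only if the Fundamental Identity~\eqref{eq:de1} holds for $[\cdot,\cdot,\cdot]_{A\oplus A^*}$.

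On the matched-pair side, $(A^*,\ad^*)$ is always a representation of $A$ and $(A,\add^*)$ is always a representation of $A^*$, these being the coadjoint representations constructed in Section~\ref{sec:bas}; so conditions~(a) and~(b) in the definition of a matched pair are automatic for the quadruple $(A,A^*,\ad^*,\add^*)$, and this quadruple is a matched pair precisely when Eqs.~\eqref{eq:deri1}--\eqref{eq:mp6} hold with $\rho=\ad^*$, $\mu=\add^*$. Moreover, specializing the bracket formula in the proposition preceding the definition of a matched pair to $A'=A^*$, $\rho=\ad^*$, $\mu=\add^*$ gives back exactly Eq.~\eqref{eq:formularAA*}. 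Thus the proposition reduces to the claim that the Fundamental Identity for $[\cdot,\cdot,\cdot]_{A\oplus A^*}$ holds if and only if Eqs.~\eqref{eq:deri1}--\eqref{eq:mp6} hold.

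The ``if'' half of this claim is exactly the content of the proposition preceding the definition of a matched pair. For the ``only if'' half, I would expand the Fundamental Identity~\eqref{eq:de1} for $[\cdot,\cdot,\cdot]_{A\oplus A^*}$ on a $5$-tuple of elements, each of which is taken in $A$ or in $A^*$, and project the resulting identity onto the complementary summands $A$ and $A^*$. Since $[\cdot,\cdot,\cdot]_{A\oplus A^*}$ is linear in each argument and $A\oplus A^*$ is a direct sum, the single Fundamental Identity splits into a finite list of identities indexed by the membership pattern of the five arguments: the all-in-$A$ and all-in-$A^*$ patterns recover the Fundamental Identities of $A$ and of $A^*$, the patterns that isolate the actions of $\ad^*$ or $\add^*$ recover the representation axioms (already valid here), and the genuinely mixed patterns, after using skew-symmetry of the bracket, recover Eqs.~\eqref{eq:deri1}--\eqref{eq:mp6}. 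For instance, applying~\eqref{eq:de1} with the first two arguments in $A^*$ and the last three in $A$ yields Eq.~\eqref{eq:deri1}, the mirror choice yields Eq.~\eqref{eq:deri2}, and inserting a single $A^*$-argument among $A$-arguments and projecting onto $A$ and $A^*$ yields Eqs.~\eqref{eq:mp2}, \eqref{eq:mp3} and their mirrors \eqref{eq:mp5}, \eqref{eq:mp6}.

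I expect the main obstacle to be organizational rather than conceptual: one must verify that every one of the six relations~\eqref{eq:deri1}--\eqref{eq:mp6} genuinely arises from some instance of the Fundamental Identity (possibly after combining two such instances or applying skew-symmetry together with a representation axiom), and that the mixed patterns produce no constraint beyond these six. This is the same bookkeeping that underlies the proof of the preceding proposition, now read as an equivalence; combined with the two reductions above, it yields the statement.
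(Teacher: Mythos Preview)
Your argument is correct and is exactly the reasoning the paper has in mind: the text states Proposition~\ref{pp:mm} without proof, prefaced only by ``We immediately get the following relation,'' relying on the observations just before it (that $[\cdot,\cdot,\cdot]_{A\oplus A^*}$ is automatically invariant and satisfies Condition~(c), so the standard Manin triple condition is just the Fundamental Identity) together with the proposition on matched pairs (whose proof---``straightforward applications of the Fundamental Identity''---is implicitly an equivalence). You have simply made explicit the bookkeeping the paper leaves to the reader, including the converse direction obtained by specializing the Fundamental Identity to mixed $A$/$A^*$ arguments.
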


\subsection{Double construction 3-Lie bialgebras}
There are many several
 equivalent conditions for matched pairs of 3-Lie
algebras.
\begin{lem}\label{lem:3-Lie}
Let $(A,[\cdot,\cdot,\cdot])$ and $(A^*,[\cdot,\cdot,\cdot]^*)$ be  $3$-Lie algebras.  Then
$(A,A^*,{\rm ad}^*,\add^*)$ is a matched pair if and only if Eqs. \eqref{eq:deri1}, \eqref{eq:mp2} and \eqref{eq:mp3} hold for
$\rho={\rm ad}^*, \mu=\add^*$.
\end{lem}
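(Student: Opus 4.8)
The ``only if'' direction is immediate: a matched pair $(A,A^*,{\rm ad}^*,\add^*)$ satisfies all of Eqs.~\eqref{eq:deri1}, \eqref{eq:mp2}, \eqref{eq:mp3}, \eqref{eq:deri2}, \eqref{eq:mp5} and \eqref{eq:mp6} by definition. For the ``if'' direction, observe first that, $(A^*,{\rm ad}^*)$ being the dual of the adjoint representation of $A$ and $(A,\add^*)$ the dual of the adjoint representation of $A^*$, both are automatically representations; hence ``$(A,A^*,{\rm ad}^*,\add^*)$ is a matched pair'' amounts precisely to the validity of these six compatibility equations. Thus it suffices to derive Eqs.~\eqref{eq:deri2}, \eqref{eq:mp5} and \eqref{eq:mp6} from Eqs.~\eqref{eq:deri1}, \eqref{eq:mp2} and \eqref{eq:mp3}.

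The plan is to exploit the fact that the whole system of compatibility equations is symmetric under the formal involution $\tau$ interchanging $A\leftrightarrow A^*$, $[\cdot,\cdot,\cdot]\leftrightarrow[\cdot,\cdot,\cdot]^*$, ${\rm ad}^*\leftrightarrow\add^*$ and the labels $x_i\leftrightarrow a_i$ --- the symmetry already visible in the bracket~\eqref{eq:formularAA*} and in the $A\leftrightarrow A^*$ symmetry of the standard Manin triple (cf.\ Proposition~\ref{pp:mm}). A direct inspection shows that $\tau$ sends Eq.~\eqref{eq:deri1} to Eq.~\eqref{eq:deri2}, Eq.~\eqref{eq:mp2} to Eq.~\eqref{eq:mp5}, and Eq.~\eqref{eq:mp3} to Eq.~\eqref{eq:mp6}. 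The key point is that, once one specializes to $\rho={\rm ad}^*$ and $\mu=\add^*$, the involution $\tau$ is \emph{realized} by the nondegenerate pairing $\langle\cdot,\cdot\rangle$ between $A$ and $A^*$: by the defining relations of the coadjoint representations,
$$\langle {\rm ad}^*_{x,y}\xi,z\rangle=-\langle\xi,[x,y,z]\rangle,\qquad \langle\add^*_{\xi,\eta}x,\gamma\rangle=-\langle x,[\xi,\eta,\gamma]^*\rangle,$$
for $x,y,z\in A$ and $\xi,\eta,\gamma\in A^*$. Pairing an $A$-valued (resp.\ $A^*$-valued) identity against an arbitrary element of $A^*$ (resp.\ $A$) and transporting every occurrence of $[\cdot,\cdot,\cdot]$ across the pairing by the first relation and every occurrence of $[\cdot,\cdot,\cdot]^*$ by the second converts it, term by term, into its $\tau$-image paired against an arbitrary test element of the other space. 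Hence Eq.~\eqref{eq:deri1}$\Leftrightarrow$Eq.~\eqref{eq:deri2}, Eq.~\eqref{eq:mp2}$\Leftrightarrow$Eq.~\eqref{eq:mp5} and Eq.~\eqref{eq:mp3}$\Leftrightarrow$Eq.~\eqref{eq:mp6}, so that Eqs.~\eqref{eq:deri1}, \eqref{eq:mp2}, \eqref{eq:mp3} already imply the full system.

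Concretely, I would carry this out in three steps: (i) pair Eq.~\eqref{eq:deri2} with an arbitrary $z\in A$, expand via the two transpose relations, and match the outcome with Eq.~\eqref{eq:deri1} paired with an arbitrary $\gamma\in A^*$ after a suitable relabeling of the free variables; (ii) do the same for Eq.~\eqref{eq:mp5} against Eq.~\eqref{eq:mp2}, pairing with an arbitrary $\gamma\in A^*$; (iii) likewise for Eq.~\eqref{eq:mp6} against Eq.~\eqref{eq:mp3}.

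The main obstacle is the bookkeeping in steps (ii)--(iii): one must verify that, after pairing and applying the transpose relations, \emph{every} term of Eq.~\eqref{eq:mp5} (resp.\ \eqref{eq:mp6}) matches a term of Eq.~\eqref{eq:mp2} (resp.\ \eqref{eq:mp3}) with the correct sign under one consistent relabeling of the $A$-arguments and the $A^*$-arguments. The delicate aspect is that $[\cdot,\cdot,\cdot]^*$ is the dual of the comultiplication and is not expressible through the bracket of $A$, so each occurrence of $[\cdot,\cdot,\cdot]$ must be moved across the pairing via ${\rm ad}^*$ and each occurrence of $[\cdot,\cdot,\cdot]^*$ via $\add^*$, consistently, before the two identities can be identified; the skew-symmetry of the two $3$-Lie brackets is also used freely to line up the arguments.
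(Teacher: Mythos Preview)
Your proposal is correct and follows essentially the same approach as the paper: both arguments reduce the matched-pair conditions to showing the three equivalences \eqref{eq:deri1}$\Leftrightarrow$\eqref{eq:deri2}, \eqref{eq:mp2}$\Leftrightarrow$\eqref{eq:mp5}, \eqref{eq:mp3}$\Leftrightarrow$\eqref{eq:mp6} via the duality pairing and the transpose relations for ${\rm ad}^*$ and $\add^*$. The paper makes the same observation and then carries out one of the three computations explicitly (the case \eqref{eq:mp2}$\Leftrightarrow$\eqref{eq:mp5}, pairing the $A$-valued identity with an auxiliary $a_6\in A^*$), declaring the other two similar; your write-up frames the same mechanism more abstractly as an involution $\tau$ realized by the pairing, which is a perfectly valid way to package the argument.
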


\begin{proof} We only need to prove that for $\rho={\rm ad}^*, \mu=\add^*$, we have
$${\rm Eq.}\;\;(\ref{eq:deri1})\Longleftrightarrow {\rm Eq.}\;\;(\ref{eq:deri2}),\;\;
{\rm Eq.}\;\;(\ref{eq:mp2})\Longleftrightarrow {\rm Eq.}\;\;(\ref{eq:mp5}),\;\;
{\rm Eq.}\;\;(\ref{eq:mp3})\Longleftrightarrow {\rm Eq.}\;\;(\ref{eq:mp6}).
$$
We only prove the second equivalence since the proofs
of the other cases are similar. Let $x_1,x_2,x_4\in A$ and $a_3,a_5,a_6\in A^*$. We have
\begin{eqnarray*}
&&\langle -\add^*_{{\rm ad}^*_{x_1,x_2}a_3,a_5}x_4+\add^*_{{\rm ad}^*_{x_1,x_4}a_5,a_3}x_2-\add^*_{{\rm ad}^*_{x_2,x_4}a_5,a_3}x_1
+ [x_1,x_2,\add^*_{a_3,a_5}x_4], a_6\rangle\\
&=&\langle [{\rm ad}^*_{x_1,x_2}a_3,a_5,a_6]^*,x_4\rangle -\langle [{\rm ad}^*_{x_1,x_4}a_5,a_3,a_6]^*, x_2\rangle\\
&&+\langle [{\rm ad}^*_{x_2,x_4}a_5, a_3,a_6]^*-{\rm ad}^*_{x_2,\add^*_{a_3,a_5}x_4}a_6,x_1\rangle  \\
&=&\langle x_1,-{\rm ad}^*_{x_2,\add^*_{a_5,a_6}x_4}a_3+{\rm ad}^*_{x_4,\add^*_{a_3,a_6}x_2}a_5+[{\rm ad}^*_{x_2,x_4}a_5, a_3,a_6]^*-{\rm ad}^*_{x_2,\add^*_{a_3,a_5}x_4}a_6\rangle,
 \end{eqnarray*}
which implies the equivalence between Eq.~(\ref{eq:mp2}) and  Eq.~(\ref{eq:mp5}).
\end{proof}

\begin{thm}\label{thm:3Lieb}
Let $(A,[\cdot,\cdot,\cdot])$  be a $3$-Lie algebra and $\Delta:A\rightarrow A\otimes A\otimes A$  a linear map. Suppose that $\Delta^*:A^*\otimes A^*\otimes A^*\rightarrow A^*$
defines a $3$-Lie algebra structure $ [\cdot,\cdot,\cdot]^*$ on $A^*$.
Then $(A,A^*,{\rm ad}^*,\add^*)$ is a matched pair if and only if for all $x,y,z\in A$, the following equations are satisfied:
\begin{eqnarray}
\Delta( [x,y,z])&=&(1\otimes 1\otimes {\rm ad}_{y,z})\Delta (x)+(1\otimes 1\otimes {\rm ad}_{z,x})\Delta (y)
+(1\otimes 1\otimes {\rm ad}_{x,y})\Delta (z);\label{eq:b1}\\
\Delta ([x,y,z])&=& (1\otimes 1\otimes {\rm ad} _{y,z})\Delta (x)+(1\otimes {\rm ad}_{y,z}\otimes 1)\Delta (x)
+({\rm ad}_{y,z }\otimes 1\otimes 1)\Delta(x);\label{eq:b2}
\end{eqnarray}
\begin{equation}({\rm ad}_{x,y}\otimes 1\otimes 1+1\otimes 1\otimes {\rm ad}_{x,y})\Delta(z)=
(1\otimes {\rm ad}_{z,x}\otimes 1)\Delta(y)+(1\otimes {\rm ad}_{y,z}\otimes 1)\Delta(x)\label{eq:b3},
\end{equation}
for $x,y,z,\in A$.
\end{thm}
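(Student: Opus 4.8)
The plan is to reduce the statement to Lemma~\ref{lem:3-Lie} and then dualize, turning each surviving matched-pair axiom into one of the three displayed identities. First I would invoke Lemma~\ref{lem:3-Lie}: with $\rho=\ad^*$ and $\mu=\add^*$, the quadruple $(A,A^*,\ad^*,\add^*)$ is a matched pair if and only if Eqs.~\eqref{eq:deri1}, \eqref{eq:mp2} and \eqref{eq:mp3} hold. The representation hypotheses of the definition of a matched pair are automatic here: $(A^*,\ad^*)$ is the coadjoint representation of $A$, and $(A,\add^*)$ is the coadjoint representation of $A^*$, which is a $3$-Lie algebra by assumption. So it suffices to prove that, under the hypothesis that $\Delta^*$ defines a $3$-Lie bracket $[\cdot,\cdot,\cdot]^*$ on $A^*$, the three equations \eqref{eq:deri1}, \eqref{eq:mp3}, \eqref{eq:mp2} are equivalent, respectively, to \eqref{eq:b1}, \eqref{eq:b2}, \eqref{eq:b3}.

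The dictionary I would set up is: for $x\in A$ and $\xi,\eta,\gamma\in A^*$ one has $\langle\add^*_{\xi,\eta}x,\gamma\rangle=-\langle x,[\xi,\eta,\gamma]^*\rangle=-\langle\Delta(x),\xi\otimes\eta\otimes\gamma\rangle$ and $\langle\ad^*_{x,y}\xi,z\rangle=-\langle\xi,[x,y,z]\rangle$; consequently the transpose of $\ad^*_{x,y}\colon A^*\to A^*$, viewed as an operator on $A$, is $-\ad_{x,y}$, and since $\Delta^*$ is skew-symmetric we have $\sigma_{12}\Delta(x)=-\Delta(x)$ (and likewise $\sigma_{23}\Delta(x)=-\Delta(x)$). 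With these facts, and using the total antisymmetry of $[\cdot,\cdot,\cdot]$ (so that $[a,y,z]=[y,z,a]$, $[x,b,z]=[z,x,b]$, etc.) together with $\ad_{x,z}=-\ad_{z,x}$, I would pair each matched-pair equation with an arbitrary $\gamma\in A^*$ and simplify, exactly in the style of the proof of Lemma~\ref{lem:3-Lie}.

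Concretely: pairing the derivation identity \eqref{eq:deri1} for $\add^*_{\xi,\eta}$ with $\gamma$ transfers an $\ad^*$ off each $\gamma$-slot and back onto $\Delta(x)$, $\Delta(y)$, $\Delta(z)$ in the third tensor factor, yielding \eqref{eq:b1}. Pairing \eqref{eq:mp3} with $\gamma$ produces the "derivation-type" identity \eqref{eq:b2}, in which $\ad_{y,z}$ acts on the single tensor $\Delta(x)$ through each of its three slots. Pairing \eqref{eq:mp2} with $\gamma$, and using $\sigma_{12}\Delta=-\Delta$ to move covectors into the correct slots, yields \eqref{eq:b3}. Because the pairing is nondegenerate and $\gamma$ ranges over all of $A^*$, each of these passages is an equivalence, so combining the three equivalences with Lemma~\ref{lem:3-Lie} completes the proof.

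The work is essentially bookkeeping, and that is also where the only real difficulty lies: one must carefully track which of the three tensor slots each covector occupies, control the signs coming from the two coadjoint actions and from the antisymmetry of the bracket, and apply $\sigma_{12}\Delta=-\Delta$ at exactly the right moments to reposition terms so that the roughly ten pairing computations collapse onto \eqref{eq:b1}--\eqref{eq:b3}. No idea beyond Lemma~\ref{lem:3-Lie} and this dualization is required.
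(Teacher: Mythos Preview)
Your proposal is correct and follows essentially the same route as the paper: reduce via Lemma~\ref{lem:3-Lie} to Eqs.~\eqref{eq:deri1}, \eqref{eq:mp2}, \eqref{eq:mp3}, and then show these are equivalent to \eqref{eq:b1}, \eqref{eq:b3}, \eqref{eq:b2} respectively. The only difference is in execution: the paper fixes a basis and compares structure constants, whereas you carry out the same verification by pairing with an arbitrary $\gamma\in A^*$; both are routine and yield the same three equivalences.
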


\begin{proof} By Lemma~\ref{lem:3-Lie}, we only need to prove that Eqs.~(\ref{eq:deri1}),    (\ref{eq:mp2}) and Eq.~(\ref{eq:mp3}) are equivalent to Eqs. \eqref{eq:b1}, \eqref{eq:b2} and \eqref{eq:b3}   respectively.

Let $\{e_1,\cdots,e_n\}$ be a basis of $A$ and $\{e_1^*,\cdots,e_n^*\}$  the dual basis. Suppose
$$[e_i,e_j,e_k]=\sum_{l=1}^nc_{ijk}^le_l, \quad [e_i^*,e_j^*,e_k^*]^*=\sum_{l=1}^nd_{ijk}^l e_l^*.$$
Then we have

$${\rm ad}^*_{e_i,e_j}e_k^*=-\sum_{l=1}^nc_{ijl}^ke_l^*;\;\;\add^*_{e_i^*,e_j^*}e_k=-\sum_{l=1}^n d_{ijl}^ke_l,\;\;
\Delta(e_k)=\sum_{i,j,l=1}^n d_{ijl}^ke_i\otimes e_j\otimes e_l.$$
By Eq.~(\ref{eq:deri1}), we have
$$\add^*_{e_\alpha^*,e_\beta^*}[e_i,e_j,e_k]-[\add^*_{e_\alpha^*,e_\beta^*}e_i,e_j,e_k]-[e_i,\add^*_{e_\alpha^*,e_\beta^*}e_j,e_k]
-[e_i,e_j,\add^*_{e_\alpha^*,e_\beta^*}e_k]=0,$$
which gives
\begin{equation}\label{eq:e1}
\sum_{l=1}^n(- d_{\alpha\beta m}^lc_{ijk}^l+d_{\alpha\beta l}^ic_{ljk}^m+d_{\alpha\beta l}^jc_{ilk}^m+d_{\alpha\beta l}^k c_{ijl}^m)=0, \quad \forall \alpha,\beta, i,j,k,m, \end{equation}
as the coefficient of $e_m$. On the other hand, the left hand side of the above equation is also
the coefficient of $e_\alpha\otimes e_\beta\otimes e_m$ in
$$(1\otimes 1\otimes {\rm ad}_{e_j,e_k})\Delta (e_i)+(1\otimes 1\otimes {\rm ad}_{e_k,e_i})\Delta (e_j)+(1\otimes 1\otimes {\rm ad}_{e_i,e_j})\Delta (e_k)-\Delta ([e_i,e_j,e_k]).$$
Thus, we deduce that Eq.~(\ref{eq:deri1}) is equivalent to Eq. \eqref{eq:b1}.

We can similarly prove that Eq.~(\ref{eq:mp2}) and Eq.~(\ref{eq:mp3}) are equivalent to Eq.~\eqref{eq:b3} and Eq.~\eqref{eq:b2} respectively.
\end{proof}

\begin{rmk}\label{rmk:eql}
{\rm In fact, Eq.~(\ref{eq:b1}) means that $\Delta$ is a 1-cocycle
associated to $(1\otimes 1\otimes {\rm ad}, A\otimes A\otimes A)$
and Eq.~(\ref{eq:b2}) means that (see Eq.~(\mref{eq:3-der}) for
the notations, also see \mcite{Bai3Liebi})
$$\Delta ([x,y,z])=[\Delta(x), y,z],\;\;\forall  x,y,z\in A.$$}
\end{rmk}

\begin{rmk}\label{re:eq} {\rm Note that there are certain symmetries in the above proof. Let $x,y,z\in A$.
\begin{enumerate}
\item[\rm(i)] Eq.~(\ref{eq:b1}) holds if and only if one of the following equations holds:
\begin{itemize}
\item $\Delta ([x,y,z])=(1\otimes {\rm ad}_{y,z}\otimes 1)\Delta (x )+( 1\otimes {\rm ad}_{z,x}\otimes 1)\Delta (y)
+(1\otimes {\rm ad}_{x,y}\otimes 1)\Delta (z)$;
\item $\Delta ([x,y,z])=({\rm ad}_{y,z}\otimes 1\otimes 1)\Delta (x)+(  {\rm ad}_{z,x}\otimes 1\otimes 1)\Delta (y)
+({\rm ad}_{x,y}\otimes 1\otimes 1)\Delta (z)$.
\end{itemize}
In fact, the former corresponds to the coefficient of $e_\alpha\otimes e_m\otimes e_\beta$ in Eq.~(\ref{eq:e1}),
whereas the latter corresponds to the coefficient of $e_m\otimes e_\alpha\otimes e_\beta$ in Eq.~(\ref{eq:e1}).
\item[\rm(ii)] Eq.~(\ref{eq:b2}) holds if and only if one of the following equations holds:
\begin{itemize}
\item $\Delta ([x,y,z])= (1\otimes 1\otimes {\rm ad}_{z,x})\Delta (y)+(1\otimes {\rm ad}_{z,x}\otimes 1)\Delta (y)
+({\rm ad}_{z,x}\otimes 1\otimes 1)\Delta(y)$;
\item $\Delta ([x,y,z])= (1\otimes 1\otimes {\rm ad} _{x,y})\Delta (z)+(1\otimes {\rm ad}_{x,y}\otimes 1)\Delta (z)
+({\rm ad}_{x,y}\otimes 1\otimes 1)\Delta(z)$.
\end{itemize}
\item[\rm(iii)] Eq.~(\ref{eq:b3}) holds if and only if one of the following equations holds:
\begin{itemize}
\item $({\rm ad}_{x,y}\otimes 1\otimes 1)\Delta(z)+(1\otimes{\rm ad}_{x,y}\otimes 1 )\Delta(z)=(1\otimes 1\otimes {\rm ad}_{z,x})\Delta(y)+(1\otimes 1\otimes {\rm ad}_{y,z})\Delta(x);$
\item $(1\otimes {\rm ad}_{x,y}\otimes 1)\Delta(z)+(1\otimes 1\otimes {\rm ad}_{x,y})\Delta(z)=({\rm ad}_{z,x}\otimes 1\otimes 1)\Delta(y)+( {\rm ad}_{y,z}\otimes 1 \otimes 1)\Delta(x).$
\end{itemize}
\end{enumerate}
}
\end{rmk}

The three equations \eqref{eq:b1} -- \eqref{eq:b3} are not independent.
\begin{pro}\label{pp:equiv}
Any two equations in Eqs.~  \eqref{eq:b1}, \eqref{eq:b2} and \eqref{eq:b3} imply the third one.
\end{pro}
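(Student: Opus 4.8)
The plan is to reduce each of the three implications to a one-line algebraic manipulation, once all the equivalent ``shifted'' forms of \eqref{eq:b1}, \eqref{eq:b2} and \eqref{eq:b3} recorded in Remark~\ref{re:eq} are available. To keep the bookkeeping manageable I would fix notation: for $u,v\in A$ let $L^1_{u,v}=\ad_{u,v}\otimes 1\otimes 1$, $L^2_{u,v}=1\otimes\ad_{u,v}\otimes 1$ and $L^3_{u,v}=1\otimes 1\otimes\ad_{u,v}$ as operators on $A\otimes A\otimes A$. In this notation \eqref{eq:b1} reads $\Delta[x,y,z]=L^3_{y,z}\Delta x+L^3_{z,x}\Delta y+L^3_{x,y}\Delta z$ and, by Remark~\ref{re:eq}(i), may equally be written with $L^1$ or $L^2$ in place of $L^3$; \eqref{eq:b2} reads $\Delta[x,y,z]=(L^1_{y,z}+L^2_{y,z}+L^3_{y,z})\Delta x$ and, by Remark~\ref{re:eq}(ii), may equally be written as $\Delta[x,y,z]=(L^1_{z,x}+L^2_{z,x}+L^3_{z,x})\Delta y$ or as $\Delta[x,y,z]=(L^1_{x,y}+L^2_{x,y}+L^3_{x,y})\Delta z$; and \eqref{eq:b3} reads $(L^1_{x,y}+L^3_{x,y})\Delta z=L^2_{z,x}\Delta y+L^2_{y,z}\Delta x$ and, by Remark~\ref{re:eq}(iii), is equivalent to $(L^1_{x,y}+L^2_{x,y})\Delta z=L^3_{z,x}\Delta y+L^3_{y,z}\Delta x$, which I will denote $(\mathrm{b3}')$.

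With these forms in hand the three implications become pure cancellations. For the claim that \eqref{eq:b1} and \eqref{eq:b2} imply \eqref{eq:b3}: take \eqref{eq:b2} in its $\Delta z$-form, so $(L^1_{x,y}+L^2_{x,y}+L^3_{x,y})\Delta z=\Delta[x,y,z]$, and substitute for $\Delta[x,y,z]$ the right-hand side of \eqref{eq:b1}; cancelling the common term $L^3_{x,y}\Delta z$ leaves exactly $(\mathrm{b3}')$, hence \eqref{eq:b3}. For the claim that \eqref{eq:b1} and \eqref{eq:b3} imply \eqref{eq:b2}: add $L^3_{x,y}\Delta z$ to both sides of $(\mathrm{b3}')$; the right-hand side becomes the right-hand side of \eqref{eq:b1}, hence equals $\Delta[x,y,z]$, so $(L^1_{x,y}+L^2_{x,y}+L^3_{x,y})\Delta z=\Delta[x,y,z]$, which is \eqref{eq:b2} in its $\Delta z$-form. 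For the claim that \eqref{eq:b2} and \eqref{eq:b3} imply \eqref{eq:b1}: subtract $(\mathrm{b3}')$ from the $\Delta z$-form of \eqref{eq:b2} to isolate $L^3_{x,y}\Delta z=\Delta[x,y,z]-L^3_{z,x}\Delta y-L^3_{y,z}\Delta x$, which rearranges to \eqref{eq:b1}.

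I do not expect a genuine obstacle here; the content is entirely the cancellation above together with Remark~\ref{re:eq}. The only point requiring care is the matching of cyclic relabelings of $(x,y,z)$ when identifying a derived identity with one of the listed equivalent forms, since, for example, an equation one obtains may coincide with $(\mathrm{b3}')$ only after the substitution $(x,y,z)\mapsto(y,z,x)$, which is legitimate as all identities are universally quantified over $A$. As an alternative that bypasses Remark~\ref{re:eq} entirely, one could argue as in the proof of Theorem~\ref{thm:3Lieb}: each of \eqref{eq:b1}--\eqref{eq:b3} is equivalent to a single scalar identity among the structure constants $c^l_{ijk}$ and $d^l_{ijk}$ obtained by extracting an appropriate coefficient, and those three scalar identities are related by precisely the same additive cancellation.
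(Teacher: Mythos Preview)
Your proof is correct and follows essentially the same route as the paper: both arguments invoke the equivalent forms in Remark~\ref{re:eq} and then obtain each of the three implications by a direct additive cancellation (your $(\mathrm{b3}')$ is exactly ``the first equation in Item~(iii)'' that the paper uses). Your operator notation $L^i_{u,v}$ makes the bookkeeping cleaner, but the substance is identical.
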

\begin{proof}  Substituting Eq.~\eqref{eq:b2} into Eq.~\eqref{eq:b1} yields the first equation in Item (iii) in the above remark, which is equivalent to Eq. (\ref{eq:b3}).

Substituting  Eq.~(\ref{eq:b3}) into Eq. (\ref{eq:b2}) yields the first equation in Item (i) in the above remark, which is equivalent to Eq. (\ref{eq:b1}).

Eq.~(\ref{eq:b1}) and the first equation in Item (iii) in the above remark imply Eq.~(\ref{eq:b2}). Hence by the same remark, Eqs.~(\ref{eq:b1}) and (\ref{eq:b3})  imply Eq.~(\ref{eq:b2}).
\end{proof}

Since a Manin triple of 3-Lie algebras $((A\oplus A^*,(\cdot,\cdot)_+),A,A^*)$ gives a natural pseudo-metric 3-Lie algebra structure on the double space $A\oplus A^*$, we give the following definition:

\begin{defi}
{\rm
  Let $A$ be a $3$-Lie algebra and $\Delta:A\rightarrow A\otimes A\otimes A$  a linear map. Suppose that $\Delta^*: A^*\otimes A^*\otimes A^*\rightarrow A$
defines a $3$-Lie algebra structure on $A^*$. If $\Delta$ satisfies Eqs.~\eqref{eq:b1} and \eqref{eq:b2}, then we call $(A,\Delta)$ a {\bf \typeII}.
}
\end{defi}

Combining Proposition~\ref{pp:mm}, Theorem~\ref{thm:3Lieb} and Proposition~\ref{pp:equiv}, we obtain

\begin{thm}\label{thm:relations}
Let $A$ be a $3$-Lie algebra and $\Delta:A\rightarrow A\otimes A\otimes A$  a linear map. Suppose that $\Delta^*:A^*\otimes A^*\otimes A^*\rightarrow A^*$
defines a $3$-Lie algebra structure on $A^*$. Then the following statements are equivalent.
\begin{enumerate}
\item $(A,\Delta)$ is a \typeII;
\item $((A\oplus A^*,(\cdot,\cdot)_+),A,A^*)$ is a standard Manin triple, where the bilinear form $(\cdot,\cdot)_+$ and the $3$-Lie bracket $[\cdot,\cdot,\cdot]_{A\oplus A^*}$ are given by Eqs.~\eqref{eq:bf} and \eqref{eq:formularAA*} respectively;
\item $(A,A^*,{\rm ad}^*,\add^*)$ is a matched pair of $3$-Lie algebras.
\end{enumerate}
\end{thm}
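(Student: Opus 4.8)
The plan is to assemble the theorem from the three results already at hand, with no new computation required beyond bookkeeping of which equivalences have been established. The logical skeleton is: Proposition~\ref{pp:mm} gives the equivalence (2) $\Leftrightarrow$ (3); Theorem~\ref{thm:3Lieb} together with Proposition~\ref{pp:equiv} gives (1) $\Leftrightarrow$ (3); and chaining these two yields (1) $\Leftrightarrow$ (2). So the proof is essentially an exercise in citing the correct earlier statement for each implication.

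\begin{proof}
By Proposition~\ref{pp:mm}, the triple $((A\oplus A^*,(\cdot,\cdot)_+),A,A^*)$ is a standard Manin triple if and only if $(A,A^*,{\rm ad}^*,\add^*)$ is a matched pair of $3$-Lie algebras; that is, (2) $\Leftrightarrow$ (3).

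It remains to prove (1) $\Leftrightarrow$ (3). Suppose first that $(A,\Delta)$ is a \typeII, so that $\Delta$ satisfies Eqs.~\eqref{eq:b1} and \eqref{eq:b2}. By Proposition~\ref{pp:equiv}, these two equations imply Eq.~\eqref{eq:b3}, so all of \eqref{eq:b1}--\eqref{eq:b3} hold. Since $\Delta^*$ already defines a $3$-Lie algebra structure on $A^*$ by hypothesis, Theorem~\ref{thm:3Lieb} applies and tells us that $(A,A^*,{\rm ad}^*,\add^*)$ is a matched pair, giving (1) $\Rightarrow$ (3). Conversely, if $(A,A^*,{\rm ad}^*,\add^*)$ is a matched pair, then by Theorem~\ref{thm:3Lieb} all three equations \eqref{eq:b1}--\eqref{eq:b3} hold; in particular \eqref{eq:b1} and \eqref{eq:b2} hold, so $(A,\Delta)$ is a \typeII, giving (3) $\Rightarrow$ (1).

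Combining (1) $\Leftrightarrow$ (3) with (2) $\Leftrightarrow$ (3) yields the equivalence of all three statements.
\end{proof}

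\noindent The only subtlety—and thus the main point to be careful about—is the role of Proposition~\ref{pp:equiv}: the definition of a \typeII only demands Eqs.~\eqref{eq:b1} and \eqref{eq:b2}, whereas Theorem~\ref{thm:3Lieb} characterizes the matched-pair property by all three of \eqref{eq:b1}--\eqref{eq:b3}. It is precisely the non-independence of these equations that lets the two-equation definition match the three-equation characterization, so one must invoke Proposition~\ref{pp:equiv} explicitly rather than leave it implicit. Everything else is a direct citation of prior results, so no genuine obstacle arises here; the work has already been done in establishing Proposition~\ref{pp:mm}, Theorem~\ref{thm:3Lieb}, and Proposition~\ref{pp:equiv}.
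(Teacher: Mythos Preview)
Your proof is correct and follows exactly the paper's approach: the paper's proof is the single line ``Combining Proposition~\ref{pp:mm}, Theorem~\ref{thm:3Lieb} and Proposition~\ref{pp:equiv}, we obtain,'' and you have simply spelled out in detail how those three results chain together. Your explicit remark about why Proposition~\ref{pp:equiv} is needed (to bridge the two-equation definition of a \typeII with the three-equation characterization in Theorem~\ref{thm:3Lieb}) is a useful clarification that the paper leaves implicit.
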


\subsection{The relationship between the two types of $3$-Lie bialgebras and examples}
We begin with a relationship between the \typeI and the \typeII.
We have the following result slightly improving Remark~\ref{re:eq}.(i):

\begin{pro}Let $A$ be a $3$-Lie algebra and $\Delta:A\rightarrow A\otimes A\otimes A$  a linear map. Suppose that $\Delta^*:A^*\otimes A^*\otimes A^*\rightarrow A^*$
defines a skew-symmetric operation on $A^*$. Then the following
conditions are equivalent:
\begin{enumerate}
\item $\Delta$ is a $1$-cocycle associated to $(A\otimes A\otimes A, {\rm ad}\otimes
1\otimes 1)$;

\item $\Delta$ is a $1$-cocycle
associated to $( A\otimes A\otimes A,1\otimes {\rm ad}\otimes 1)$;

\item$\Delta$ is a $1$-cocycle associated to $(A\otimes A\otimes A, 1\otimes 1\otimes
{\rm ad})$.
\end{enumerate}
\end{pro}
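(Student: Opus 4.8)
The plan is to exploit the fact that, once $\Delta^*$ is a skew-symmetric operation, the image of $\Delta$ lies in the fully alternating part of $A\otimes A\otimes A$, so that every transposition of the tensor legs acts on $\Delta(x)$ by the sign $-1$; combined with the obvious way in which the switching operators $\sigma_{ij}$ intertwine the three one-leg adjoint representations, this will carry the $1$-cocycle identity for any one of them onto the $1$-cocycle identity for any other. First I would recall, exactly as in the proof of Proposition~\ref{pro:skew-symmetric}, that $\Delta^*:A^*\otimes A^*\otimes A^*\to A^*$ is skew-symmetric if and only if $\sigma_{12}\Delta(x)=-\Delta(x)$ and $\sigma_{23}\Delta(x)=-\Delta(x)$ for all $x\in A$; since $\sigma_{13}=\sigma_{12}\sigma_{23}\sigma_{12}$, this also forces $\sigma_{13}\Delta(x)=-\Delta(x)$. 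Next I would record the intertwining identities — immediate on simple tensors, hence on all of $A\otimes A\otimes A$ — valid for all $a,b\in A$:
$$\sigma_{12}\circ({\rm ad}_{a,b}\otimes 1\otimes 1)=(1\otimes {\rm ad}_{a,b}\otimes 1)\circ\sigma_{12},\qquad \sigma_{23}\circ(1\otimes {\rm ad}_{a,b}\otimes 1)=(1\otimes 1\otimes {\rm ad}_{a,b})\circ\sigma_{23},$$
$$\sigma_{13}\circ(1\otimes 1\otimes {\rm ad}_{a,b})=({\rm ad}_{a,b}\otimes 1\otimes 1)\circ\sigma_{13}.$$

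To prove (a)$\Rightarrow$(b), I would assume the $1$-cocycle identity~\eqref{eq:1-cocycle} for $\rho={\rm ad}\otimes 1\otimes 1$, namely
$$\Delta([x_1,x_2,x_3])=({\rm ad}_{x_1,x_2}\otimes 1\otimes 1)\Delta(x_3)+({\rm ad}_{x_2,x_3}\otimes 1\otimes 1)\Delta(x_1)+({\rm ad}_{x_3,x_1}\otimes 1\otimes 1)\Delta(x_2),$$
and apply $\sigma_{12}$ to both sides. On the left, $\sigma_{12}\Delta([x_1,x_2,x_3])=-\Delta([x_1,x_2,x_3])$; on the right, by the intertwining identity and the sign rule, each summand $({\rm ad}_{x_i,x_j}\otimes 1\otimes 1)\Delta(x_k)$ becomes $(1\otimes {\rm ad}_{x_i,x_j}\otimes 1)\sigma_{12}\Delta(x_k)=-(1\otimes {\rm ad}_{x_i,x_j}\otimes 1)\Delta(x_k)$. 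Cancelling the common factor $-1$ yields precisely the $1$-cocycle identity for $\rho=1\otimes {\rm ad}\otimes 1$, i.e.\ condition (b). The implication (b)$\Rightarrow$(c) is the same argument verbatim with $\sigma_{23}$ in place of $\sigma_{12}$, and (c)$\Rightarrow$(a) is the same with $\sigma_{13}$ (alternatively one may simply chain (c)$\Rightarrow$(b)$\Rightarrow$(a) using $\sigma_{23}$ and then $\sigma_{12}$). This closes the cycle and establishes the equivalence of (a), (b) and (c).

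I do not expect any genuine obstacle. The only points requiring care are that skew-symmetry of $\Delta^*$ really forces every $\sigma_{ij}$ to act as $-\id$ on $\im\Delta$, and the precise bookkeeping of how the switching operators commute past the one-leg actions; after that, each implication is a single substitution into Eq.~\eqref{eq:1-cocycle}.
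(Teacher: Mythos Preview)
Your proof is correct. The paper's own argument proceeds by choosing a basis $\{e_i\}$, writing out conditions (a) and (b) as identities among the structure constants $c_{ijk}^l$ of $A$ and $d_{\alpha\beta m}^l$ of $\Delta^*$, and then invoking the skew-symmetry relation $d_{\alpha\beta m}^l=-d_{\alpha m\beta}^l$ to see that the two systems coincide. Your approach is the coordinate-free version of the same idea: the skew-symmetry of $\Delta^*$ is exactly the statement that each $\sigma_{ij}$ acts as $-\id$ on $\im\Delta$, and the intertwining identities you record are what, in coordinates, become the observation that the two systems of equations differ only by a relabelling of indices on $d$. Your presentation has the advantage of making the mechanism transparent and avoiding any choice of basis, while the paper's structure-constant computation ties the result directly to the explicit formulas already established in the proof of Theorem~\ref{thm:3Lieb}.
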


\begin{proof}
We only prove that (a) holds if and only if (b) holds. The proof
of the other cases is similar. By the proof in
Theorem~\ref{thm:3Lieb}, we have
\begin{eqnarray*}
(a)&\Leftrightarrow&\sum_{\alpha,\beta, m}\sum_{l}[-
d_{\alpha\beta m}^lc_{ijk}^l+d_{\alpha\beta
l}^ic_{jkl}^m+d_{\alpha\beta l}^jc_{kil}^m+d_{\alpha\beta l}^k
c_{ijl}^m] e_\alpha\otimes
e_\beta\otimes e_m=0;\\
(b)&\Leftrightarrow &\sum_{\alpha,\beta, m}\sum_{l}[-
d_{\alpha m \beta}^lc_{ijk}^l+d_{\alpha l \beta
}^ic_{jkl}^m+d_{\alpha l \beta }^jc_{kil}^m+d_{\alpha l \beta }^k
c_{ijl}^m] e_\alpha\otimes
e_m\otimes e_\beta=0.
\end{eqnarray*}
The skew-symmetry of $\Delta^*$ means $d_{\alpha\beta m}=-d_{\alpha m\beta}$, which implies that (a) holds if and
only if (b) holds. \end{proof}

As a direct consequence, we obtain

\begin{cor}
A \typeII gives a \typeI.
\end{cor}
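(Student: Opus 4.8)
The plan is to produce, for any given \typeII $(A,\Delta)$, an explicit decomposition $\Delta=\Delta_1+\Delta_2+\Delta_3$ satisfying the three clauses of Definition~\ref{defi:typeI}, letting the preceding Proposition — the equivalence, when $\Delta^*$ is skew-symmetric, of $\Delta$ being a $1$-cocycle for ${\rm ad}\otimes 1\otimes 1$, for $1\otimes{\rm ad}\otimes 1$, and for $1\otimes 1\otimes{\rm ad}$ — carry all the weight.

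First I would observe that, since $(A,\Delta)$ is a \typeII, $\Delta^*$ by definition defines a $3$-Lie algebra structure on $A^*$ and is therefore a skew-symmetric operation, so the hypothesis of the preceding Proposition is in force. Next I would invoke Remark~\ref{rmk:eql}: Eq.~\eqref{eq:b1}, one of the defining conditions of a \typeII, states exactly that $\Delta$ is a $1$-cocycle associated to the representation $(A\otimes A\otimes A,\,1\otimes 1\otimes{\rm ad})$. By the preceding Proposition, $\Delta$ is then simultaneously a $1$-cocycle associated to $(A\otimes A\otimes A,\,{\rm ad}\otimes 1\otimes 1)$, to $(A\otimes A\otimes A,\,1\otimes{\rm ad}\otimes 1)$, and to $(A\otimes A\otimes A,\,1\otimes 1\otimes{\rm ad})$.

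Finally I would set $\Delta_1=\Delta_2=\Delta_3=\tfrac13\Delta$ (or, if one wishes to avoid dividing, $\Delta_1=\Delta$ and $\Delta_2=\Delta_3=0$, using that the zero map is trivially a $1$-cocycle). Because a scalar multiple of a $1$-cocycle is again a $1$-cocycle for the same representation, $\Delta_1,\Delta_2,\Delta_3$ are $1$-cocycles associated respectively to $(A\otimes A\otimes A,\,{\rm ad}\otimes 1\otimes 1)$, $(A\otimes A\otimes A,\,1\otimes{\rm ad}\otimes 1)$ and $(A\otimes A\otimes A,\,1\otimes 1\otimes{\rm ad})$, their sum is $\Delta$, and $\Delta^*$ defines a $3$-Lie algebra structure on $A^*$; thus all requirements of Definition~\ref{defi:typeI} hold and $(A,\Delta)$ is a \typeI. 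I expect no genuine obstacle: the corollary is an immediate consequence of the preceding Proposition together with the latitude in Definition~\ref{defi:typeI} in how $\Delta$ is split, and indeed condition~\eqref{eq:b2} from the definition of a \typeII is not even used in this implication.
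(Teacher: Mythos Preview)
Your proof is correct and follows essentially the same approach as the paper: the paper sets $\Delta_i=k_i\Delta$ for any scalars with $k_1+k_2+k_3=1$ (your choice $k_i=\tfrac13$ being a special case), relying on the preceding Proposition exactly as you do. Your write-up is in fact more explicit about why the argument works, and your observation that condition~\eqref{eq:b2} plays no role in this implication is accurate.
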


\begin{proof}
Let $(A,\Delta)$ be a \typeII. Let $k_1, k_2, k_3$ be
 complex numbers such that $k_1+k_2+k_3=1$. Denote $\Delta_i=k_i\Delta, i=1, 2, 3$. Then by definition
$\Delta=\Delta_1+\Delta_2+\Delta_3$ defines a \typeI.
\end{proof}

\begin{rmk} {\rm From the corollary, it is natural to consider whether there
is a ``coboundary double construction 3-Lie bialgebras". As in
Section 2, for any $r=\sum_ix_i\otimes y_i\in A\otimes A$, set
$$\Delta_1(x)=\sum_{i,j} [x,x_i,x_j]\otimes y_j\otimes
y_i, \Delta_2(x)=\sum_{i,j} y_i\otimes [x,x_i,x_j]\otimes y_j,
\Delta_3(x)=\sum_{i,j} y_j\otimes y_i\otimes [x,x_i,x_j], $$ for
any $x\in A$. If all  $\Delta_i^*:A^*\otimes A^*\otimes A^*\longrightarrow A^*$ are skew-symmetric, it is straightforward to see that
$\Delta_1=\Delta_2=\Delta_3$. Unfortunately, there is no natural condition for $r$ such that $\Delta^*$ defines a
skew-symmetric operation on $A^*$ (in fact, it involves the
concrete structural constants of $A$). Furthermore, if such an $r$ exists, then we have a ``modified" version of
Theorem \ref{thm:Condition on r}. However, an algebraic equation for $r$ remains to be discovered.}
\label{mk:cbdc}
\end{rmk}

We end the paper with some examples to illustrate these points and various other phenomena of \typeII. As a first example, we have

\begin{ex}{\rm
For any 3-Lie algebra $A$, taking $\Delta=0$, then $(A,\Delta)$ is a
\typeII. In this case, the corresponding Manin triple gives a
pseudo-metric 3-Lie algebra $(A\ltimes_{{\rm ad}^*}
A^*,(\cdot,\cdot)_+)$. Dually, for any trivial 3-Lie algebra $A$
(namely whose product is zero), any 3-Lie algebra structure
$\Delta^*$ on the dual space $A^*$ makes $(A,\Delta)$  a \typeII. Such \typeIIs are called {\bf trivial \typeIIs}. }
\end{ex}

In general, it is not easy to determine whether there exists a non-trivial \typeII on a given non-trivial 3-Lie algebra. One reason is that there is certain
inconsistence between the 1-cocycle or Eq.~(\ref{eq:b2}) and
skew-symmetry. Before giving such examples, we recall the following results on the classification of complex 3-Lie algebras of dimensions 3 and 4.

\begin{pro} \mcite{BSZ}
\begin{enumerate}
\item
There is a unique non-trivial 3-dimensional complex 3-Lie algebra. It has a basis $\{e_1,e_2,e_3\}$ with respect to which the non-zero product is given by
$[e_1,e_2,e_3]=e_1.$
\item
Let $A$ be a non-trivial 4-dimensional complex 3-Lie algebra. Then $A$ has a basis $\{e_1,e_2,e_3,e_4\}$ with respect to which the product of the 3-Lie algebra is given by one of the following.
\begin{itemize}
\item[\rm(1)] $[e_1,e_2,e_3]=e_4, \quad [e_2,e_3,e_4]=e_1,\quad
[e_1,e_3,e_4]=e_2,\quad[e_1,e_2,e_4]=e_3$.

\item[(2)] $[e_1,e_2,e_3]=e_1$.

\item[(3)] $[e_2,e_3,e_4]=e_1$.

\item[(4)] $[e_2,e_3,e_4]=e_1, \quad[e_1,e_3,e_4]=e_2$.

\item[(5)] $[e_2,e_3,e_4]=e_2, \quad[e_1,e_3,e_4]=e_1$.

\item[(6)] $[e_2,e_3,e_4]=\alpha e_1+e_2,\;\alpha\ne
0,\quad[e_1,e_3,e_4]=e_2$.

\item[(7)] $[e_2,e_3,e_4]=e_1,\quad [e_1,e_3,e_4]=e_2,\quad[e_1,e_2,e_4]=e_3$.
\end{itemize}
\end{enumerate}
\label{pp:4}
\end{pro}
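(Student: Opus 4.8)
The statement is a classification (quoted here from \cite{BSZ}), so the plan is to reduce it to linear algebra over $\mathbb{C}$ and to carry out a normal-form analysis of the structure constants under the action of $\GL(A)$ on the underlying vector space.

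For part (a) I would first observe that $\wedge^3 A$ is one-dimensional when $\dim A=3$, so the bracket is completely determined by the single vector $v:=[e_1,e_2,e_3]$; moreover every skew-symmetric trilinear map $\wedge^3 A\to A$ satisfies the Fundamental Identity in this dimension. Indeed, writing $[a,b,c]=\det(a\,|\,b\,|\,c)\,v$, the Fundamental Identity \eqref{eq:de1} becomes the scalar identity
$$\det(a,b,c)\,\det(x,y,v)=\det(x,y,a)\,\det(v,b,c)+\det(x,y,b)\,\det(a,v,c)+\det(x,y,c)\,\det(a,b,v),$$
which is linear in $v$ and hence verified by checking it on a basis. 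If $v=0$ the algebra is trivial; if $v\neq 0$, the bilinear form $(a,b)\mapsto\det(v\,|\,a\,|\,b)$ on $A$ is nonzero with kernel $\mathbb{C}v$, so one may pick $a,b$ with $[v,a,b]=v$ and $\{v,a,b\}$ a basis, which gives the normal form $[e_1,e_2,e_3]=e_1$; since the whole structure is then pinned down, this algebra is unique up to isomorphism.

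For part (b) I would fix a nonzero volume form $\varpi\in\wedge^4 A$; contraction with $\varpi$ identifies $\wedge^3 A\cong A^*$, so the bracket $\wedge^3 A\to A$ is encoded by a single linear map $A^*\to A$, i.e.\ a $4\times 4$ matrix $M$ (concretely $[e_j,e_k,e_l]=\pm\sum_m M_{im}e_m$ with $i$ the omitted index). The first technical step is to translate the Fundamental Identity into a system of quadratic relations on the entries of $M$, keeping careful track of the signs introduced by the identification $\wedge^3 A\cong A^*$. A change of basis $P\in\GL(A)$ then acts on $M$ by a congruence-type transformation $M\mapsto\lambda\,PMP^{\top}$, so the task becomes classifying the admissible $M$ up to this action. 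I would organize the cases by the dimension $r$ of the derived ideal $[A,A,A]$, which equals the rank of $M$: $r=4$ gives the simple algebra $(\mathbb{C}^4,\ \text{generalized vector product})$ with $M$ a nonzero scalar matrix, case (1); $r=1$ gives cases (2)--(3), distinguished by whether the generator of $[A,A,A]$ is or is not annihilated by the induced action; $r=2$ gives cases (4)--(6), where the one-parameter family (6) reflects the invariants of the symmetric and skew parts of $M$ under simultaneous congruence (a pencil-of-forms phenomenon); and $r=3$ gives case (7). Finally one verifies pairwise non-isomorphism, using $\dim[A,A,A]$, the derived and lower central series, and the isomorphism type of $\mathrm{span}\{\ad_{x,y}\}$.

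The main obstacle will be the dimension-$4$ analysis: doing the sign bookkeeping needed to turn the Fundamental Identity into manageable equations for $M$, and then executing the congruence normal-form classification — in particular separating cases (4), (5) and (6), which all have a two-dimensional derived ideal, and showing that the parameter $\alpha$ in (6) is a genuine modulus rather than something removable by a further change of basis.
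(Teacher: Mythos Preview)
The paper does not supply a proof of this proposition: it is quoted verbatim from \cite{BSZ} as a known classification result, with no argument given here. So there is no ``paper's own proof'' against which to compare your proposal.

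That said, your outline is a sound and standard plan, and it is essentially the strategy used in the cited source: in dimension $n{+}1$ an $n$-Lie bracket is a map $\wedge^n A\to A$, which a choice of volume form turns into a single element of $A\otimes A$ (your matrix $M$); one then translates the Fundamental Identity into algebraic constraints on $M$, classifies the admissible $M$ up to the induced $\GL(A)$-action, and organizes the case analysis by the rank of $M$ (equivalently, by $\dim[A,A,A]$). Your identification of the delicate step is accurate: the rank-$2$ case is where the work lies, and separating (4), (5), (6) requires more than just the congruence class of a symmetric form --- note for instance that over $\mathbb{C}$ the symmetric rank-$2$ forms arising from (4) and (5) are congruent, yet the algebras are not isomorphic (compare the action of $\ad_{e_3,e_4}$ on the derived ideal), so the $\det(P)$-twist in the $\GL(A)$-action and the precise form of the Fundamental-Identity constraint both matter. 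If you carry out the details, be careful with the sign bookkeeping and do not assume the constraint on $M$ is simply ``$M$ symmetric''; case (6) already shows that the naive version of that heuristic fails.
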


In the following two examples, the \typeIIs have only the zero coproduct, exhibiting the aforementioned inconsistence between
1-cocycle and skew-symmetry. Compare them with Example~\mref{ex:3d}
which gives the \typeIs with non-zero coproduct on the non-trivial
3-dimensional complex 3-Lie algebra.

\begin{ex}{\rm For the unique non-trivial 3-dimensional complex 3-Lie algebra in Proposition~\ref{pp:4}, a linear map
$\Delta: A\rightarrow A\otimes A\otimes A$ satisfies
Eq.~(\ref{eq:b1}) such that $\Delta^*:A^*\otimes A^*\otimes
A^*\rightarrow A^*$ defines a skew-symmetric operation on $A^*$,
then $\Delta=0$.
}
\end{ex}

\begin{ex}{\rm
For classes (2), (5) and (6) in Proposition~\ref{pp:4}, if a linear map $\Delta: A\rightarrow A\otimes A\otimes A$ satisfies
Eq.~(\ref{eq:b1}) and Eq.~(\ref{eq:b2}) such
that $\Delta^*:A^*\otimes A^*\otimes A^*\rightarrow A^*$ defines a
skew-symmetric operation on $A^*$, then $\Delta=0$.}
\end{ex}

The following is an example of non-trivial \typeIIs and an illustration of Remark~\ref{mk:cbdc}.

\begin{ex} {\rm
Consider the first class of the 4-dimensional non-trivial 3-Lie algebras in Proposition~\ref{pp:4}.  Note that $A$ is the (unique) simple complex 3-Lie algebra whose only ideals are the zero ideal
and $A$ itself. Define a linear map $\Delta: A\rightarrow A\otimes
A\otimes A$ by (see Example~\mref{ex:3d} for notations)
$$\Delta(e_1)=e_2\wedge e_3\wedge e_4,\;
\Delta(e_2)=e_1\wedge e_3\wedge e_4,\; \Delta(e_3)=e_1\wedge
e_2\wedge e_4,\;\Delta(e_4)=e_1\wedge e_2\wedge e_3.
$$ Then
$\Delta$ satisfies Eqs.~\eqref{eq:b1} and \eqref{eq:b2}. Moreover,
$\Delta^*:A^*\otimes A^*\otimes A^*\rightarrow A^*$ defines a
$3$-Lie algebra structure on $A^*$ which is isomorphic to $A$. So
$(A,\Delta)$ is a \typeII.

Furthermore, let
$$r=e_1\otimes e_1-e_2\otimes e_2+e_3\otimes
e_3-e_4\otimes e_4=\sum_i(-1)^{i+1}e_i\otimes e_i.$$ Note that $r$
is symmetric. We have
$$\Delta(e_i)=\sum_{k,l} (-1)^{k+l} [e_i,e_k,e_l]\otimes e_l\otimes e_k=\Delta_1(e_i), \;\;1\leq i\leq 4,$$ where $\Delta_1$ is given by Eq.~\eqref{eq:delta123}. However, for the eight items in Theorem \ref{thm:Condition on r}, every item is
not zero, but the sum is zero.
}
\end{ex}

Moreover, this \typeII structure can be induced by an $r\in A\otimes A$. But unfortunately,
such an $r$ does not satisfy the 3-Lie CYBE or even a well-constructed algebraic equation on $A$.

The other classes in Proposition~\ref{pp:4}
also give non-trivial \typeIIs.
All these examples follow from straightforward
computations. For details see~\mcite{DB} which is based on the general framework in this paper.

\smallskip

\noindent
{\bf Acknowledgements.}
This research  is supported by the Natural Science Foundation of China (11471139, 11271202, 11221091, 11425104,  11371178), SRFDP
(20120031110022) and the Natural Science Foundation of Jilin Province (20140520054JH).


\begin{thebibliography}{999}

\bibitem{A1} M. Aguiar,
Infinitesimal Hopf algebras, \emph{Contemporary Mathematics} 267, Amer. Math. Soc., (2000) 1-29.

\bibitem{ALMY} H. Awata, M. Li, D. Minic and T. Yoneya, On the quantization of Nambu brackets,  \emph{J. High Energy Phys.} 2001, no. 2, Paper 13, 17 pp.

\bibitem{cohomology}
J. A. de Azc$\rm\acute{a}$rraga and J. M. Izquierdo,
Cohomology of Filippov algebras and an
analogue of Whitehead's lemma,      \emph{J. Phys. Conf. Ser.} (2009) 175, 012001.

\bibitem{review}
J. A. de Azc$\rm\acute{a}$rraga and J. M. Izquierdo, $n$-ary algebras: a review with applications,
\emph{ J. Phys. A: Math. Theor.} 43 (2010), 293001.

\bibitem{BL1} J. Bagger and N. Lambert, Gauge symmetry
           and supersymmetry of multiple $M2$-branes, {\it Phys. Rev. D} 77 (2008) 065008.

\bibitem{BL2} J. Bagger, N. Lambert, Comments on multiple $M2$-branes,  \emph{J. High Energy Phys.} (2008), no. 2, 105, 15 pp.

\bibitem{Bai-Unifiedapproach}
C.  Bai, A unified algebraic approach to the classical Yang-Baxter equation. {\em J. Phys. A: Math. Theor.} 40 (2007), 11073-11082.

\bibitem{Bai3Liebi}
R. Bai, Y. Cheng, J. Li and W. Meng, 3-Lie bialgebras, {\em Acta Math. Sci. Ser. B Engl. Ed.} {\bf 34} (2014), 513-522.

\bibitem{BaiRGuo}
R. Bai, L. Guo, J. Li and Y. Wu, Rota-Baxter 3-Lie algebras, {\em J. Math. Phys.} {54} (2013), 064504, 14pp.

\bibitem{BSZ}
R. Bai, G. Song and Y. Zhang, On classification of $n$-Lie algebras,
\emph{Front. Math. China} 6 (4) (2011), 581-606.

\bibitem{leftsymm4} D. Burde, Left-symmetric algebras and pre-Lie algebras in geometry and physics. {\em Cent. Eur. J. Math.} 4 (2006), 323-357.

\bibitem{CP}
V. Chari and A. Pressley, {\it A guide to quantum groups}. Cambridge University Press, Cambridge (1994).

\bibitem{M2branes}
S. Cherkis and C. S$\rm\ddot{a}$mann, Multiple M2-branes and
generalized 3-Lie algebras. \emph{Phys. Rev. D} 78 (2008), 066019.

\bibitem{D} V. Drinfeld, Hamiltonian structure on the Lie groups, Lie bialgebras and the geometric sense of the classical Yang-Baxter equations. {\em Soviet Math. Dokl.} 27 (1983), 68-71.

\bibitem{DB}
C. Du and C. Bai, Some examples of \typeIs and \typeIIs in low dimensions, preprint.

\bibitem{rep}
A. S. Dzhumadil$'$daev, Representations of vector product $n$-Lie algebras, \emph{Comm. Alg.}  32 (2004), 3315-3326.

\bibitem{F1} J. M. Figueroa-O$'$Farrill, Lorentzian Lie
$n$-algebras, \it J. Math. Phys. \rm 49 (2008), 113509, 8pp.

\bibitem{Filippov}  V. T. Filippov, $n$-Lie algebras,  {\it Sib. Mat. Zh.} 26 (1985), 126-140.

\bibitem{GG} J. P. Gauntlett and J. B. Gutowshi, Constraining maximally supersymmetric membrane actions,
    \emph{ J. High Energy Phys.} (2008), no. 6, 053, 10 pp.

\bibitem{Gr} G. Gr\"atzer, Universal Algebra, D. Van Nostrand Company, Inc. 1968.

\bibitem{G} A. Gustavsson, Algebraic structures on parallel
M2-branes,  \emph{Nuclear Phys. B}  811  (2009),  no. 1-2, 66-76.

\bibitem{HHM} P. Ho, R. Hou and Y. Matsuo, Lie $3$-algebra and multiple $M_2$-branes,  \emph{J. High Energy Phys.}  (2008),  no. 6, 020, 30 pp.

\bibitem{Lorentzian3Lie}
P. Ho,  Y.  Matsuo and S. Shiba,   Lorentzian Lie (3-)algebra and toroidal compactification of M/string theory, \emph{J. High Energy Phys.} (2009), no. 3, 045, 34 pp.

\bibitem{JR} S.A. Joni, G.C. Rota, Coalgebras and bialgebras in combinatories, \emph{Stud. Appl. Math.} 61 (1979), 93-139.

\bibitem{repKasymov}
S. M. Kasymov, Theory of $n$-Lie algebras. \emph{Algebra i Logika} 26 (1987), no. 3, 277-297. (English
translation: \emph{Algebra and Logic} 26 (1988), 155-166).

\bibitem{Ling} W. Ling, On the structure of $n$-Lie
                  algebras, Dissertation, {\it University-GHS-Siegen, Siegn,} 1993.

\bibitem{Lo} J.-L. Loday, Generalized bialgebras and triples of operads. {\em Ast\'{e}risque} {\bf 320} (2008).

\bibitem{LV} J.-L. Loday and B.~Vallette, Algebraic Operads, Springer, 2012.

\bibitem{MFM1} P. Medeiros, J. M. Figueroa-O$'$Farrill and
 E. M\'endez-Escobar, Metric Lie 3-algebras in Bagger-Lamber theory, \emph{  J. High Energy Phys.} (2008), no. 8, 045, 41 pp.

\bibitem{N} Y. Nambu, Generalized Hamiltonian dynamics, {\it Phys. Rev. D} 7 (1973), 2405-2412.

\bibitem{P} G. Papadopoulos, M2-branes, $3$-Lie algebras and
Plucker relations,  \emph{J. High Energy Phys.}  (2008),  no. 5, 054, 9 pp.

\bibitem{PBG} J. Pei, C. Bai and L. Guo, Splitting of operads and Rota-Baxter operators on operads, {\em Applied Category Structures}, to appear, arXiv:1306.3046.

\bibitem{T} L. Takhtajan, On foundation of the generalized Nambu mechanics,
{\it Comm. Math. Phys.} 160 (1994), 295-315.

\end{thebibliography}
\end{document}